\PassOptionsToPackage{final}{graphicx}
\PassOptionsToPackage{nosumlimits,nonamelimits}{amsmath}
\PassOptionsToPackage{colorlinks,linkcolor={blue},citecolor={blue},urlcolor={red},breaklinks=true,final}{hyperref}
\documentclass[sigconf,screen,
final,9pt,nonacm]{acmart}
\usepackage[T1]{fontenc}
\usepackage{amsmath}
\usepackage{stmaryrd}
\usepackage{tikz-cd}
\usepackage{stmaryrd}[only,leftrightarroweq,llbracket,rrbracket]
\usepackage[capitalise,nameinlink]{cleveref}
\Crefname{thm}{Theorem}{Theorems}
\usepackage{enumitem}
\usepackage{proof}
\usepackage{stackrel}
\usepackage{mathtools}

\usepackage[layout=footnote,marginclue,final]{fixme}
\FXRegisterAuthor{bk}{abk}{BK}	
\FXRegisterAuthor{km}{akm}{KM}	
\FXRegisterAuthor{ls}{als}{LS}	
\FXRegisterAuthor{jf}{ajf}{JF}	
\FXRegisterAuthor{pw}{apw}{PW}	
\FXRegisterAuthor{pn}{apn}{PN}	

\usepackage[appendix=inline, bibliography=common]{apxproof}

\theoremstyle{plain}
\newtheorem{thm}{Theorem}[section]
\newtheorem{lem}[thm]{Lemma}
\newtheorem{propn}[thm]{Proposition}
\newtheorem{cor}[thm]{Corollary}

\theoremstyle{definition}
\newtheorem{defn}[thm]{Definition}
\newtheorem{expl}[thm]{Example}
\newtheorem{rem}[thm]{Remark}

\newtheorem{alg}{Algorithm}

\usepackage[colorinlistoftodos,textsize=tiny,color=orange!70
]{todonotes}

\newcommand{\frel}[2]{#1 \mathbin{\ooalign{$\rightarrow$\cr$\hspace{0.4ex}+$\cr}_{\V}} #2}
\newcommand{\tworel}[2]{#1 \mathbin{\ooalign{$\rightarrow$\cr$\hspace{0.4ex}+$\cr}} #2}
\newcommand{\into}{\hookrightarrow}
\newcommand{\rev}[1]{#1^{\circ}}
\newcommand{\sfun}{\mathcal{S}}
\newcommand{\dfun}{\mathcal{D}}
\newcommand{\pfun}{\mathcal{P}}
\newcommand{\cfun}{\mathcal{C}}
\newcommand{\ALC}{\mathcal{ALC}}
\newcommand{\V}{\mathcal{V}}
\newcommand{\Lang}{\mathcal{L}}
\newcommand{\FLang}{\mathcal{F}}
\newcommand{\Act}{\mathcal{A}}
\newcommand{\Rat}{\mathbb{Q}}
\newcommand{\Sem}[1]{\llbracket #1 \rrbracket}
\newcommand{\FA}{\mathfrak{A}}
\newcommand{\gen}[1]{\mathsf S^{#1}}
\newcommand{\Gen}[1]{\mathsf S[#1]}
\newcommand{\laxtwo}{L}
\newcommand{\laxv}{{\mathbf L}}
\newcommand{\kantv}{{\mathbf K}}
\newcommand{\logdist}{{d^{\mathsf{log}}}}
\newcommand{\disteps}[1]{d_{#1}}
\newcommand{\Op}{{\mathsf{op}}}

\newcommand{\set}{\mathbf{Set}}

\renewcommand{\epsilon}{\varepsilon}
\renewcommand{\phi}{\varphi}

\def\Xint#1{\mathchoice
	{\XXint\displaystyle\textstyle{#1}}
	{\XXint\textstyle\scriptstyle{#1}}
	{\XXint\scriptstyle\scriptscriptstyle{#1}}
	{\XXint\scriptscriptstyle\scriptscriptstyle{#1}}
	\!\int}
\def\XXint#1#2#3{{\setbox0=\hbox{$#1{#2#3}{\int}$}
		\vcenter{\hbox{$#2#3$}}\kern-.5\wd0}}
\def\dashint{\Xint-}

\title[Computing Distinguishing Formulae for Threshold-Based
  Behavioural Distances]{Computing Distinguishing Formulae\\ for Threshold-Based
  Behavioural Distances}
\author{Jonas Forster}
\orcid{0000-0002-5050-2565}

\affiliation{
  \institution{Friedrich-Alexander-Universität Erlangen-Nürnberg}
  \country{Germany}}

\author{Lutz Schr\"{o}der}
  
\orcid{0000-0002-3146-5906}             
\affiliation{
  \institution{Friedrich-Alexander-Universit\"{a}t Erlangen-N\"{u}rnberg}            
  \country{Germany}                    
}

\author{Paul Wild}
\orcid{0000-0001-9796-9675}
\affiliation{
  \institution{Friedrich-Alexander-Universität Erlangen-Nürnberg}
  \country{Germany}}
 
\author{Barbara K{\"o}nig}
\orcid{0000-0002-4193-2889}
\affiliation{
  \institution{Universität Duisburg-Essen}
\country{Germany}}

\author{Pedro Nora}
\orcid{0000-0001-8581-0675}

\affiliation{
  \institution{Universität Duisburg-Essen}
  \country{Germany}}

\begin{document}

\begin{abstract}
  Behavioural distances generally offer more fine-grained means of
  comparing quantitative systems than two-valued behavioural
  equivalences. They often relate to quantitative modalities, which
  generate quantitative modal logics that characterize a given
  behavioural distance in terms of the induced logical distance. We
  develop a unified framework for behavioural distances and logics
  induced by a special type of modalities that lift two-valued
  predicates to quantitative predicates. A typical example is the
  probability operator, which maps a two-valued predicate~$A$ to a
  quantitative predicate on probability distributions assigning to
  each distribution the respective probability
  of~$A$. Correspondingly, the prototypical example of our framework
  is $\epsilon$-bisimulation distance of Markov chains, which has
  recently been shown to coincide with the behavioural distance
  induced by the popular Lévy-Prokhorov distance on
  distributions. Other examples include behavioural distance on metric
  transition systems and Hausdorff behavioural distance on fuzzy
  transition systems. Our main generic results concern the
  polynomial-time extraction of distinguishing formulae in two
  characteristic modal logics: A two-valued logic with a notion of
  satisfaction up to~$\epsilon$, and a quantitative modal logic. These
  results instantiate to new results in many of the mentioned
  examples. Notably, we obtain polynomial-time extraction of
  distinguishing formulae for $\epsilon$-bisimulation distance of
  Markov chains in a quantitative logic featuring a `generally'
  modality used in probabilistic knowledge representation.
\end{abstract}

\maketitle

\section{Introduction}
\label{sec:introduction}

In systems carrying quantitative information, behavioural distances
often provide a more stable and fine-grained means of comparing the
behaviour of processes than two-valued
equivalences~\cite{GiacaloneEA90}. Behavioural distances can be
treated parametrically over the system type (non-deterministic,
weighted, probabilistic, neighbourhood-based etc.) in the framework of
universal coalgebra~\cite{Rutten00}, in which the system type is
encapsulated by the choice of a functor on a suitable base category,
often the category $\set$ of sets and maps. For instance, Markov
chains are coalgebras for the discrete distribution functor on
$\set$. Coalgebraic treatments of behavioural distances can be based
either on liftings of the given functor to the category of metric
spaces~\cite{bbkk:coalgebraic-behavioral-metrics} or on so-called
(quantitative) \emph{lax extensions} of the functor, which extend the
functor to act on quantitative relations~\cite{WildSchroder22}. Both
lax extensions and functor liftings can be induced from a choice of
quantitative modalities, interpreted coalgebraically as quantitative
predicate
liftings~\cite{WildSchroder22,bbkk:coalgebraic-behavioral-metrics}. A
well-known example is the standard Kantorovich distance on probability
distributions, which is induced by the expectation modality; that is,
given a metric on the outcome space, the distance of two probability
distributions is defined as the supremum of the deviations between the
respective expected values, taken over all non-expansive predicates on the
outcome space.

In the present work, we develop a coalgebraic framework for
behavioural distances that are based on fixing threshold values for
allowed deviations in long-term behaviour. One central example is the
behavioural distance on labelled Markov chains induced by
$\epsilon$-bisimilarity~\cite{DesharnaisEA08}: One defines a notion of
$\epsilon$-bisimulation, which, roughly speaking, works like the
two-valued notion of probabilistic bisimulation but allows
probabilities to deviate by up to~$\epsilon$; then, the induced
behavioural distance between states $x,y$ is the infimum over
all~$\epsilon$ such that $x,y$ are $\epsilon$-bisimilar. It was
recently shown that this behavioural distance, originally called
\emph{$\epsilon$-distance}, is induced by the L\'evy-Prokhorov metric
on distributions~\cite{DesharnaisSokolova25}; we will hence refer to
it as \emph{L\'evy-Prokhorov behavioural distance}.  Due to its
favourable statistical robustness properties, the L\'evy-Prokhorov
distance is popular in machine learning tasks such as conformal
prediction~\cite{AolariteiEA25} and corruption
resistance~\cite{BennounaEA23}; distances related to
$\epsilon$-distance have been employed, for instance, in mobile
security~\cite{DiniEA13} and in differential
privacy~\cite{BartheEA12}.

Our coalgebraic treatment of similar distances, which we term
\emph{threshold-based}, starts from the choice of a set of modalities
of particular type: They are induced by predicate liftings that turn
$2$-valued predicates on the base set into $\V$-valued predicates on the
functorial image of this set, where we generally write~$\V$ for the
unit interval $[0,1]$; we refer to such predicate liftings as
\emph{$2$-to-$\V$}. For instance, L\'evy-Prokhorov behavioural
distance is induced from the single $2$-to-$\V$ predicate lifting that
just evaluates probability distributions on $2$-valued
predicates. Beyond this, the framework subsumes many known examples of
coalgebraic behavioural distance, such as Hausdorff behavioural
distances on metric~\cite{afs:linear-branching-metrics} or
fuzzy~\cite{WildSchroder22} transition systems; the most notable
non-example are Kantorovich-type distances on probabilistic
systems.

Throughout, we cover both symmetric distances, i.e.~behavioural
pseudometrics, and asymmetric ones, i.e.~behavioural hemimetrics such
as simulation distance on metric transition
systems~\cite{FahrenbergLegay14}. Correspondingly, we work with a
notion of \emph{$\epsilon$-simulations}, which take on the nature of
bisimulations if the underlying set of modalities is closed under
duals. Our main interest is then in the algorithmic construction of
\emph{distinguishing} formulae in dedicated modal logics,
i.e.~formulae witnessing high behavioural distance of states.  Our
main results on this framework are the following:
\begin{enumerate}[wide]
\item\label{contrib:lax} We show that the behavioural distance induced
  from a notion of $\epsilon$-(bi-)similarity coincides with the one
  induced by a \emph{threshold-based lax extension} induced by the
  given modalities (\Cref{prop:lp-epsilon-sim}). In the probabilistic
  case, this lax extension is the L\'evy-Prokhorov extension, so one
  instance of this result is the above mentioned characterization of
  $\epsilon$-distance as L\'evy-Prokhorov behavioural
  distance~\cite{DesharnaisSokolova25}.
\item From the underlying $2$-to-$\V$ predicate liftings, we construct
  quantitative predicate
  liftings~\cite{SchroderPattinson11,WildSchroder22}, which lift
  $\V$-valued predicates to $\V$-valued predicates, by Sugeno
  integration~\cite{SugenoThesis} (\Cref{sec:quant-log}). We show that
  the lax extension induced by these quantitative \emph{Sugeno
    modalities} coincides with the threshold-based lax extension from
  Item~\eqref{contrib:lax} (\Cref{thm:lp-kantorovich-coalg}). It
  follows from general
  results~\cite{KonigMikaMichalski18,WildSchroder22} that the
  quantitative modal logic of the Sugeno modalities is
  \emph{characteristic} for threshold-based behavioural distance on
  finitely branching systems in the sense that logical distance
  coincides with behavioural distance. In the probabilistic case, the
  Sugeno modality is precisely the \emph{generally} modality that has
  been used in probabilistic knowledge
  representation~\cite{SchroderPattinson11}; thus, one instance of our
  generic result is a recent result showing that the \emph{generally}
  modality induces the L\'evy-Prokhorov extension~\cite{WildEA25}.
\item We introduce a two-valued modal logic equipped with a notion of
  satisfaction up to~$\epsilon$, which we show characterizes
  threshold-based behavioural distance on finitely branching systems
  (\Cref{sec:two-valued-logic}). The probabilistic instance yields
  Desharnais et al.'s characteristic logic for L\'evy-Prokhorov
  behavioural distance~\cite{DesharnaisEA08}, restricted to finitely
  branching systems. All other instances appear to be new; these
  include a two-valued characteristic logic for behavioural distance
  on metric transition systems~\cite{afs:linear-branching-metrics}.
\item For both the two-valued and the quantitative modal logic, we
  provide an algorithm that extracts distinguishing formulae
  witnessing the failure of two given states to be $\epsilon$-similar
  from Spoiler strategies in a codensity-style~\cite{KomoridaEA19}
  Spoiler-Duplicator game. The extracted formulae are of polynomial
  dag size (the tree size of distinguishing formulae is worst-case
  exponential even in standard two-valued Kripkean modal logics,
  e.g.~\cite{FigueiraGorin10}), and under mild assumptions on the
  involved modalities are computable in polynomial time
  (\Cref{thm:extract-two-valued,thm:extract-quant}). This recovers a
  corresponding result for metric transition
  systems~\cite{afs:linear-branching-metrics}. All other instances of
  the generic result appear to be new; notably, we obtain
  polynomial-time extraction of characteristic formulae, both
  two-valued and quantitative, for L\'evy-Prokhorov distance of Markov
  chains.
\end{enumerate}

\subsubsection*{Related Work} As mentioned above, Desharnais et
al.~\cite{DesharnaisEA08} introduce $\epsilon$-distance on labelled
Markov chains and provide an associated two-valued characteristic
modal logic, without however considering size estimates (indeed, the
proof of the corresponding result~\cite[Theorem~3]{DesharnaisEA08} is
based on a continuity argument and begins by fixing an enumeration of
all formulae, so is not likely to yield a tight size estimate). The
fact that the quantitative modal logic of \emph{generally} is
characteristic for L\'evy-Prokhorov behavioural distance follows from
mentioned recent results by Wild et al.~\cite{WildEA25}. Again,
however, our size estimate and polynomial-time extraction algorithm
for distinguishing formulae are new and do not follow from the
previous result, which relies on fairly involved general constructions
that approximate non-expansive properties at every modal
depth~\cite{KonigMikaMichalski18,WildSchroder22}.

Our generic algorithm for the computation of distinguishing formulae
takes inspiration from a corresponding algorithm for the modal logic
of metric transition systems~\cite{afs:linear-branching-metrics},
which as indicated above is at the same time the only previously known
instance of our general result. The computation of distinguishing
formulae for two-valued behavioural equivalences goes back to work on
labelled transition systems~\cite{c:automatically-explaining-bisim};
by now, corresponding algorithms have been designed in coalgebraic
generality~\cite{kms:non-bisimilarity-coalgebraic,WissmannEA22}. In
the setting of behavioural equivalences, the computation of
distinguishing formulae can be performed as an extension to highly
efficient partition refinement
algorithms~\cite{c:automatically-explaining-bisim,WissmannEA22}. This
leads to very stringent quasi-linear time bounds~\cite{WissmannEA22}
that presumably do not generalize to the quantitative case.

As mentioned above, one case not covered by our framework of
threshold-based behavioural distances are Kantorovich-type
distances on Markov chains, for which an algorithm computing
distinguishing formulae has been presented
recently~\cite{rb:explainability-labelled-mc}. The complexity estimate
of this algorithm is incomparable to ours: While our algorithm finds a
formula witnessing distance strictly above a given~$\epsilon$ in
polynomial time, the algorithm in \emph{op.~cit.} computes, for
given~$n$, a formula serving as an exact witness of the distance under
the $n$-th iterate of the functional defining the behavioural distance
as a fixpoint, in polynomial time.

%

\section{Preliminaries}
\label{sec:preliminaries}

We assume basic familiarity with category theory
(e.g.~\cite{AHS90}). We recall basic notions on distance functions and
coalgebras, and fix some notation.

\paragraph*{Distance functions and relations}
We write $\V$ to denote the unit interval
$[0,1] \subseteq \mathbb{R}$, with suprema and infima under the usual
ordering of the reals written as $\vee$ and $\wedge$, and $\oplus$ and
$\ominus$ used to denote truncated addition and subtraction on~$\V$,
respectively (i.e.~$x\oplus y=\min(x+y,1)$, $x\ominus y=\max(x-y,0)$).
We generally work with predicates $A \in 2^X$ where
$2= \{\top, \bot\}$, which we silently identify with subsets
$A \subseteq X$ when convenient.  Let $X,Y$ be sets. For relations
$R\subseteq X \times Y$, we write $R\colon \tworel{X}{Y}$, and we
write $R^\circ \colon \tworel{Y}{X}$ for the converse of a relation
$R \colon \tworel{X}{Y}$.  We then denote the relational composite of
two relations $R\colon \tworel{X}{Y}, S\colon \tworel{Y}{Z}$ by
$S \cdot R \colon \tworel{X}{Z}$. The relational image $R[A] \in 2^Y$
of $A\in 2^X$ under~$R$ is given by
$R[A]=\{y\mid\exists x\in A.\,x\mathrel{R}y\}$.

  A \emph{$\V$-valued relation} $r\colon X\times Y \to \V$ will be denoted by $r\colon \frel{X}{Y}$ and its \emph{converse} by $r^\circ \colon \frel{Y}{X}$, with $r^\circ(y,x) = r(x,y)$ for $x \in X$, $y \in Y$.
  A $\V$-valued relation $d \colon \frel{X}{X}$ is \emph{symmetric} if it coincides with its converse, and
  a \emph{hemimetric} if $d(x,x) = 0$ for all $x \in X$ and 
  $d(x,z) \leq d(x,y) \oplus d(y,z)$ for all $x,y,z \in X$;
  a \emph{pseudometric} is a symmetric hemimetric.
  Similarly to the qualitative case, the composite $s\cdot r \colon \frel{X}{Z}$ of $\V$-valued relations $r \colon \frel{X}{Y}, s\colon \frel{Y}{Z}$ is given by
  \begin{equation*}
    (s\cdot r)(x,z) = \bigwedge_{y\in Y} r(x,y) \oplus s(y,z).
  \end{equation*}

\paragraph*{Universal coalgebra}
  
Given an endofunctor $F\colon \set \to \set$, an \emph{$F$-coalgebra}
is a pair $(X, \xi)$, where $X$ is a set of \emph{states} and
$\xi \colon X \to FX$ is the \emph{transition} map. Coalgebras
generalize the notion of state-based systems: The transition map
defines how individual states can transition between each other, with
the transition dynamic (nondeterministic, probabilistic, etc.) being
specified by the endofunctor~$F$. We say that $(X,\xi)$ is
\emph{finitely branching} if for each $x\in X$, there exists a finite
subset $X_0\subseteq X$ and an element $a\in FX_0$ such that
$Fi(a)=\xi(x)$ where~$i$ denotes the inclusion map
$X_0\hookrightarrow X$. We may identify $FX_0$ with a subset
of~$FX$~\cite{Barr93}, and thus just write $\xi(x)\in FX_0$ in this
case. A functor~$F$ is \emph{finitary} if for every $a\in FX$, there
exists $FX_0$ such that $a\in FX_0$ in this sense. A homomorphism
between two $F$-coalgebras $(X, \xi),(Y, \delta)$ is a function
$h\colon X \to Y$ such that $\delta \circ h = Fh \circ \xi$.  We say
that two states $x\in X, y\in Y$ in coalgebras $(X, \xi), (Y, \delta)$
are behaviourally equivalent if there is a third coalgebra $(Z, \xi)$
and coalgebra homomorphisms $g\colon (X, \xi) \to (Z, \xi)$ and
$h\colon (Y,\delta) \to (Z, \xi)$ such that $g(x)=h(y)$.


\begin{expl}\label{expl:coalg}
  \begin{enumerate}[wide]
    \item The \emph{subdistribution functor} $\sfun\colon \set \to \set$ sends a set $X$  to the set of finitely supported subdistributions $\sfun X = \{\mu \colon X \to \V \mid \mu(x) \not = 0 \text{ for finitely many }x\in X, \sum_{x\in X} \mu(x) \leq 1\}$.
    It acts on functions $f\colon X \to Y$ by forming sums of preimages:
    $\sfun f (\mu)(y) = \sum_{x\in f^{-1}(y)} \mu(x)$. Coalgebras for $\sfun$ are Markov chains that optionally terminate with some probability in each state.
  \item\label{item:lts} Let $\pfun$ denote the (covariant) powerset
    functor, which maps a set~$X$ to its powerset $\pfun X$ and a map
    $f\colon X\to Y$ to the direct image map
    $\pfun f\colon \pfun X\to\pfun Y$ that takes direct images
    ($\pfun f(A)=f[A]$). Moreover, let~$\Act$ be a set of
    \emph{labels}. Then $\pfun(\Act\times {-})$-coalgebras are
    $\Act$-labelled transition systems. We will be particularly
    interested in the case where~$\Act$ is equipped with a metric, in
    which case $\Act$-labelled transition systems are often termed
    \emph{metric transition
      systems}~\cite{afs:linear-branching-metrics,FahrenbergLegay14}.
  \item Let $\dfun$ denote the distribution functor, which is defined
    analogously to the subdistribution functor~$\sfun$, but with the
    requirement that $\sum_{x\in X} \mu(x) = 1$. Coalgebras for the
    functor $\mathcal{D}(\Act\times {-})$ are (generative)
    probabilistic transition systems.
  \item\label{item:fuzzy} The \emph{fuzzy powerset functor} $\pfun_\V$
    sends a set $X$ to the set of functions $A\colon X\to \V$. On a
    function $f\colon X\to Y$, the functor is defined by
    $\pfun_\V f (A)(y) = \bigvee_{x\in f^{-1}(y)}A(x)$. Coalgebras for
    $\pfun_\V$ are fuzzy transition systems, i.e. transition systems
    in which adjacency takes on values in the unit interval.
  \end{enumerate}  
\end{expl}

For our discussions of behavioural distance and coalgebraic modal
logic, the notion of predicate lifting will play a crucial role.  In
the sequel, we fix a functor $F \colon \set \to \set$, and we denote
by $F^\Op \colon \set^\Op \to \set^\Op$ its opposite functor. (Recall
that $\set^\Op$ is the dual category of~$\set$, which has maps
$Y\to X$ as morphisms from~$X$ to~$Y$; then,~$F^\Op$ acts exactly
like~$F$, from which it differs only in its assigned type.) We write
$2^{-}$ for the \emph{contravariant powerset functor}
$\set^\Op\to\set$, which maps a set~$X$ to its powerset, identified
with the set $2^x$ of $2$-valued predicates on~$X$, and a function
$f\colon X\to Y$ to the function $2^f\colon 2^Y\to 2^X$ that takes
preimages ($2^f(B)=f^{-1}[B]$). Similarly, the \emph{contravariant
  $\V$-valued powerset functor} $\V^-$ maps a set~$X$ to the
set~$\V^X$ of $\V$-valued predicates on~$X$, and a map
$f\colon X\to Y$ to the preimage map~$\V^f\colon\V^Y\to\V^X$, given by
$\V^f(g)=g\cdot f$ for $g\colon Y\to\V$.

\begin{defn}
  A $2$-to-$2$ \emph{predicate lifting} for $F$ is a natural transformation of type $\lambda \colon 2^{-} \Rightarrow 2^{F^\Op -}$. A $\V$-to-$\V$ \emph{predicate lifting} is a natural transformation of type $\lambda \colon \V^{-} \Rightarrow \V^{F^\Op -}$.
  We say that a $2$-to-$2$, respectively $\V$-to-$\V$, predicate lifting is monotone if each of its components is monotone w.r.t.\ the pointwise orders induced by $2$, respectively by $\V$.
\end{defn}

\noindent The naturality condition on predicate liftings amounts to
commutation with preimage; e.g.~naturality of a $\V$-to-$\V$ predicate
lifting~$\lambda$ means that $\lambda(g\cdot f)=\lambda(g)\cdot Ff$
for all $g\in\V^Y$ and $f\colon X\to Y$. In (quantitative) coalgebraic
modal logic, the semantics of modalities is defined by attaching a
(quantitative) predicate lifting to every modal operator. Then
formulae of the form $\lambda\phi$ are interpreted in an $F$-coalgebra
$(X, \xi)$ via mappings $\Sem{\phi}_\xi\colon X \to \Omega$ where
$\Omega=2$ or $\Omega = \V$, defined inductively, with the case for
modal operators given by
$\Sem{\lambda\phi}_\xi(x) = \lambda(\Sem\phi_\xi)\circ \xi$.

 In \Cref{sec:kant}, we will use $\V$-to-$\V$ predicate liftings to construct quantitative lax extensions
which, as explained next, induce notions of behavioural distance for coalgebras.

\begin{defn}
  Let $L$ be a mapping of $\V$-valued relations $r\colon \frel{X}{Y}$ to $\V$-valued relations $Lr\colon \frel{FX}{FY}$. Then $L$ is called a  quantitative \emph{relator} of $F$ if it is monotone w.r.t.\ the pointwise order on $\V$-valued relations, i.e., if $r \leq r'$ entails $Lr \leq Lr'$, for all $\V$-valued relations $r,r' \colon \frel{X}{Y}$.
  Furthermore, $L$ is a quantitative \emph{lax extension} of $F$ if it is a quantitative relator that satisfies the following axioms for all $\V$-valued relations $r \colon \frel{X}{Y}$, $s \colon \frel{Y}{Z}$, and every function $f \colon X \to Y$,
  where a function $f$ is treated as a $\V$-valued relation with $f(x,y) = 0$ if $f(x)=y$ and $f(x,y) = 1$ otherwise:
    \begin{enumerate}
    \item $L(s \cdot r) \leq Ls \cdot Lr$,
    \item $Lf \leq Ff$ and $L(f^\circ) \leq (Ff)^\circ$.
    \end{enumerate}
\end{defn}

\noindent Quantitative relators yield notions of quantitative simulations for $F$-coalgebras:
Let $L$ be a quantitative relator, and let $(X,\alpha)$ and $(Y,\beta)$ be $F$-coalgebras.
A $\V$-valued relation $s \colon \frel{X}{Y}$ is an \emph{$L$-simulation} if $s \leq \beta^\circ \cdot L s \cdot f$, i.e.~if $s(x,y) \leq Ls(\alpha(x),\beta(y))$ for all $x \in X$ and $y \in Y$.
As $L$ is monotone, by the Knaster-Tarski fixed point theorem it follows that there is a smallest $L$-simulation between $(X,\alpha)$ and $(Y,\beta)$, wich we call \emph{$L$-behavioural distance} (from $(X,\alpha)$ to $(Y,\beta)$), and often denote by $d_L$.

\section{Threshold-Based Behavioural Distances}\label{sec:dist}

We proceed to introduce the central notion of \emph{threshold-based
  behavioural distances}. These distances generalize a distance on
(labelled) Markov chains that has been termed
\emph{$\epsilon$-distance}~\cite{DesharnaisEA08,DesharnaisSokolova25};
we will see that our general notion in fact subsumes many other
well-known behavioural distances on various system types
(\Cref{expl:L-Lambda}). As indicated above, they are based on first
introducing a notion of $\epsilon$-\mbox{(bi-)}simulation in which
probabilities under successor distributions are allowed to deviate by
at most~$\epsilon$, and then defining the induced (symmetric or
asymmetric) distance between states~$x,y$ as the infimum over
all~$\epsilon$ such that $x,y$ are $\epsilon$-\mbox{(bi-)}similar.

We parametrize the overall framework over a set functor~$F$
encapsulating the system type in a coalgebraic modelling as recalled
in \Cref{sec:preliminaries} and additionally over a choice of a
particular form of monotone modalities:
\begin{defn}
  A \emph{$2$-to-$\V$ predicate lifting} for the functor~$F$ is a
  natural transformation of type $2^{-}\to \V^{F^\Op}$, i.e.~a family~$\lambda$
  of maps
  \begin{equation*}
    \lambda_X\colon 2^X\to\V^{FX}
  \end{equation*}
  (recall that $\V=[0,1]$) indexed over all sets~$X$, subject to the
  \emph{naturality} condition that
  $\lambda_Y(A)(Ff(a))=\lambda_X(f^{-1}[A])(a)$ for $a\in FX$,
  $f\colon X\to Y$, and $A\in 2^Y$. We say that~$\lambda$ is
  \emph{monotone} if whenever $A\subseteq B$ for $A,B\in 2^X$, then
  $\lambda_X(A)(a)\le \lambda_X(B)(a)$ for all $a\in FX$. The
  \emph{dual}~$\overline\lambda$ of a $2$-to-$\V$ predicate
  lifting~$\lambda$ is given by
  \begin{equation*}
    \overline\lambda_X(A)(a)=1-\lambda(X\setminus A)(a)
  \end{equation*}
  for $A\in 2^X$, $a\in FX$. Given a set~$\Lambda$ of $2$-to-$\V$
  predicate liftings, we write~$\overline\Lambda$ for the closure
  of~$\Lambda$ under duals,
  i.e.~$\overline\Lambda=\Lambda\cup\{\overline\lambda\mid\lambda\in\Lambda\}$.
\end{defn}
\noindent \emph{For the remainder of the paper, we fix a functor~$F$
  and a set~$\Lambda$ of monotone $2$-to-$\V$ predicate liftings}. The
predicate liftings will serve as modalities in modal logics to be
introduced in \Cref{sec:two-valued-logic,sec:quant-log}; we will
generally use the terms \emph{predicate lifting} and \emph{modality}
interchangeably.  We restrict to unary liftings purely in the interest
of readability; the treatment of higher arities, i.e.~predicate
liftings of type $(2^{(-)})^n\to 2^{F^\Op}$, requires no more than
additional indexing. One basic example of a monotone $2$-to-$\V$
predicate lifting is \emph{probability}: Recall from
\Cref{sec:preliminaries} that coalgebras for the subdistribution
functor~$\sfun$ are Markov chains. We have a $2$-to-$\V$ predicate
lifting~$P$ for~$\sfun$ given by
\begin{equation*}
  P(A)(\mu)=\mu(A)\qquad\text{for $A\in 2^X$, $\mu\in\sfun X$}.
\end{equation*}
From the given choice of $2$-to-$\V$ predicate liftings, we obtain a
notion of (asymmetric) behavioural distance as follows:

\begin{defn}[$\epsilon$-(Bi-)simulation, threshold-based behavioural
  distance]\label{def:eps-sim}
  Let $\xi\colon X\to FX$, $\zeta\colon Y\to FY$ be $F$-coalgebras,
  and let $\epsilon\in\V$. A (two-valued) relation
  $R\colon\tworel{X}{Y}$ is an \emph{$\epsilon$-$\Lambda$-simulation}
  (from $(X,\xi)$ to $(Y,\zeta)$) if whenever $x\mathrel{R}y$, then
  \begin{equation*}
    \lambda(R[A])(\zeta(y))\ge\lambda(A)(\xi(x))-\epsilon\quad\text{for all $A\in 2^X$, $\lambda\in\Lambda$}.
  \end{equation*}
  A relation~$R$ as above is an
  \emph{$\epsilon$-$\Lambda$-bisimulation} if both~$R$ and its
  converse~$\rev{R}$ are $\epsilon$-$\Lambda$-simulations. If states
  $x\in X$, $y\in Y$ are related by some
  $\epsilon$-$\Lambda$-(bi-)simulation, then we say that~$y$
  \emph{$\epsilon$-$\Lambda$-simulates}~$x$ or that~$x$ and~$y$ are
  \emph{$\epsilon$-$\Lambda$-bisimilar}, and write
  $x\preceq_{\epsilon,\Lambda} y$ or $x\approx_{\epsilon,\Lambda}y$,
  respectively, defining $\preceq_{\epsilon,\Lambda}$ and
  $\approx_{\epsilon,\Lambda}$ as relations between~$X$ and~$Y$. We
  define the induced \emph{(threshold-based) $\Lambda$-behavioural
    distance} $\disteps{\Lambda}$ between states $x\in X$, $y\in Y$ by
  \begin{equation*}
    \disteps{\Lambda}(x,y)=\bigwedge\{\epsilon\mid x\preceq_{\epsilon,\Lambda} y\}.
  \end{equation*}
\end{defn}
\begin{rem}\label{rem:duals}
  Like in the two-valued case~\cite{GorinSchrode13} and in the purely
  quantitative setting~\cite{WildSchroder22}, it is easy to see that
  if~$\Lambda$ is closed under duals
  (i.e.~$\Lambda=\overline\Lambda$), then every
  $\epsilon$-$\Lambda$-simulation~$R$ is in fact an
  $\epsilon$-$\Lambda$-bisimulation. It follows that in this case, the
  threshold-based distance~$\disteps{\Lambda}$ is symmetric, hence a
  pseudometric. We thus focus almost entirely on
  $\epsilon$-$\Lambda$-simulations in the main line of the technical
  development, as we obtain corresponding results on bisimulations by
  just closing under duals.
\end{rem}
\begin{expl}\label{expl:epsilon-distance}
  The notion of $\epsilon$-$\Lambda$-simulation generalizes
  Desharnais et al.'s $\epsilon$-simulations on labelled Markov
  chains~\cite{DesharnaisEA08}. Specifically, a labelled Markov chain
  is a coalgebra for the functor~$F=\sfun^\Act$ \lsnote{$\sfun$ in
    prelims} where~$\Act$ is a set of labels and~$\sfun$ is the
  subdistribution functor (\Cref{sec:preliminaries}). Indeed, for each
  state~$c$ in a coalgebra $\xi\colon X\to FX$ and each action
  $\alpha\in\Act$, one has a subdistribution $\xi(c)(\alpha)$ over
  possible $\alpha$-successors of~$c$. In extension of the basic
  example of the $2$-to-$\V$ predicate lifting~$P$ for~$\sfun$, we
  have a set $\Lambda=\{P_\alpha\mid \alpha\in\Act\}$ of $2$-to-$\V$
  predicate liftings $P_{\alpha}$ for~$F$ given by
  \begin{equation*}
    P_\alpha(A)(f)=f(\alpha)(A)\qquad \text{for $A\in 2^X$ and $f\in FX=(\sfun X)^\Act$}.
  \end{equation*}
  Then, $\epsilon$-$\Lambda$-simulations on labelled Markov chains are
  precisely $\epsilon$-simulations in the sense of Desharnais et
  al. (see \Cref{rem:duals} for notions of
  $\epsilon$-bisimilarity). Explicitly, a relation
  $R\colon\tworel{X}{Y}$ is an $\epsilon$-$\Lambda$-simulation between
  labelled Markov chains $\xi\colon X\to FX$, $\zeta\colon Y\to FY$ is
  an $\epsilon$-$\Lambda$-simulation if whenever $x\mathrel{R} y$,
  then
  \begin{equation*}
    \xi(y)(a)(A)\ge\zeta(x)(a)(R[A])-\epsilon
  \end{equation*}
  for all $a\in\Act$, $A\in 2^X$.  For simplicity, we usually restrict
  to $|\Act|=1$, thus returning to the basic example of (unlabelled)
  Markov chains and~$\Lambda=\{P\}$ discussed earlier, a case that we
  continue to use as a running example; we defer the presentation of
  further instances to \Cref{expl:L-Lambda}. The dual~$\overline P$
  of~$P$ is given by $\overline P(A)(\mu)=1-\mu(X\setminus A)$; by
  \Cref{rem:duals}, $\epsilon$-$\overline\Lambda$-simulations are
  precisely $\epsilon$-bisimulations in the sense of Desharnais et al.
\end{expl}

\noindent We note some basic properties of
$\epsilon$-$\Lambda$-similarity:
\begin{lem}\label{lem:eps-sim}
  Let $\xi\colon X\to FX$, $\zeta\colon Y\to FY$ be
  $F$-coalgebras.
  \begin{enumerate}
  \item\label{item:gfp} The relation $\preceq_{\epsilon,\Lambda}$ is
    the greatest $\epsilon$-$\Lambda$-simulation from~$(X,\xi)$
    to~$(Y,\zeta)$.
  \item\label{item:eps-hemi} Threshold-quantitative
    $\Lambda$-behavioural distance is a (class-sized) hemimetric.
  \item\label{item:eps-monotone} Let $\delta\ge\epsilon\ge 0$, $x\in X$,
    $y\in Y$. If $x\preceq_{\epsilon,\Lambda}y$, then
    $x\preceq_{\delta,\Lambda}y$.
  \item\label{item:dist-sim} Let $x\in X$, $y\in Y$ such that
    $\disteps{\Lambda}(x,y)<\epsilon$. Then
    $x\preceq_{\epsilon,\Lambda}y$.
  \item\label{item:fb-sim} Suppose that~$(Y,\zeta)$ is finitely branching, and let
    $\epsilon\ge 0$, $x\in X$, $y\in X$. If
    $x\preceq_{\epsilon',\Lambda} y$ for every~$\epsilon'>\epsilon$,
    then $x\preceq_{\epsilon,\Lambda} y$. In particular,
    $\disteps{\Lambda}(x,y)\le\epsilon$ iff
    $x\preceq_{\epsilon,\Lambda}y$.
  \end{enumerate}
\end{lem}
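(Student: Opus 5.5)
Items~\ref{item:gfp}--\ref{item:dist-sim} are routine, so I would settle them first. For~\ref{item:gfp}, note that $\epsilon$-$\Lambda$-simulations are closed under arbitrary unions: if $R=\bigcup_i R_i$ and $x\mathrel{R_i}y$, then $R_i[A]\subseteq R[A]$. Monotonicity of~$\lambda$ then gives $\lambda(R[A])(\zeta(y))\ge\lambda(R_i[A])(\zeta(y))\ge\lambda(A)(\xi(x))-\epsilon$. Hence $\preceq_{\epsilon,\Lambda}$, which is by definition the union of all $\epsilon$-$\Lambda$-simulations, is itself one and is the greatest. Item~\ref{item:eps-monotone} is immediate, since any $\epsilon$-simulation is a $\delta$-simulation for $\delta\ge\epsilon$. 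Item~\ref{item:dist-sim} follows from the definition of the infimum: there is some $\epsilon'<\epsilon$ with $x\preceq_{\epsilon',\Lambda}y$, and we apply~\ref{item:eps-monotone}.

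For~\ref{item:eps-hemi}, $d_\Lambda(x,x)=0$ because the diagonal is a $0$-simulation, as $\Delta[A]=A$. For the triangle inequality, let $R$ be an $\epsilon$-simulation from $X$ to $Y$ and $S$ a $\delta$-simulation from $Y$ to $Z$. Then $S\cdot R$ is an $(\epsilon\oplus\delta)$-simulation, since $(S\cdot R)[A]=S[R[A]]$ and so
\[\lambda(S[R[A]])(\eta(z))\ge\lambda(R[A])(\zeta(y))-\delta\ge\lambda(A)(\xi(x))-\epsilon-\delta,\]
where the case $\epsilon+\delta\ge1$ is trivial because values lie in~$\V$. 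Taking infima gives $d_\Lambda(x,z)\le d_\Lambda(x,y)\oplus d_\Lambda(y,z)$.

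The substantial part is~\ref{item:fb-sim}. I would define $R=\bigcap_{\epsilon'>\epsilon}\preceq_{\epsilon',\Lambda}$ and show that $R$ is an $\epsilon$-simulation. Let $x\mathrel{R}y$, $A\subseteq X$ and $\lambda\in\Lambda$, and choose a finite $Y_0\subseteq Y$ with $\zeta(y)\in FY_0$. By naturality applied to the inclusion $i\colon Y_0\hookrightarrow Y$, $\lambda(B)(\zeta(y))$ depends only on $B\cap Y_0$. The sets $\preceq_{\epsilon',\Lambda}[A]\cap Y_0$ decrease as $\epsilon'$ decreases, by~\ref{item:eps-monotone}, and they range over the finitely many subsets of $Y_0$. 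Hence there is some $\epsilon_0>\epsilon$ such that for all $\epsilon'\in(\epsilon,\epsilon_0]$ these sets equal a fixed set $C$.

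The main obstacle is showing that this stabilised $C$ coincides with $R[A]\cap Y_0$. Let $y'\in C$. For each $\epsilon'\in(\epsilon,\epsilon_0]$ there is some $x'\in A$ with $x'\preceq_{\epsilon',\Lambda}y'$, but the witness $x'$ may depend on $\epsilon'$, and $A$ may be infinite. To handle this I would apply~\ref{item:fb-sim} inductively to the smaller quantities, or, more cleanly, replace $R[A]$ by $\bigcap_{\epsilon'}\preceq_{\epsilon'}[A]$ in the argument. Concretely, I would check that the relation $R'$ defined by $x\mathrel{R'}y$ iff $x\preceq_{\epsilon'}y$ for all $\epsilon'>\epsilon$ satisfies the simulation inequality. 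The approach is to compare against $\preceq_{\epsilon'}[A]$, using $\lambda(\preceq_{\epsilon'}[A]\cap Y_0)\ge\lambda(A)-\epsilon'$, and to let $\epsilon'\to\epsilon$ with the left-hand side constant, equal to $\lambda(C)(\zeta(y))$. This yields $\lambda(C)(\zeta(y))\ge\lambda(A)(\xi(x))-\epsilon$.

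It remains to get $C\subseteq R[A]$. If the witness issue really blocks this, I would instead take $A$ to be a singleton-indexed decomposition: shrink $A$ to those $x'$ whose sets $\{y'\in Y_0: x'\preceq_{\epsilon'}y'\}$ stabilise. This is possible since each $x'$ yields one of finitely many stabilised patterns, and $\preceq_{\epsilon'}[A]$ is the union over $x'\in A$ of these patterns. Monotonicity in $\epsilon'$ of each pattern then gives $C=\bigcup_{x'\in A}\bigcap_{\epsilon'}\{y'\in Y_0: x'\preceq_{\epsilon'}y'\}$ after possibly passing to a further stabilisation point, and the latter set is contained in $R[A]$.

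Finally, the \enquote{in particular} clause follows by combining~\ref{item:dist-sim}, which gives $x\preceq_{\epsilon',\Lambda}y$ for all $\epsilon'>\epsilon$ when $d_\Lambda(x,y)\le\epsilon$, with the first part of~\ref{item:fb-sim}; the converse direction is immediate from the definition of $d_\Lambda$.
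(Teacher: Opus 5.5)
Your items 1--4, and your reduction of item~5 to the inclusion $C\subseteq R[A]$, are correct and essentially match the paper; the reduction includes the bound $\lambda(C)(\zeta(y))\ge\lambda(A)(\xi(x))-\epsilon$ obtained by stabilisation. The gap is in how you then try to prove that inclusion.

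Your identity $C=\bigcup_{x'\in A}\bigcap_{\epsilon'>\epsilon}\{y'\in Y_0\mid x'\preceq_{\epsilon',\Lambda}y'\}$ needs the individual stabilisation points $\epsilon_{x'}>\epsilon$ to be bounded away from $\epsilon$ uniformly in $x'\in A$. Having only finitely many possible patterns says nothing about these thresholds. For infinite $A$ their infimum can be $\epsilon$ itself, so no ``further stabilisation point'' exists. Restricting $A$ to elements sharing a threshold does not help either: it can lower $\lambda(A)(\xi(x))$, because $\xi(x)$ need not be finitely branching.

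In fact the step cannot be repaired under the stated hypothesis. Take the capacity functor: $FX$ is the set of monotone maps $2^X\to\V$, with $Ff(m)(B)=m(f^{-1}[B])$, and $\Lambda=\{\lambda\}$ with $\lambda(A)(m)=m(A)$.
\begin{itemize}
\item Let $X=\{x\}\cup\{x_n\mid n\ge 1\}$. Put $\xi(x)(A)=1$ if $A$ contains all but finitely many $x_n$, and $0$ otherwise. Let $\xi(x_n)$ be constantly $1/n$.
\item Let $Y=\{y,y_1\}$, which is finite and hence finitely branching. Put $\zeta(y)(B)=1$ if $y_1\in B$, and $0$ otherwise. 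Let $\zeta(y_1)$ be constantly $0$.
\end{itemize}
Then $x_n\preceq_{\epsilon',\Lambda}y_1$ iff $\epsilon'\ge 1/n$. Moreover, $\{(x,y)\}\cup\{(x_n,y_1)\mid n\ge 1/\epsilon'\}$ is an $\epsilon'$-$\Lambda$-simulation for every $\epsilon'>0$, so $\disteps{\Lambda}(x,y)=0$. But no $x_n$ is $0$-simulated by any state of $Y$ (test with $A=\emptyset$). Hence $A=\{x_n\mid n\ge 1\}$ shows $x\not\preceq_{0,\Lambda}y$. In your notation, $C=\{y_1\}$ while $R[A]=\emptyset$.

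What is missing is that $(X,\xi)$ must be finitely branching as well. This holds wherever the paper uses item~5, e.g.\ in \Cref{cor:game-correctness-finite}. With this hypothesis, let $x\mathrel{R}y$ and $\xi(x)\in FX_0$ with $X_0$ finite. Naturality gives $\lambda(A)(\xi(x))=\lambda(A\cap X_0)(\xi(x))$, and $R[A\cap X_0]\subseteq R[A]$, so you may assume $A$ is finite. Then for each $y'\in C$, the sets $\{x'\in A\mid x'\preceq_{\epsilon',\Lambda}y'\}$ are finite, nonempty and decrease as $\epsilon'\downarrow\epsilon$. So they share an element, which gives $y'\in R[A]$.

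Your diagnosis was right. The paper's own proof makes exactly the quantifier swap you flagged: it asserts $Y_0\cap R[A]=Y_0\cap{\preceq_{\epsilon',\Lambda}}[A]$ at the stabilisation point without justification. It is your proposed fix that fails; the correct fix is to strengthen the hypothesis, not the argument.
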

\begin{proof}
  \begin{enumerate}[wide]
  \item By the Knaster-Tarski fixpoint theorem.
  \item The identity relation is an $\epsilon$-simulation for
    every~$\epsilon$, and the composite of an
    $\epsilon$-$\Lambda$-simulation and a
    $\delta$-$\Lambda$-simulation is an $\epsilon+\delta$-simulation.
  \item The relation $\preceq_{\epsilon,\Lambda}$ is a
    $\delta$-$\Lambda$-simulation.
  \item 
    By hypothesis, $x\preceq_{\epsilon',\Lambda}$ for some
    $\epsilon'<\epsilon$, and hence $x\preceq_{\epsilon,\Lambda}$ by
    Claim~\ref{item:dist-sim}.
  \item The second part of the claim is immediate from the first by
    Claim~\ref{item:dist-sim} and the definition
    of~$\disteps{\Lambda}$. For the first part of the claim, let $R$
    be the intersection of the relations $\preceq_{\epsilon',\Lambda}$
    for all $\epsilon'>\epsilon$. We show that~$R$ is an
    $\epsilon$-$\Lambda$-simulation. So let $A\in 2^X$,
    $\lambda\in\Lambda$; we have to show that
    $\lambda(R[A])(\zeta(y))\ge\lambda(A)(\xi(x))-\epsilon$.  Since
    $(Y,\zeta)$ is finitely branching, there exists a finite subset
    $Y_0\subseteq Y$ such that $\zeta(y)\in FY_0$, and by naturality
    of~$\lambda$, $\lambda(C)(\zeta(y))=\lambda( Y_0\cap C)(\zeta(y))$
    for every $C\in 2^Y$. By Claim~\ref{item:eps-monotone}, the finite
    sets $Y_0\cap{\preceq_{\epsilon',\Lambda}}[A]$ shrink as
    $\epsilon'>\epsilon$ tends towards~$\epsilon$, and thus become
    stationary at some point, say at $\epsilon_0>\epsilon$. Thus,
    $Y_0\cap R[A]=Y_0\cap{\preceq_{\epsilon',\Lambda}}[A]$ for
    every~$\epsilon'\in[\epsilon,\epsilon_0)$. Hence, by
    Claim~\ref{item:gfp}, we have
    \begin{align*}
      \lambda(R[A])(\zeta(y)) & =\lambda(Y_0\cap R[A])(\zeta(y))\\
                              & =\lambda(Y_0\cap{\preceq_{\epsilon',\Lambda}}[A])(\zeta(y))\\
                              & =\lambda({\preceq_{\epsilon',\Lambda}}[A])(\zeta(y)) \\
                              & \ge \lambda(A)(\xi(x))-\epsilon'
    \end{align*}
    for every $\epsilon'\in[\epsilon,\epsilon_0)$, which implies the
    claim. \qedhere
  \end{enumerate}
\end{proof}
\begin{rem}\label{rem:non-finitary}
  Without finite branching, \Cref{lem:eps-sim}.\ref{item:fb-sim} does not hold in general.
  Consider the functor $FX = \pfun([0,1]\times X)$ and the predicate lifting $\lambda_X(A)(B) = \sup \{ t \mid (t,x) \in B, x\in A \}$, and put $\Lambda=\{\lambda\}$.
  We define $F$-coalgebras $\xi\colon X\to FX$ and $\zeta\colon Y\to FY$,
  where $X = \{x\}$, $\xi(x) = \{(1,x)\}$,
  $Y = \{y\} \cup \{y_n \mid n \ge 1\}$,
  $\zeta(y) = \{(1,y_n) \mid n \ge 1\}$ and $\zeta(y_n) = \{(1-\frac{1}{n},y_n)\}$ for every $n \ge 1$.

  As $X$ is a singleton, we only need to consider the predicate $A = \{x\}$ to check claims of the form $x \preceq_{\epsilon,\Lambda} y'$.
  In particular, we have $x \preceq_{\epsilon,\Lambda} y_n \iff \frac{1}{n} \le \epsilon$.
  This implies that $x {\,\,\not\!\!{\preceq}}_{0,\Lambda} y$, as $\lambda({\preceq_{0,\Lambda}}[A])(\zeta(y)) = \lambda(\emptyset)(\zeta(y)) = 0 \not\ge 1 = \lambda(A)(\xi(x)) - 0$.
  We do however have ${\preceq_{\epsilon,\Lambda}}[A]\neq\emptyset$ for every $\epsilon > 0$, so that $x \preceq_{\epsilon,\Lambda} y$, and hence $d_\Lambda(x,y) = 0$.
\end{rem}

We capture the notion of $\epsilon$-$\Lambda$-similarity more
concisely using (two-valued) \emph{relators} over~$F$, which we just
understand as assignments~$L$ mapping relations $R\colon\tworel{X}{Y}$
to relations $LR\colon\tworel{FX}{FY}$, subject only to monotonicity
($R\subseteq S$ implies $LR\subseteq LS$), following usage in earlier
work~\cite{Thijs96,NoraEA25} (and noting that the term is sometimes
understood more
strictly~\cite{BackhouseBruinEtAl91,Levy11}). 
In this terminology, we have relators $\laxtwo_{\epsilon,\Lambda}$
over~$F$ for all $\epsilon\ge 0$, given by
\begin{equation*}
  a\mathrel{\laxtwo_{\epsilon,\Lambda} R}b\iff \forall \lambda\in\Lambda,A\subseteq X.\,\lambda(R[A])(b)\ge\lambda(A)(a)-\epsilon
\end{equation*}
for $R\subseteq X\times Y$, $a\in FX$, $b\in FY$. Then given
$F$-coalgebras $\xi\colon X\to FX$, $\zeta\colon Y\to FY$, a relation
$R\colon \tworel{X}{Y}$ is an $\epsilon$-$\Lambda$-simulation iff~$R$
is an \emph{$\laxtwo_{\epsilon,\Lambda}$-simulation},
i.e.~$\xi(x)\mathrel{\laxtwo_{\epsilon,\Lambda} R}\zeta(y)$ for all
$(x,y)\in R$.

We then define a quantitative lax extension $\laxv_\Lambda$ of~$F$
(\Cref{sec:preliminaries}) by
\begin{equation}\label{eq:laxv}
  \laxv_\Lambda r(a,b)  =\bigwedge\{\epsilon\in\V\mid a\mathrel{\laxtwo_{\epsilon,\Lambda} r_\epsilon} b\}
\end{equation}
for $r\colon\frel{X}{Y}$, $a\in FX$, $b\in FY$, where
\begin{equation*}
  r_\epsilon=\{(x,y)\in X\times Y\mid
  r(x,y)\le\epsilon\}
\end{equation*}
We defer the proof that $\laxv_\Lambda$ is in fact a lax extension --
we show later that~$\laxv_\Lambda$ is induced by a set of quantitative
modalities via the so-called Kantorovich construction
(\Cref{thm:lp-kantorovich-coalg}), which by general results always
yields a lax extension~\cite{WildSchroder22}.

We frequently consider cases where~$\Lambda$ consists only of a single
predicate lifting~$\lambda$; in this case, we write
$L_{\epsilon,\lambda}$ and~$L_\lambda$ in place of
$L_{\epsilon,\{\lambda\}}$ and $L_{\{\lambda\}}$, respectively. We
note that given $a\in FX$, $b\in FY$, and $r\colon\frel{X}{Y}$, the
set
$\{\epsilon\mid a\mathrel{\laxtwo_{\epsilon,\Lambda} r_\epsilon} b\}$
is upwards closed because both~$\laxtwo_{\epsilon,\Lambda}$ and
$r_\epsilon$ depend monotonously on~$\epsilon$.

\begin{rem}\label{rem:2-to-2}
  Every $2$-to-$2$ predicate lifting $\lambda\colon 2^{-} \to 2^{F^\Op-}$
  induces a $2$-to-$\V$ predicate lifting by postcomposition with the
  natural transformation $2^{F^\Op-}\into\V^{F^\Op-}$ defined by postcomposition
  with the inclusion of $2$ into $\V$; we say that a $2$-to-$\V$ predicate
  lifting~$\lambda$ is \emph{two-valued} if it is induced in this way,
  and we then also use $\lambda$ as a $2$-to-$2$ predicate lifting. If
  all predicate liftings in~$\Lambda$ are two-valued, then the
  relators $\laxtwo_{\epsilon,\Lambda}$ all coincide for $\epsilon<1$
  (indeed, they coincide with the two-valued lax extension induced by
  the corresponding $2$-to-$2$ predicate
  liftings~\cite{MartiVenema15}). Nevertheless,~$\laxv_\Lambda$ is
  typically properly quantitative, and in fact often coincides with
  familiar quantitative lax extensions. For instance, in case
  $F=\pfun$ and $\Lambda=\{\Diamond\}$
  where~$\Diamond\colon 2^{-}\to 2^{\pfun^\Op -}$ is the standard $2$-valued
  diamond modality given by
  $\Diamond_X(A)=\{S\in\pfun X\mid A\cap S\neq\emptyset\}$ for
  $A\in 2^X$, $\laxv_\Lambda$ is precisely the one-sided Hausdorff lax
  extension given by
  $\laxv_\Lambda r(S,T)=\bigvee_{x\in S}\bigwedge_{y\in T}r(x,y)$ for
  $S\in\pfun X$, $T\in\pfun Y$, and $r\colon\frel{X}{Y}$. By
  \Cref{rem:duals}, the closure~$\overline\Lambda$ of~$\Lambda$ under
  duals, $\overline\Lambda=\{\Diamond,\Box\}$, induces the symmetric
  Hausdorff lax extension, which, when applied to a metric on~$X$,
  yields the Hausdorff pseudometric on~$\pfun X$, given by
  $L_{\overline\Lambda}r(S,T)=\laxv_\Lambda r(S,T)\vee
  \laxv_\Lambda\rev r(T,S)=\bigvee_{x\in S}\bigwedge_{y\in
    T}r(x,y)\vee\bigvee_{y\in T}\bigwedge_{x\in S}r(x,y)$ for data as
  above.
\end{rem}
\begin{expl}\label{expl:L-Lambda}
  We describe a number of basic examples of threshold-based distances
  by specifying the relevant sets of $2$-to-$\V$ predicate
  liftings. Except in the case of the L\'evy-Prokhorov lifting, we
  defer more explicit descriptions of the induced threshold-based
  distances to \Cref{sec:kant}, where we will show that
  $\laxv_\Lambda$ is equivalently induced by the Sugeno modalities
  for~$\Lambda$. Explicit descriptions for most examples will then
  follow by identification of the Sugeno modalities with known
  quantitative modalities.
  \begin{enumerate}[wide]
  \item\label{item:L-Lambda-lp} Continuing
    \Cref{expl:epsilon-distance}, i.e.~taking~$F$ to be the
    subdistribution functor~$\sfun$ and $\Lambda=\{P\}$, we describe
    $\laxv_P$ as a one-sided form of the L\'evy-Prokhorov lifting, i.e.\
    we have
    \begin{equation*}
      \laxv_Pr(\mu,\nu)=\bigwedge\{\epsilon\mid\forall A\in 2^X.\, \nu(r_\epsilon[A])\ge\mu(A)-\epsilon\}
    \end{equation*}
    for $\mu\in\dfun X$, $\nu\in\sfun Y$, and
    $r\colon\frel{X}{Y}$. Correspondingly,  $\laxv_{\overline\Lambda}$
    is the usual two-sided L\'evy-Prokhorov lifting, i.e.
    \begin{multline*}
      \laxv_{\overline\Lambda}r(\mu,\nu)=\bigwedge\{\epsilon\mid\forall A\in 2^X.\, \nu(r_\epsilon[A])\ge\mu(A)-\epsilon\land\\
      \forall B\in 2^Y.\mu((\rev{r})_\epsilon[B])\ge\nu(B)-\epsilon\}.
    \end{multline*}
    We note here that over probability distributions, i.e.\
    subdistributions with total weight~$1$,~$P$ and~$\overline P$
    coincide, so that already~$\laxv_P$ defines the two-sided
    L\'evy-Prokhorov lifting.
  \item\label{item:L-Lambda-pts} Various further examples arise in the probabilistic setting. For
    instance, coalgebras for the functor~$F$ given by
    $FX=\dfun(\Act\times X)$ are generative probabilistic transition
    systems. We have monotone predicate liftings
    $\Diamond_a\colon 2^X\to\V^{FX}$ given by
    \begin{equation*}
      \Diamond_a(A)(\mu)=\mu\{(a,x)\mid x\in A\}.
    \end{equation*}
    This induces obvious variants of the one-sided L\'evy-Prokhorov
    lifting. For instance, for $\Lambda=\{\Diamond_a\mid a\in\Act\}$,
    $\mu\in FX$, $\mu\in FY$, and $r\colon\frel{X}{Y}$, we have
    $\laxv_\Lambda r(\mu,\nu)=\bigwedge\{\epsilon\mid\forall
    a\in\Act,A\in 2^X.\,\nu(\{a\}\times
    r_\epsilon[A])\ge\mu(\{a\}\times A)-\epsilon\}$.
  \item\label{item:L-Lambda-fts} \emph{Fuzzy transition systems:}
    Recall from \Cref{expl:coalg}.\ref{item:fuzzy} that coalgebras for
    the fuzzy powerset functor $\pfun_\V$ are fuzzy transition
    systems. Here, we have a natural $2$-to-$\V$ predicate lifting
    \begin{equation*}
      \Diamond(A)(g)=\bigvee_{x\in A}g(x).
    \end{equation*}
  \item\label{item:L-Lambda-mts} \emph{Metric transition systems:}
    Recall from \Cref{expl:coalg}.\ref{item:lts} that given a metric
    space~$\Act$ of labels, coalgebras for the
    functor$F=\pfun(\Act\times X)$ are metric transition systems. In
    this case, we have $2$-to-$\V$ predicate liftings
    \begin{equation*}
      \Diamond_a(A)(S)=\bigvee_{x\in A,(b,x)\in S}1-d(a,b)
    \end{equation*}
    saying roughly `there is a transition with label close to~$a$ into
    a state in~$A$'. The use of $1-d(a,b)$ is owed to the usual
    discrepancy between~$0$ representing complete equality as a
    distance value, but~$1$ representing complete truth.
  \item\label{item:L-Lambda-convex}
    The \emph{convex powerset functor} is the subfunctor $\cfun$ of $\pfun\dfun$ that maps a set $X$ to the set $\cfun X$ of (non-empty) convex subsets of $\dfun X$.
    We consider the $2$-to-$\V$ modality
    \begin{equation*}
      \Diamond(A)(V)=\sup_{\mu\in V}\mu(A).
    \end{equation*}
  \end{enumerate}
\end{expl}
\noindent We show next that the relator\/ $\laxv_\Lambda$ captures
$\epsilon$-$\Lambda$-similarity:
\begin{thm}\label{prop:lp-epsilon-sim}
  Let $(X,\xi)$ and $(Y,\zeta)$ be $F$-coalgebras. For states
  $x\in X$, $y\in Y$, we have
  \begin{equation*}
    \disteps{\Lambda}(x,y)=d_{\laxv_\Lambda}(x,y).
  \end{equation*}
\end{thm}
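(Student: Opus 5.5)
We prove the two inequalities $d_{\laxv_\Lambda}\le\disteps{\Lambda}$ and $\disteps{\Lambda}\le d_{\laxv_\Lambda}$ separately, using that $d_{\laxv_\Lambda}$ is the least $\laxv_\Lambda$-simulation. Recall that an $\laxv_\Lambda$-simulation is a $\V$-valued relation $s$ with $s(x,y)\le\laxv_\Lambda s(\xi(x),\zeta(y))$. Since $\laxv_\Lambda$ is monotone in $r$ (because $r\le r'$ gives $r'_\epsilon\subseteq r_\epsilon$, and $\laxtwo_{\epsilon,\Lambda}$ is monotone), the Knaster--Tarski description of $d_{\laxv_\Lambda}$ applies.

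\textbf{Step 1: $d_{\laxv_\Lambda}\le\disteps{\Lambda}$.} Here we show that $s:=\disteps{\Lambda}$ is an $\laxv_\Lambda$-simulation; minimality of $d_{\laxv_\Lambda}$ then gives the inequality. Fix $x,y$ and let $\epsilon>\laxv_\Lambda s(\xi(x),\zeta(y))$; we must show $s(x,y)\le\epsilon$. By upward closure of the defining set in~\eqref{eq:laxv}, we have $\xi(x)\mathrel{\laxtwo_{\epsilon,\Lambda}s_\epsilon}\zeta(y)$. We would like $R:=s_\epsilon$ to be an $\epsilon$-$\Lambda$-simulation. Without finite branching, however, $s_\epsilon$ may strictly contain ${\preceq_{\epsilon,\Lambda}}$ (cf.\ \Cref{rem:non-finitary}). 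To avoid this, we pass to $\epsilon'>\epsilon$. By \Cref{lem:eps-sim}.\ref{item:dist-sim} we have $s_\epsilon\subseteq{\preceq_{\epsilon',\Lambda}}$. Define the relation
\[
R':=\{(x,y)\}\cup{\preceq_{\epsilon',\Lambda}}.
\]
For the pair $(x,y)$ we get, by monotonicity of $\lambda$ and of $\laxtwo$,
\[
\lambda(R'[A])(\zeta(y))\ge\lambda(s_\epsilon[A])(\zeta(y))\ge\lambda(A)(\xi(x))-\epsilon\ge\lambda(A)(\xi(x))-\epsilon'.
\]
All other pairs in $R'$ are handled by \Cref{lem:eps-sim}.\ref{item:gfp}, again using monotonicity since $R'\supseteq{\preceq_{\epsilon',\Lambda}}$. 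Thus $R'$ is an $\epsilon'$-$\Lambda$-simulation containing $(x,y)$, so $x\preceq_{\epsilon',\Lambda}y$ and $s(x,y)\le\epsilon'$ for every $\epsilon'>\epsilon$. Hence $s(x,y)\le\epsilon$.

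\textbf{Step 2: $\disteps{\Lambda}\le d_{\laxv_\Lambda}$.} Put $d:=d_{\laxv_\Lambda}$, fix $x,y$, and take $\epsilon>d(x,y)$. It suffices to show $x\preceq_{\epsilon,\Lambda}y$. The natural candidate is $R:=d_\epsilon$, which contains $(x,y)$, but $d_\epsilon$ again has a strictness problem. Using $d\le\laxv_\Lambda d$, for $(x',y')\in d_\epsilon$ we only know $\laxv_\Lambda d(\xi(x'),\zeta(y'))\le\epsilon$. This yields $\xi(x')\mathrel{\laxtwo_{\delta,\Lambda}d_\delta}\zeta(y')$ for every $\delta>\epsilon$, but not for $\delta=\epsilon$. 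We therefore use a strict-threshold relation instead: let $R:=\{(x',y')\mid d(x',y')<\epsilon\}$. For $(x',y')\in R$ choose $\delta$ with $d(x',y')<\delta<\epsilon$. Since $d(x',y')\le\laxv_\Lambda d(\xi(x'),\zeta(y'))$, this does not directly give $\laxv_\Lambda d(\xi(x'),\zeta(y'))<\epsilon$. Instead, we exploit that $d$ is a fixpoint, $d=\laxv_\Lambda d\circ(\xi\times\zeta)$, which also follows from Knaster--Tarski. Then $\laxv_\Lambda d(\xi(x'),\zeta(y'))=d(x',y')<\delta$, so $\xi(x')\mathrel{\laxtwo_{\delta,\Lambda}d_\delta}\zeta(y')$. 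Since $d_\delta\subseteq R$ and $\delta<\epsilon$, monotonicity gives $\xi(x')\mathrel{\laxtwo_{\epsilon,\Lambda}R}\zeta(y')$. So $R$ is an $\epsilon$-$\Lambda$-simulation containing $(x,y)$, and $\disteps{\Lambda}(x,y)\le\epsilon$.

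\textbf{Main obstacle.} The main obstacle is the mismatch between strict and non-strict thresholds, since \Cref{lem:eps-sim}.\ref{item:fb-sim} is unavailable without finite branching. Step 2 resolves it through the fixpoint equation together with the strict cut $\{d<\epsilon\}$. Step 1 resolves it through the enlargement $R'$ at $\epsilon'>\epsilon$. In Step 1 one should check that adjoining $(x,y)$ to ${\preceq_{\epsilon',\Lambda}}$ keeps the other pairs valid; this holds because each $\lambda$ is monotone.
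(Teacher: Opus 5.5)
\textbf{Step 1 proves the wrong inequality.} Write $\Phi(s)(x,y)=\laxv_\Lambda s(\xi(x),\zeta(y))$.

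\emph{Which inequality defines a simulation.} The distance $d_{\laxv_\Lambda}$ is the least fixpoint of the monotone map $\Phi$, i.e.\ the least \emph{pre}fixpoint. So an $\laxv_\Lambda$-simulation has to be read as $\laxv_\Lambda s(\xi(x),\zeta(y))\le s(x,y)$, and this is the condition the paper's proof uses in both directions. The inequality you recalled, $s\le\Phi(s)$, is how the preliminaries literally phrase it. But the zero relation satisfies it, so under that reading the ``smallest simulation'' would be $0$, which is in general not a fixpoint. That contradicts what you invoke in Step 2, so your two steps rely on incompatible readings.

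\emph{Where Step 1 fails.} Under the correct reading, you fix $\epsilon>\Phi(\disteps{\Lambda})(x,y)$ and derive $\disteps{\Lambda}(x,y)\le\epsilon$. This shows $\disteps{\Lambda}\le\Phi(\disteps{\Lambda})$, i.e.\ that $\disteps{\Lambda}$ is a \emph{post}fixpoint. That does not give $d_{\laxv_\Lambda}\le\disteps{\Lambda}$, because postfixpoints (such as $0$) can lie strictly below the least fixpoint. Your enlargement argument with $R'$ is valid, but it proves the converse of what minimality requires.

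\textbf{How to fix it.} You need $\Phi(\disteps{\Lambda})\le\disteps{\Lambda}$, and in this direction the strict/non-strict problem you worried about does not arise. Suppose $x\preceq_{\epsilon,\Lambda}y$. Then:
\begin{itemize}
\item $\xi(x)\mathrel{\laxtwo_{\epsilon,\Lambda}{\preceq_{\epsilon,\Lambda}}}\zeta(y)$ by \Cref{lem:eps-sim}.\ref{item:gfp};
\item ${\preceq_{\epsilon,\Lambda}}\subseteq(\disteps{\Lambda})_\epsilon$ holds trivially;
\item so monotonicity of $\laxtwo_{\epsilon,\Lambda}$ gives $\Phi(\disteps{\Lambda})(x,y)\le\epsilon$.
\end{itemize}
Taking the infimum over all such $\epsilon$ gives the prefixpoint property, hence $d_{\laxv_\Lambda}\le\disteps{\Lambda}$. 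The paper argues equivalently. It turns an $\epsilon$-$\Lambda$-simulation $R$ into the $\V$-valued relation that is $\epsilon$ on $R$ and $1$ elsewhere, whose $\epsilon$-cut is $R$, and checks that this relation is an $\laxv_\Lambda$-simulation.

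\textbf{Step 2.} Step 2 is correct, since it only uses $\Phi(d)\le d$. (The sentence that derives an upper bound on $\laxv_\Lambda d$ from $d\le\laxv_\Lambda d$ is a non sequitur, but you then rely on the fixpoint equation instead.) Your strict cut $\{d<\epsilon\}$ is actually more careful than the paper's use of $r_{\epsilon'}$. There the paper asserts $r(x',y')<\epsilon'$, although only $r(x',y')\le\epsilon'$ is available.
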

\noindent(Recall here that $\disteps{\Lambda}$ is the distance induced
by the notion of $\epsilon$-$\Lambda$-similarity as per
\Cref{def:eps-sim}, while $d_{\laxv_\Lambda}$ is the distance induced
by the quantitative lax extension $\laxv_\Lambda$ as recalled in
\Cref{sec:preliminaries}.)
\begin{proof}
  `$\le$': Let $r\colon\frel{X}{Y}$ be an $\laxv_\Lambda$-simulation,
  put $\epsilon=r(x,y)$, and let $\epsilon'>\epsilon$. It suffices to
  show that $x\preceq_{\epsilon',\Lambda}y$. To this end, we show that
  $r_{\epsilon'}$ is an $\epsilon'$-$\Lambda$-simulation. So let
  $x'\mathrel{r_{\epsilon'}} y'$, $A\in 2^X$, $\lambda\in\Lambda$; we
  have to show that
  \begin{equation}
    \label{eq:goal-eps-prime}
    \lambda(r_{\epsilon'}[A])(\zeta(y'))\ge\lambda(A)(\xi(x'))-\epsilon'.
  \end{equation}
  Since $r$ is an $\laxv_\Lambda$-simulation, we have
  \begin{equation*}
    \laxv_\Lambda r(\xi(x'),\zeta(y'))\le r(x',y') <\epsilon'.
  \end{equation*}
  Thus, there exists $\epsilon''<\epsilon'$ such that
  $\xi(x')\mathrel{L_{\epsilon'',\Lambda}r_{\epsilon''}}\zeta(y')$. Hence,
  $\lambda(r_{\epsilon''}[A])(\zeta(y'))
  \ge\lambda(A)(\xi(x'))-\epsilon''$. Since
  $r_{\epsilon''}[A]\subseteq r_{\epsilon'}[A]$, this implies our
  goal~\eqref{eq:goal-eps-prime}.

  `$\ge$': Let $R\colon\tworel{X}{Y}$ be an
  $\epsilon$-$\Lambda$-simulation. Define
  $r\colon\frel{X}{Y}$ by
  \begin{equation*}
    r(x,y)=
    \begin{cases}
      \epsilon & x\mathrel{R}y\\
      1 & \text{otherwise.}
    \end{cases}
  \end{equation*}
  It suffices to show that $r$ is an
  $\laxv_\Lambda$-simulation, i.e.\
  $\laxv_\Lambda r(\xi(x),\zeta(y))\le
  r(x,y)$. So suppose that $x\mathrel{R} y$
  (otherwise, there is nothing to show); we have to show that
  $\laxv_\Lambda r(\xi(x),\zeta(y))\le
  \epsilon$. By definition of~$\laxv_\Lambda$, this follows once we
  show that
  $(\xi(x),\zeta(y))\in \laxtwo_{\epsilon,\Lambda}
  r_\epsilon$. But this is immediate from the
  fact that~$R$ is an $\epsilon$-$\Lambda$-simulation,
  since~\mbox{$r_\epsilon=R$}.
\end{proof}

\begin{expl}
  By \Cref{expl:L-Lambda}.\ref{item:L-Lambda-lp}, one instance of
  \Cref{prop:lp-epsilon-sim} yields the characterization of
  $\epsilon$-distance on labelled Markov chains~\cite{DesharnaisEA08}
  via fixpoints of the L\'evy-Prokhorov lifting established recently
  by Desharnais and Sokolova~\cite{DesharnaisSokolova25}; indeed we
  obtain this both for the two-sided (symmetric) and for the one-sided
  (asymmetric) variant, respectively applying
  \Cref{prop:lp-epsilon-sim} to $\{P,\overline P\}$ or to
  $\{P\}$. Correspondingly, we occasionally refer to this behavioural
  distance as \emph{L\'evy-Prokhorov behavioural distance}, adding the
  quantification \emph{one-sided} or \emph{two-sided} when needed.
\end{expl}


\section{Threshold-Quantitative Codensity Games}
\noindent With a view to using Spoiler strategies in the computation
of distinguishing formulae, we next introduce a Spoiler-Duplicator
game for $\epsilon$-similarity.
\begin{defn}\label{def:codensity-game}
  Let $\Lambda$ be a set of $2$-to-$\V$ predicate liftings, and let
  $\xi\colon X\to FX$, $\zeta\colon Y\to FY$ be $F$-coalgebras.  The
  \emph{codensity game up to~$\epsilon$} on~$\xi$ and~$\zeta$, played
  by Spoiler (S) and Duplicator (D), is given by the following table
  detailing positions, ownership\bknote{owner = player? LS: I wouldn't really know what the player of a position is; what I mean by `owner' is the player whose turn it is in the position. As far as I recall, this is standard}, and moves:
  \begin{center}
    \renewcommand{\arraystretch}{1.2}
    \scalebox{0.93}{
    \begin{tabular}{|l|c|l|}
      \hline
      Position & Owner & Moves\\\hline
      $(x,y)\in X\times Y$ & S & $\{(A,B)\in 2^X\times 2^Y\mid\exists\lambda
                                 \in\Lambda.$\\ && $\quad\lambda(B)(\zeta(y))<\lambda(A)(\xi(x))-\epsilon\}$\\\hline
      $(A,B)\in 2^X\times 2^Y$ & D
                       & $\{(x,y)\in X\times Y\mid x\in A, y\notin B\}$\\\hline
     \end{tabular}
     }
  \end{center}
  As usual, any player who cannot move on their turn loses. Infinite
  plays are won by~D.
\end{defn}
\noindent Intuitively, we understand a position of the form $(x,y)$
as~D claiming that~$x$ and~$y$ are $\epsilon$-$\Lambda$-similar,
i.e.~$x\preceq_{\epsilon,\Lambda}y$. Spoiler can challenge such a
claim by exhibiting predicates~$A$ and~$B$ such that
$\lambda(B)(\zeta(y))<\lambda(A)(\xi(x))-\epsilon$, which refutes D's
claim provided that
${\preceq_{\epsilon,\Lambda}^\epsilon}[A]\subseteq B$ (here,
${\preceq_{\epsilon,\Lambda}^\epsilon}[A]\subseteq Y$ is the set of
states that $\epsilon$-$\Lambda$-simulate some state in~$A$). Thus,~D
needs to challenge this inclusion by playing $(x',y')$ where $x'\in A$
and $y'\notin B$, thereby claiming that
$x'\preceq_{\epsilon,\Lambda} y'$, i.e.~that
$y'\in {\preceq_{\epsilon,\Lambda}}[A]$.

\begin{thm}
  Let $\xi\colon X\to FX$, $\zeta\colon Y\to FY$ be $F$-coalgebras.
  Duplicator wins the position $(x,y)$ in the codensity game up
  to~$\epsilon$ on~$\xi$ and~$\zeta$ iff $y$
  $\epsilon$-$\Lambda$-simulates~$x$.
\end{thm}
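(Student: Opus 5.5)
We prove the two directions separately, in each case relating Duplicator's winning region to the greatest $\epsilon$-$\Lambda$-simulation ${\preceq_{\epsilon,\Lambda}}$ from \Cref{lem:eps-sim}.\ref{item:gfp}.

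\emph{`If'.} Assume $x\preceq_{\epsilon,\Lambda}y$. Duplicator's strategy is to keep the play inside the relation $R={\preceq_{\epsilon,\Lambda}}$. Take any position $(x',y')\in R$ and any Spoiler move $(A,B)$ with $\lambda(B)(\zeta(y'))<\lambda(A)(\xi(x'))-\epsilon$ for some $\lambda\in\Lambda$. Since $R$ is an $\epsilon$-$\Lambda$-simulation, $\lambda(R[A])(\zeta(y'))\ge\lambda(A)(\xi(x'))-\epsilon>\lambda(B)(\zeta(y'))$. By monotonicity of $\lambda$ this gives $R[A]\not\subseteq B$. Hence there are $x''\in A$ and $y''\notin B$ with $x''\mathrel{R}y''$, and Duplicator moves to $(x'',y'')$. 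Duplicator can therefore always answer and is never stuck. Every play either ends with Spoiler unable to move or is infinite, and Duplicator wins in both cases.

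\emph{`Only if'.} Let $W\subseteq X\times Y$ be the set of positions $(x,y)$ from which Duplicator wins. We show that $W$ is an $\epsilon$-$\Lambda$-simulation; then $W\subseteq{\preceq_{\epsilon,\Lambda}}$ by \Cref{lem:eps-sim}.\ref{item:gfp}. Let $(x,y)\in W$, $A\in 2^X$ and $\lambda\in\Lambda$, and suppose for contradiction that $\lambda(W[A])(\zeta(y))<\lambda(A)(\xi(x))-\epsilon$. Then $(A,W[A])$ is a legal Spoiler move at $(x,y)$. Any Duplicator reply $(x',y')$ has $x'\in A$ and $y'\notin W[A]$, so $(x',y')\notin W$. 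We now need that every position not in $W$ is won by Spoiler, i.e.\ that the game is determined. It is: the game has a reachability objective for Spoiler and a safety objective for Duplicator, so determinacy holds, e.g.\ by Gale--Stewart, or directly by defining Spoiler's winning region as a least fixpoint (attractor) and Duplicator's winning region as its complement. With determinacy, Spoiler has a winning move at $(x,y)$ via $(A,W[A])$, contradicting $(x,y)\in W$.

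The main point requiring care is this determinacy and positionality argument for a game with an infinite (class-sized) position space and infinite branching. To keep it self-contained, I would define $W$ directly as the greatest fixpoint of the monotone operator
\[
\Phi(S)=\{(x,y)\mid \forall\text{ moves }(A,B)\text{ at }(x,y).\ \exists\,x'\in A,\,y'\notin B.\ (x',y')\in S\}.
\]
Two facts then remain to check. First, Duplicator wins from every position in $\mathrm{gfp}\,\Phi$ by staying inside it. Second, from every position outside $\mathrm{gfp}\,\Phi$, Spoiler wins, using the ordinal-indexed approximants of $\Phi$ from above as a rank that strictly decreases along a winning strategy; this is exactly the attractor construction.

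Finally, unfolding the definition of $\Phi$ shows that $\mathrm{gfp}\,\Phi$ is an $\epsilon$-$\Lambda$-simulation, which in fact gives both directions at once. A relation $S$ satisfies $S\subseteq\Phi(S)$ precisely when every Spoiler move at a pair in $S$ can be answered within $S$. The argument above, taking $B=S[A]$ in one direction and using monotonicity in the other, shows this is equivalent to $S$ being an $\epsilon$-$\Lambda$-simulation. Hence $\mathrm{gfp}\,\Phi={\preceq_{\epsilon,\Lambda}}$.
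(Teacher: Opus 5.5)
Your proof is correct and follows the paper's route. For `if' you keep the play inside an $\epsilon$-$\Lambda$-simulation, and for `only if' you show that Duplicator's winning set $W$ is an $\epsilon$-$\Lambda$-simulation by letting Spoiler play $(A,W[A])$. The determinacy (or greatest-fixpoint/attractor) detour is not needed, though. A winning Duplicator strategy from $(x,y)$ must answer $(A,W[A])$ with some $(x',y')$ from which its residual strategy still wins, so $(x',y')\in W$, and together with $x'\in A$ this forces $y'\in W[A]$, contradicting $y'\notin W[A]$. This is what the paper's terse ``D has no reply'' amounts to.
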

\begin{proof}
  \emph{`If':} We show that given an
  $\epsilon$-$\Lambda$-simulation~$R$ between~$X$ and~$Y$, Duplicator
  wins all positions in~$R$ by enforcing the invariant $(x,y)\in R$
  for S-positions $(x,y)$. To see that the invariant can be enforced,
  suppose that at $(x,y)\in R$,~S plays $(A,B)$; then we
  have~$\lambda\in\Lambda$ such that
  $\lambda(B)(\zeta(y))<\lambda(A)(\xi(x))-\epsilon$. Since~$R$ is an
  $\epsilon$-$\Lambda$-simulation, it follows that
  $R[A]\not\subseteq B$, so there exists $(x,y)\in R$ such that
  $x\in A$ and $y\notin B$; that is,~D can enforce the invariant by
  playing $(x,y)$. The same argument shows that the invariant
  guarantees that~D can always reply to moves of~S, so~D wins.

  \emph{`Only if':} We show that the set~$R$ of S-positions won by~D
  is an $\epsilon$-$\Lambda$-simulation. Indeed, let $(x,y)\in R$,
  $\lambda\in\Lambda$, and $A\in 2^X$. We have to show that
  $\lambda(R[A])(\zeta(y))\ge \lambda(A)(\xi(x))-\epsilon$. Assume
  the contrary. Then S can play $(A,R[A])$ at $(x,y)$. To this move,~D
  has no reply, contradicting the assumption that D wins $(x,y)$.
\end{proof}
\noindent In connection with \Cref{lem:eps-sim}.\ref{item:fb-sim}, we
obtain
\begin{cor}[Game correctness]\label{cor:game-correctness-finite}
  Let~$\Lambda$ be a set of\/ $2$-to-$\V$ predicate liftings, and let
  $\xi\colon X\to FX$, $\zeta\colon Y\to FY$ be finitely branching
  $F$-coalgebras.  Duplicator wins the position $(x,y)$ in the
  codensity game up to~$\epsilon$ on~$\xi$ and~$\zeta$ iff
  $\disteps{\Lambda}(x,y)\le\epsilon$.
\end{cor}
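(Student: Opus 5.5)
The corollary follows by chaining the preceding theorem with \Cref{lem:eps-sim}.\ref{item:fb-sim}. The preceding theorem holds for arbitrary coalgebras. It states that Duplicator wins the position $(x,y)$ in the codensity game up to~$\epsilon$ if and only if $x\preceq_{\epsilon,\Lambda}y$. So it remains to show that, under finite branching, $x\preceq_{\epsilon,\Lambda}y$ holds if and only if $\disteps{\Lambda}(x,y)\le\epsilon$.

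For the direction from similarity to distance, finite branching is not needed. If $x\preceq_{\epsilon,\Lambda}y$, then $\epsilon$ belongs to the set $\{\epsilon'\mid x\preceq_{\epsilon',\Lambda}y\}$. Since $\disteps{\Lambda}(x,y)$ is the infimum of this set (\Cref{def:eps-sim}), we get $\disteps{\Lambda}(x,y)\le\epsilon$.

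For the converse, suppose $\disteps{\Lambda}(x,y)\le\epsilon$. Then for every $\epsilon'>\epsilon$ we have $\disteps{\Lambda}(x,y)<\epsilon'$, and \Cref{lem:eps-sim}.\ref{item:dist-sim} gives $x\preceq_{\epsilon',\Lambda}y$. Because $(Y,\zeta)$ is finitely branching, the first part of \Cref{lem:eps-sim}.\ref{item:fb-sim} lets us pass to the limit and conclude $x\preceq_{\epsilon,\Lambda}y$. This is exactly the ``in particular'' clause of that lemma.

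Here only $(Y,\zeta)$ needs to be finitely branching; the hypothesis on $(X,\xi)$ is not used. The single substantive step is the limit argument. It was already carried out in \Cref{lem:eps-sim}.\ref{item:fb-sim}, where the finite sets $Y_0\cap{\preceq_{\epsilon',\Lambda}}[A]$ become stationary as $\epsilon'$ decreases to $\epsilon$. \Cref{rem:non-finitary} shows that this step genuinely fails without finite branching. Beyond citing these results, no further obstacle arises.
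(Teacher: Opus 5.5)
Your proposal is correct and matches the paper's argument: the corollary comes from combining the preceding game theorem (Duplicator wins $(x,y)$ iff $x\preceq_{\epsilon,\Lambda}y$, for arbitrary coalgebras) with the ``in particular'' clause of \Cref{lem:eps-sim}.\ref{item:fb-sim}, which you unfold correctly. You are also right that only finite branching of $(Y,\zeta)$ is actually used.
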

\begin{rem}
  We expect that the codensity game up to $\epsilon$ can be cast as an
  instance of the very general notion of codensity games on
  fibrations~\cite{KomoridaEA19}, in this case the fibration of binary
  relations over the base category $\set^2$ of pairs of sets (that is,
  the fibre over a pair $(X,Y)$ of sets consists of the relations of
  type $R\colon\tworel{X}{Y}$). Desharnais et
  al.~\cite{DesharnaisEA08} present a game for the specific case of
  L\'evy-Prokhorov distance, which however is quite different in
  spirit from the codensity game up to~$\epsilon$.
\end{rem}

We will later extract distinguishing formulae from winning strategies
for Spoiler. We are thus interested in the efficient computation of
Spoiler strategies. Since the codensity game is a safety game,
Spoilers winning region is a least fixed
point, and thus can be computed by Kleene iteration.
We have the following generic time estimate for this computation:
\begin{defn}\label{def:poly-solvable}
  A set~$\Lambda$ of $2$-to-$\V$ predicate liftings is
  \emph{polynomial-time solvable} if given $\epsilon\ge 0$, functor
  elements $a\in FX$, $b\in FY$, and a relation
  $S\colon\tworel{X}{Y}$, one can decide in polynomial time (in the
  total size of the input, made up of $a$,~$b$,~$S$, and~$\epsilon$,
  the latter represented in binary) whether there exist $A\in 2^X$,
  $B\in 2^Y$, $\lambda\in\Lambda$ such that
  $\lambda(B)(b)<\lambda(A)(a)-\epsilon$ and
  $A\times(Y\setminus B)\subseteq S$, and moreover in this case one
  can compute these data in polynomial time.
\end{defn}
\noindent In the examples below, we typically establish polynomial-time solvability by forming the complement relation $R = (X\times Y)\setminus S$, and taking $B$ to be $R[A]$.
This ensures the condition $A\times(Y\setminus B) \subseteq S$, so that only the inequality needs to be checked.
\begin{thm}[Strategy computation]\label{thm:strat}
  Let $\xi\colon X\to FX$, $\zeta\colon Y\to FY$ be finite coalgebras,
  and let $\epsilon\ge 0$. If~$\Lambda$ is polynomial-time solvable,
  then both the winning region of Spoiler in the codensity game up
  to~$\epsilon$ on~$\xi$ and~$\zeta$ and a history-free winning
  strategy on the winning region are computable in polynomial time.
\end{thm}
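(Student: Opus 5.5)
We compute Duplicator's winning region as a greatest fixpoint by iterated refinement of the full relation, recording Spoiler's winning moves as states drop out. Put $R_0=X\times Y$. Given $R_i$, for each pair $(x,y)\in R_i$ we call the polynomial-time solvability procedure of \Cref{def:poly-solvable} on the input $a=\xi(x)$, $b=\zeta(y)$, $\epsilon$, and $S=(X\times Y)\setminus R_i$. If it returns $\lambda,A,B$ with $\lambda(B)(\zeta(y))<\lambda(A)(\xi(x))-\epsilon$ and $A\times(Y\setminus B)\subseteq S$, we remove $(x,y)$, store $\sigma(x,y)=(A,B)$, and record the index $i$ as the rank of $(x,y)$. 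This yields $R_{i+1}$. We stop when $R_{i+1}=R_i$. Since every non-final round removes at least one pair, there are at most $|X|\cdot|Y|$ rounds, each consisting of at most $|X|\cdot|Y|$ oracle calls. The inputs to these calls have size polynomial in the size of the coalgebras and of $\epsilon$, so the total running time is polynomial.

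For correctness, we first show that every removed pair is won by Spoiler using $\sigma$. We argue by induction on rank. Suppose $(x,y)$ has rank $i$, and Spoiler plays $\sigma(x,y)=(A,B)$; this is a legal move by the inequality. Any reply of Duplicator is some $(x',y')$ with $x'\in A$ and $y'\notin B$. Such a pair lies in $A\times(Y\setminus B)\subseteq S=(X\times Y)\setminus R_i$, so it was removed earlier and has rank $<i$. By induction Spoiler wins from $(x',y')$. If Duplicator has no reply, Spoiler wins immediately. Since ranks strictly decrease along the play, every play from $(x,y)$ is finite and ends with Duplicator unable to move. Hence $\sigma$ is a history-free winning strategy on the removed pairs.

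Conversely, we show that the final relation $R=R_n$ is won by Duplicator. At stabilization, for every $(x,y)\in R$ the oracle found no $\lambda,A,B$ with $\lambda(B)(\zeta(y))<\lambda(A)(\xi(x))-\epsilon$ and $A\times(Y\setminus B)\subseteq (X\times Y)\setminus R$. Now take any legal Spoiler move $(A,B)$ at $(x,y)$. It must then violate the inclusion, so there exists $(x',y')\in R$ with $x'\in A$ and $y'\notin B$. Duplicator answers with this pair and thereby stays in $R$ forever, winning by infinite play or because Spoiler gets stuck. This is the same argument as the `If' direction of the game theorem. In fact the argument also shows directly that $R$ is the greatest $\epsilon$-$\Lambda$-simulation. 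Therefore Spoiler's winning region is exactly $(X\times Y)\setminus R$.

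The main obstacle is making the complexity bookkeeping precise. Strategy moves are pairs of subsets, which have polynomial size. The oracle's polynomial bound must be applied to inputs whose size is bounded by the size of the finite coalgebras, and this holds because $S\subseteq X\times Y$. Everything else is a standard attractor-style computation.
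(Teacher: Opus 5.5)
Your proposal is correct and is essentially the paper's proof. The paper computes Spoiler's winning region by Kleene iteration of the functional ``positions from which Spoiler can force into $S$ in one move'', takes the moves $(A,B)$ returned by the solvability oracle as the history-free strategy, and bounds the number of iterations by $|X\times Y|$; your refinement of $R_i$ is exactly this iteration on the complements $S_i=(X\times Y)\setminus R_i$, and your rank-based argument for Spoiler and invariant argument for Duplicator on the stable set spell out correctness details that the paper leaves implicit.
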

\begin{proof}
  As indicated above, the winning region of Spoiler is computed as a
  least fixpoint by iterating the functional that maps a relation~$S$
  to the set of all positions~$(x,y)$ that spoiler can force into~$S$
  in the next move by means of some move~$(A,B)$. This functional, as
  well as the move $(A,B)$, are computable in polynomial time by
  polynomial solvability of~$\Lambda$. In the Kleene iteration, the
  moves $(A,B)$ computed in each step form Spoiler's winning
  strategy. There are at most $|X\times Y|$ many iterations, so
  we have overall polynomial runtime.
\end{proof}
\noindent By the correctness of the game
(\Cref{cor:game-correctness-finite}), this entails:
\begin{cor}
  If~$\Lambda$ is polynomially solvable, then it is decidable in
  polynomial time whether two given states in finite coalgebras have
  distance at most~$\epsilon$.
\end{cor}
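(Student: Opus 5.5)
The corollary follows by combining \Cref{thm:strat} with \Cref{cor:game-correctness-finite}. Let $\xi\colon X\to FX$ and $\zeta\colon Y\to FY$ be finite coalgebras, and let $x\in X$, $y\in Y$ and $\epsilon\ge 0$ be given, with $\epsilon$ in binary as in \Cref{def:poly-solvable}.

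First, I note that finite coalgebras are finitely branching: for each state we may take $X_0=X$ (resp.\ $Y_0=Y$). Hence \Cref{cor:game-correctness-finite} applies, and it says that $\disteps{\Lambda}(x,y)\le\epsilon$ holds iff Duplicator wins the position $(x,y)$ in the codensity game up to~$\epsilon$. This is where \Cref{lem:eps-sim}.\ref{item:fb-sim} enters; without finite branching, the infimum in the definition of $\disteps{\Lambda}$ need not be attained (cf.\ \Cref{rem:non-finitary}).

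Second, I apply \Cref{thm:strat}, which computes Spoiler's winning region in polynomial time, using polynomial-time solvability of~$\Lambda$. The Kleene iteration behind it runs for at most $|X\times Y|$ rounds, and each round makes polynomially many calls to the solvability oracle.

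Third, I use determinacy of the game. Every infinite play is won by Duplicator and a player who cannot move loses, so the game is a reachability/safety game and every position is won by exactly one player. Since Spoiler's winning region is the least fixpoint computed by the iteration, Duplicator wins precisely the complementary positions. The algorithm therefore answers ``distance at most~$\epsilon$'' iff $(x,y)$ lies outside the computed Spoiler region.

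The only point requiring some care is that the input size must account for the encoding of the functor elements $\xi(x')$, $\zeta(y')$ and for the binary encoding of~$\epsilon$. Both are exactly the parameters with respect to which \Cref{def:poly-solvable} measures polynomial time, so the overall bound stays polynomial in the size of the two coalgebras together with~$\epsilon$.
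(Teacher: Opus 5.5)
Your proposal is correct and follows the paper's own route: the paper obtains this corollary by combining \Cref{thm:strat} (polynomial-time computation of Spoiler's winning region) with the game correctness result \Cref{cor:game-correctness-finite}. You additionally spell out two points the paper leaves implicit: finite coalgebras are finitely branching, and the game is determined, so Duplicator's winning region is the complement of the computed Spoiler region.
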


\begin{expl}\label{expl:poly-solvable}
  \begin{enumerate}[wide]
  \item \emph{L\'evy-Prokhorov behavioural distance} on Markov chains:
    $\Lambda=\{P\}$ is polynomially solvable. Indeed, given
    $\mu\in\sfun X$, $\nu\in\sfun Y$, and $R\subseteq X\times Y$,
    Desharnais et al.~\cite{DesharnaisEA08} construct a flow network
    $\mathcal N(\mu,\nu,R)$ whose maximal flow is at least
    $\mu(X)-\epsilon$ iff for all $A\in 2^X$,
    $P(R[A])(\nu)\ge P(A)(\mu)-\epsilon$,
    i.e.~$\nu(R[A])\ge\mu(A)-\epsilon$. Given the data from
    \Cref{def:poly-solvable}, we apply this
    to~$R=(X\times Y)\setminus S$. In case the maximal flow (which can
    be computed in polynomial time by standard methods) is less than
    $\mu(X)-\epsilon$, one obtains $A\in 2^X$, $B\in 2^Y$ such that
    $\nu(B)<\mu(A)-\epsilon$ and $A\times(Y\setminus B)\subseteq S$
    from a minimum cut of $\mathcal N(\mu,\nu,R)$ (which in turn is
    read off from a maximal flow in a standard manner); details are in
    the appendix. Modifications for the two-sided case
    ($\Lambda=\{P,\overline P\}$) are straightforward.
  \item The case of generative probabilistic transition systems, i.e.\ $FX = \dfun(\Act\times X)$, can be solved using the technique from the previous item.
    Explicitly, if $\mu\in\dfun(\Act\times X)$, $\nu\in\dfun(\Act\times Y)$ and $S\colon\tworel{X}{Y}$, then we can,
    for each $a\in\Act$, restrict $\mu$ and $\nu$ to the subsets $\{a\}\times X$ and $\{a\}\times Y$, respectively, and check if a pair $(A,B)$ satisfying the condition for the two resulting subdistributions exists.
    Note that we do not need to check labels $a$ that do not occur in either of the supports of $\mu$ and $\nu$.
  \item The modality $\Diamond(A)(f) = \bigvee_{x\in A} f(x)$
    for the fuzzy powerset functor $\pfun_\V$ is polynomially solvable.
    Let $g\in\pfun_\V X$, $h\in\pfun_\V Y$ and $S\colon\tworel{X}{Y}$ and let $\epsilon \ge 0$.
    The key observation is that we have $\Diamond(S[A])(h) \ge \Diamond(A)(g) - \epsilon$ for all $A\in 2^X$ iff we have this for all singleton sets.
    In the nontrivial direction, let $A\in 2^X$. Then we have
    \begin{multline*}
      \Diamond(A)(g) = \bigvee_{x\in A} g(x) - \epsilon \\
      = \bigvee_{x\in A} \bigvee_{y\in S[\{x\}]} h(y)
      = \bigvee_{y\in S[A]} h(y) = \Diamond(S[A])(h).
    \end{multline*}
  \item The set of modalities $\Diamond_a(A)(S) = \bigvee_{x\in A,(b,x)\in S} 1-d(a,b)$ for metric transition systems is polynomially solvable.
    This is because, like in the previous item, the modality is based on taking suprema and it therefore suffices to check singletons.
    Similar to before, we skip labels not occurring in the given functor elements.
  \end{enumerate}
\end{expl}

\section{Two-Valued Distinguishing
  Formulae}\label{sec:two-valued-logic}

We now introduce a two-valued coalgebraic modal logic equipped with a
notion of satisfaction up to~$\epsilon$, with a view to certifying
high behavioural distances by modal formulae (a quantitative logic for
similar purposes is discussed in \Cref{sec:quant-log}). To this end,
we (implicitly) convert the given $2$-to-$\V$ predicate liftings into
$2$-to-$2$ predicate lifting by means of threshold values; as a simple
example, for the probability modality $\Lambda=\{P\}$ over
subdistributions, one obtains the usual threshold modalities `with
probability at least~$q$' known from two-valued probabilistic modal
logics~\cite{LarsenSkou91}. Satisfaction up to~$\epsilon$ then means
essentially that~$\epsilon$ is subtracted from all thresholds.

Formally, we define the two-valued logic $\Lang_2(\Lambda)$ as
follows. The set $\FLang(\Lambda)$ of \emph{$\Lambda$-formulae}
$\phi,\psi,\dots$ is given by the grammar
\begin{equation*}
  \phi,\psi::=\bot\mid\top\mid\phi\land\psi\mid\phi\lor\psi\mid\lambda_q\phi\qquad(\lambda\in\Lambda,q\in\V).
\end{equation*}
The \emph{modal rank} of a formula~$\phi$ is the maximal nesting depth
of modalities~$\lambda_q$ in~$\phi$.

We interpret this logic in terms of satisfaction relations
up-to-$\epsilon$, denoted $\models_\epsilon$, between states~$x$ in
$F$-coalgebras $\xi\colon X\to FX$ and $\Lambda$-formulae. These
relations are recursively defined:
\begin{align*}
  x&\models_\epsilon\top\\
  x&\not\models_\epsilon\bot\\
  x&\models_\epsilon\phi\land\psi\quad\text{iff}\quad x\models_\epsilon\phi\text{ and }x\models_\epsilon\psi\\
  x&\models_\epsilon\phi\lor\psi\quad\text{iff}\quad x\models_\epsilon\phi\text{ or }x \models_\epsilon\psi\\
  x&\models_\epsilon\lambda_q\phi\quad\text{iff}\quad\lambda(\Sem{\phi}_\epsilon)(\xi(x))\ge q\ominus\epsilon
\end{align*}
where $\Sem{\phi}_\epsilon=\{z\in X\mid z\models_\epsilon\phi\}$. (Of
course, we could equivalently replace $q\ominus\epsilon$ with
$q-\epsilon$ in the last clause, but some arguments are simplified by
writing $q\ominus\epsilon$.)  Intuitively, $x \models_\epsilon\phi$
means that state $x$ might not satisfy $\phi$ exactly, but satisfies
it at least up to $\epsilon$. The smaller the $\epsilon$, the closer
we are to classical satisfaction ($\models_0$). Whenever
$\epsilon\le \delta$, $x \models_\epsilon\phi$ implies
$x\models_\delta \phi$; that is,
$\Sem{\phi}_\epsilon\subseteq \Sem{\phi}_\delta$.

\begin{defn}
  Given states $x\in X$, $y\in Y$ in $F$-coalgebras
  $\xi\colon X\to FX$, $\zeta\colon Y\to FY$, respectively, we say
  that~$x$ is \emph{logically included in~$y$ up to $\epsilon$} if for
  all $\phi\in\FLang(\Lambda),\delta\in\V$, whenever $x\models_\delta\phi$, then
  $y\models_{\delta+\epsilon}\phi$.
\end{defn}
\noindent We will show that in the above definition,~$\delta$ may
equivalently be fixed to be~$0$. This relies on the following
observation:

\begin{defn}
  For $\delta\in\V$, we define the \emph{$\delta$-relaxation}
  $r_\delta(\phi)$ of a $\Lambda$-formula~$\phi$ recursively by
  \begin{equation*}
    r_\delta(\lambda_q\phi)=\lambda_{q\ominus \delta}r_\delta(\phi)
  \end{equation*}
  and commutation with all other constructs.

\end{defn}
\begin{lem}\label{lem:relax}
  Let $\xi\colon X\to FX$ be an $F$-coalgebra, and let
  $\epsilon,\delta\in\V$ such that $\delta\le\epsilon$. Then
  \begin{equation*}
    x\models_\epsilon\phi\quad\text{iff}\quad x\models_{\epsilon-\delta}r_\delta(\phi)
  \end{equation*}
  for every $x\in X$ and $\phi\in\FLang(\Lambda)$.
\end{lem}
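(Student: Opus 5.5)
The plan is a structural induction on~$\phi$, proving the equivalence simultaneously for all states $x\in X$, with $\epsilon$ and $\delta\le\epsilon$ fixed throughout. The cases $\top$ and $\bot$ hold trivially, since $r_\delta$ leaves them unchanged and their satisfaction does not depend on the index. The cases $\phi\land\psi$ and $\phi\lor\psi$ follow directly from the inductive hypotheses, because $r_\delta$ commutes with the connectives and $\models_\epsilon$ and $\models_{\epsilon-\delta}$ are both defined componentwise on them.

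The only real case is the modal one, $\lambda_q\phi$. By the inductive hypothesis, applied to all states at once, we have the set equality $\Sem{\phi}_\epsilon=\Sem{r_\delta(\phi)}_{\epsilon-\delta}$. Hence
\begin{equation*}
\lambda(\Sem{\phi}_\epsilon)(\xi(x))=\lambda(\Sem{r_\delta(\phi)}_{\epsilon-\delta})(\xi(x)).
\end{equation*}
So $x\models_\epsilon\lambda_q\phi$ iff this value is at least $q\ominus\epsilon$. On the other side, $x\models_{\epsilon-\delta}\lambda_{q\ominus\delta}r_\delta(\phi)$ iff the same value is at least $(q\ominus\delta)\ominus(\epsilon-\delta)$. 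Both sides therefore involve the same quantity, and the case reduces to an arithmetic identity on truncated subtraction.

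The main (though minor) obstacle is verifying that identity, namely
\begin{equation*}
(q\ominus\delta)\ominus(\epsilon-\delta)=q\ominus\epsilon\qquad\text{for } 0\le\delta\le\epsilon.
\end{equation*}
This is where the truncated form $\ominus$ is convenient, as the paper anticipated. I would split into two cases. If $q\ge\delta$, the left side is $\max(q-\delta-(\epsilon-\delta),0)=\max(q-\epsilon,0)$. If $q<\delta$, the left side is $0\ominus(\epsilon-\delta)=0$, and since $q<\delta\le\epsilon$ the right side $q\ominus\epsilon$ is also $0$. The hypothesis $\delta\le\epsilon$ is needed so that $\epsilon-\delta$ lies in~$\V$ and is a legitimate satisfaction index.
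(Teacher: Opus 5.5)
Your proof is correct and follows the paper's argument essentially verbatim: structural induction on~$\phi$, with the modal case reduced via the set equality $\Sem{\phi}_\epsilon=\Sem{r_\delta(\phi)}_{\epsilon-\delta}$ to the identity $(q\ominus\delta)\ominus(\epsilon-\delta)=q\ominus\epsilon$. The differences are cosmetic: the paper splits cases on $\epsilon\le q$ versus $\epsilon>q$ rather than on $q\ge\delta$ versus $q<\delta$, and it writes out only one direction, whereas you note that the same value compared against the same threshold gives both directions of the biconditional at once.
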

\begin{proof}
  Induction on~$\phi$. The Boolean cases are trivial. For the
  modalities~$\lambda_q$, we reason as follows. Let
  $x\models_\epsilon\lambda_q\phi$, that is,
  $\lambda(\Sem{\phi}_\epsilon)(\xi(x))\ge q-\epsilon$. We have to
  show that
  $x\models_{\epsilon-\delta}r_\delta(\lambda_q\phi)=\lambda_{q\ominus
    \delta}r_\delta(\phi)$, i.e.\ that
  $\lambda(\Sem{r_\delta(\phi)}_{\epsilon-\delta}(\xi(x))\ge (q\ominus
  \delta)-(\epsilon-\delta)$. By induction,
  $\Sem{\phi}_\epsilon=\Sem{r_\delta(\phi)}_{\epsilon-\delta}$. We are thus done
  once we show that
  \begin{equation*}
    (q\ominus \delta)\ominus(\epsilon-\delta)=q\ominus\epsilon.
  \end{equation*}
  We distinguish cases on how the various occurrences of~$\ominus$
  evaluate. If $\epsilon\le q$, then also $\delta\le q$ and
  $\epsilon-\delta\le q-\delta = q\ominus \delta$, so all~$\ominus$ are in fact~$-$,
  and the claim is clear. So suppose that $\epsilon>q$. Then the right
  hand side is~$0$. Moreover, since $\delta\le\epsilon$, we have
  $\epsilon-\delta=\epsilon\ominus \delta\ge q\ominus \delta$, so the left hand side
  is also~$0$.
\end{proof}

\begin{lem}\label{lem:logical-zero}
  Let $x,y$ be states in finitely branching $F$-coalgebras
  $\xi\colon X\to FX$, $\zeta\colon Y\to FY$, respectively, and let
  $\epsilon\in\V$. Then~$y$ logically $\epsilon$-simulates~$x$ iff for
  all $\phi\in\FLang(\Lambda)$, whenever $x\models_0\phi$, then
  $y\models_\epsilon\phi$.
\end{lem}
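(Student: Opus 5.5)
The forward direction is immediate: logical inclusion up to~$\epsilon$ (called here ``logically $\epsilon$-simulates'') quantifies over all $\delta\in\V$, so it holds in particular for $\delta=0$.

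For the converse, assume that $x\models_0\phi$ implies $y\models_\epsilon\phi$ for all $\phi\in\FLang(\Lambda)$. Let $\phi\in\FLang(\Lambda)$ and $\delta\in\V$ with $x\models_\delta\phi$; we must show $y\models_{\delta+\epsilon}\phi$. The plan is to move the slack~$\delta$ into the formula by relaxation, apply the hypothesis at level~$0$, and then move~$\delta$ back out.

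\begin{enumerate}
\item By \Cref{lem:relax} with $\epsilon:=\delta$ and $\delta:=\delta$, the assumption $x\models_\delta\phi$ is equivalent to $x\models_0 r_\delta(\phi)$.
\item The hypothesis then gives $y\models_\epsilon r_\delta(\phi)$.
\item Apply \Cref{lem:relax} again, now with $\epsilon:=\delta+\epsilon$ and $\delta:=\delta$. The condition $\delta\le\delta+\epsilon$ clearly holds, and the lemma yields $y\models_{\delta+\epsilon}\phi$ iff $y\models_{\epsilon}r_\delta(\phi)$. Hence $y\models_{\delta+\epsilon}\phi$.
\end{enumerate}

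The only real obstacle is that $\delta+\epsilon$ may exceed~$1$, and then \Cref{lem:relax} does not literally apply in $\V$. I would handle this directly. Satisfaction up to any level $\geq 1$ is trivial for modal subformulae, since $q\ominus\gamma=0$ and every lifting takes values $\ge 0$. So I would read $\models_{\delta+\epsilon}$ as $\models_{\delta\oplus\epsilon}$, or check the claim directly in this case.

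Concretely, suppose $\delta+\epsilon\ge 1$. An induction on~$\phi$ shows that $y\models_\gamma\phi$ for every $\gamma\ge 1$ whenever $\phi$ is satisfiable at all in the Boolean skeleton. More simply, since \Cref{lem:relax}'s proof only uses the identity $(q\ominus\delta)\ominus(\gamma-\delta)=q\ominus\gamma$, the lemma extends verbatim to $\gamma=\delta+\epsilon$ unbounded, with $\ominus$ still truncated at~$0$. I would therefore remark that \Cref{lem:relax} holds for arbitrary nonnegative reals $\gamma\ge\delta$, and with that remark the argument above goes through. Finite branching is not needed for this lemma.
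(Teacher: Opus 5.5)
Your proof is correct and is essentially the paper's own argument: relax $\phi$ to $r_\delta(\phi)$ via \Cref{lem:relax}, apply the hypothesis at level $0$, and undo the relaxation with \Cref{lem:relax} again. Your extra care about $\delta+\epsilon$ possibly exceeding $1$ is a point the paper passes over silently, and your fix is sound, since the identity $(q\ominus\delta)\ominus(\gamma-\delta)=q\ominus\gamma$ behind \Cref{lem:relax} holds for all reals $\gamma\ge\delta\ge 0$ (and, as you note, finite branching plays no role).
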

\begin{proof}
  `Only if' is trivial; we prove `if'. So let $\delta\in V$ and
  $\phi\in\FLang(\Lambda)$ such that $x\models_\delta\phi$. We have to
  show that
  $y\models_{\delta+\epsilon}\phi$. 
  By Lemma~\ref{lem:relax}, we obtain $x\models_0 r_\delta(\phi)$. By
  hypothesis, we obtain $y\models_\epsilon r_\delta(\phi)$, and again by
  Lemma~\ref{lem:relax}, we conclude $y\models_{\delta+\epsilon}\phi$, as
  required.
\end{proof}
\noindent Logical inclusion satisfies the expected continuity
property:
\begin{lem}\label{lem:sat-epsilon}
  Let $\xi\colon X\to FX$, $\zeta\colon Y\to FY$, let $x\in X$,
  $y\in Y$, and let $\epsilon\in\V$.
  \begin{enumerate}
  \item If $\epsilon\le\epsilon'$ and $x$ is logically included in~$y$
    up to~$\epsilon$, then~$x$ is logically included in~$y$ up
    to~$\epsilon'$.
  \item\label{item:epsilon-cont} If $x$ is logically included in~$y$
    up to~$\epsilon'$ for all $\epsilon'>\epsilon$, then~$x$ is
    logically included in~$y$ up to~$\epsilon$.
  \end{enumerate}
\end{lem}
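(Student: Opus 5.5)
Part 1 is immediate from the monotonicity of satisfaction in the error parameter noted after the definition of $\models_\epsilon$. If $x\models_\delta\phi$, then by hypothesis $y\models_{\delta+\epsilon}\phi$, and since $\Sem{\phi}_{\delta+\epsilon}\subseteq\Sem{\phi}_{\delta+\epsilon'}$ we get $y\models_{\delta+\epsilon'}\phi$.

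For Part~\ref{item:epsilon-cont}, fix $\phi$ and $\delta$ with $x\models_\delta\phi$. By hypothesis $y\models_{\delta+\epsilon'}\phi$ for all $\epsilon'>\epsilon$, i.e.\ $y\models_{\gamma'}\phi$ for all $\gamma'>\gamma:=\delta+\epsilon$. So the claim reduces to a continuity property of satisfaction itself:
\begin{equation*}
  \textstyle\bigcap_{\gamma'>\gamma}\Sem{\phi}_{\gamma'}=\Sem{\phi}_{\gamma}.
\end{equation*}
I would prove this by induction on $\phi$, simultaneously for all states of the coalgebra $(Y,\zeta)$. The inclusion `$\supseteq$' is monotonicity. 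The cases $\top$, $\bot$ and $\land$ are trivial. For $\lor$, I would use that the sets $\Sem{\psi}_{\gamma'}$ decrease as $\gamma'\downarrow\gamma$: if $z$ satisfies $\psi_1\lor\psi_2$ for all $\gamma'>\gamma$, then one disjunct, say $\psi_1$, holds along a sequence $\gamma'_n\downarrow\gamma$, and by monotonicity it then holds for all $\gamma'>\gamma$. The inductive hypothesis then gives $z\models_\gamma\psi_1$.

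The modal case $\lambda_q\psi$ is the main obstacle. There we know $\lambda(\Sem{\psi}_{\gamma'})(\zeta(z))\ge q\ominus\gamma'$ for all $\gamma'>\gamma$, and we need $\lambda(\Sem{\psi}_\gamma)(\zeta(z))\ge q\ominus\gamma$. The right-hand sides $q\ominus\gamma'$ converge to $q\ominus\gamma$, so what is needed is that $\lambda$ commutes with the decreasing intersection $\bigcap_{\gamma'}\Sem{\psi}_{\gamma'}$, which equals $\Sem{\psi}_\gamma$ by the inductive hypothesis. For arbitrary $\lambda$ this need not hold; compare \Cref{rem:non-finitary}.

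So I would work under the standing finite-branching assumption used in \Cref{lem:logical-zero}, and argue exactly as in the proof of \Cref{lem:eps-sim}.\ref{item:fb-sim}. Pick a finite $Y_0\subseteq Y$ with $\zeta(z)\in FY_0$. By naturality, $\lambda(C)(\zeta(z))=\lambda(Y_0\cap C)(\zeta(z))$ for all $C\in 2^Y$. The finite sets $Y_0\cap\Sem{\psi}_{\gamma'}$ shrink as $\gamma'\downarrow\gamma$, so they become stationary for $\gamma'\in(\gamma,\gamma_0)$, where they equal $Y_0\cap\Sem{\psi}_\gamma$ by induction. Hence
\begin{equation*}
  \lambda(\Sem{\psi}_\gamma)(\zeta(z))=\lambda(\Sem{\psi}_{\gamma'})(\zeta(z))\ge q\ominus\gamma'
\end{equation*}
for all such $\gamma'$. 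Letting $\gamma'\to\gamma$ and using continuity of $\ominus$ yields the claim. Only $(Y,\zeta)$ needs to be finitely branching, and values $\delta+\epsilon>1$ cause no trouble, since then $q\ominus\gamma=0$.
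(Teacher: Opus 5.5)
Your argument is correct and has the same skeleton as the paper's proof: induction on $\phi$, with disjunction handled by stationarity of the set of satisfied disjuncts as the error parameter decreases. The difference is the modal case. The paper's proof calls disjunction the only non-obvious case, phrases the induction in terms of the fixed pair $x,y$, and says nothing about the modal step, even though that step concerns successors of $y$. Your reformulation is the right induction statement: continuity of satisfaction on $Y$, $\bigcap_{\gamma'>\gamma}\Sem{\phi}_{\gamma'}=\Sem{\phi}_\gamma$, proved for all states of $(Y,\zeta)$ at once. You also correctly see that the modal step needs $\lambda$ to commute with this decreasing intersection. You obtain this from finite branching of $(Y,\zeta)$ by exactly the stationarity argument of \Cref{lem:eps-sim}.\ref{item:fb-sim}.

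The hypothesis you add is genuinely necessary, because the lemma as printed is false without it. Take the coalgebras of \Cref{rem:non-finitary}. There $x\preceq_{\epsilon',\Lambda}y$ for every $\epsilon'>0$, so $x$ is logically included in $y$ up to every $\epsilon'>0$; the inductive argument in the proof of \Cref{lem:invariance}, combined with \Cref{lem:relax}, uses no finite branching. Yet $x\models_0\lambda_1\lambda_1\top$. On the other side, no $y_n$ satisfies $\lambda_1\top$ up to $0$, since $1-1/n<1$. Hence $\lambda(\Sem{\lambda_1\top}_0)(\zeta(y))=0$ and $y\not\models_0\lambda_1\lambda_1\top$, so $x$ is not logically included in $y$ up to $0$. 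The same example shows that \Cref{lem:invariance} also needs $(Y,\zeta)$ to be finitely branching. This is harmless downstream, since it is only invoked in \Cref{thm:hm-two-valued}, where finite branching is assumed.
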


\noindent Using this fact, we prove that satisfaction is preserved up
to behavioural distance:

\begin{lem}[Preservation]\label{lem:invariance}
  Let $x\in X$, $y\in Y$ be states in $F$-coalgebras
  $\xi\colon X\to FX$, $\zeta\colon Y\to FY$, and let
  $\epsilon\in\V$. If $\disteps{\Lambda}(x,y)\le\epsilon$, then~$x$ is
  logically included in~$y$ up to~$\epsilon$.
\end{lem}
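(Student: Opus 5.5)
The plan is to reduce to the case of a genuine $\epsilon'$-$\Lambda$-simulation for $\epsilon'>\epsilon$, and then prove preservation of satisfaction by a straightforward induction on formulae. Finite branching is not assumed, so we cannot directly conclude $x\preceq_{\epsilon,\Lambda}y$ from $\disteps{\Lambda}(x,y)\le\epsilon$. Instead we use \Cref{lem:sat-epsilon}.\ref{item:epsilon-cont}: it suffices to show that $x$ is logically included in~$y$ up to~$\epsilon'$ for every $\epsilon'>\epsilon$. For such $\epsilon'$ we have $\disteps{\Lambda}(x,y)<\epsilon'$, hence $x\preceq_{\epsilon',\Lambda}y$ by \Cref{lem:eps-sim}.\ref{item:dist-sim}.

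It therefore remains to prove the following core claim. Let $R\colon\tworel{X}{Y}$ be an $\epsilon$-$\Lambda$-simulation (with $\epsilon'$ renamed to~$\epsilon$). Then for every $\phi\in\FLang(\Lambda)$, every $\delta$, and every $(x,y)\in R$, $x\models_\delta\phi$ implies $y\models_{\delta+\epsilon}\phi$. Equivalently,
\begin{equation*}
R[\Sem{\phi}_\delta]\subseteq\Sem{\phi}_{\delta+\epsilon}.
\end{equation*}
We prove this by induction on~$\phi$, simultaneously for all~$\delta$ and all pairs in~$R$. The cases $\top$ and $\bot$ are immediate, and the cases $\land$ and $\lor$ follow directly from the inductive hypothesis. (Here $\delta+\epsilon$ might exceed~$1$. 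The satisfaction clauses make sense for any nonnegative threshold parameter, and in that range $\ominus$ simply yields~$0$, so this causes no trouble.)

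The only substantive step is the modal case $\lambda_q\phi$. Suppose $x\models_\delta\lambda_q\phi$, i.e.\ $\lambda(\Sem{\phi}_\delta)(\xi(x))\ge q\ominus\delta$, and put $A=\Sem{\phi}_\delta$. Since $R$ is an $\epsilon$-$\Lambda$-simulation,
\begin{equation*}
\lambda(R[A])(\zeta(y))\ge\lambda(A)(\xi(x))-\epsilon\ge(q\ominus\delta)-\epsilon.
\end{equation*}
By the inductive hypothesis, $R[A]\subseteq\Sem{\phi}_{\delta+\epsilon}$. By monotonicity of~$\lambda$, we get $\lambda(\Sem{\phi}_{\delta+\epsilon})(\zeta(y))\ge(q\ominus\delta)-\epsilon\ge q-\delta-\epsilon$.

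Since predicate liftings take values in~$\V$, this value is also $\ge 0$. Hence it is $\ge\max(q-\delta-\epsilon,0)=q\ominus(\delta+\epsilon)$, which gives $y\models_{\delta+\epsilon}\lambda_q\phi$. The only mild care needed is exactly this bookkeeping with truncated subtraction; no real obstacle is expected.
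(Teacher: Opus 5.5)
Your argument has the same structure as the paper's proof: reduce, via \Cref{lem:sat-epsilon}.\ref{item:epsilon-cont} and \Cref{lem:eps-sim}.\ref{item:dist-sim}, to an actual $\epsilon'$-$\Lambda$-simulation, then induct on formulae. Your core claim $R[\Sem{\phi}_\delta]\subseteq\Sem{\phi}_{\delta+\epsilon}$, proved for all $\delta$ at once, is correct, including the truncated-subtraction bookkeeping. It is also slightly cleaner than the paper's version, which treats only $\delta=0$ and then invokes \Cref{lem:logical-zero}, a lemma stated under a finite-branching hypothesis that the present statement lacks.

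The real problem is the reduction step, which you inherit from the paper. Without finite branching, \Cref{lem:sat-epsilon}.\ref{item:epsilon-cont} is false, and so is the statement you are proving. The modal case of that lemma needs $\lambda(\bigcap_{\epsilon'>\epsilon}C_{\epsilon'})(b)\ge\bigwedge_{\epsilon'>\epsilon}\lambda(C_{\epsilon'})(b)$ for descending chains $C_{\epsilon'}$. Naturality yields this when $b\in FY_0$ for a finite $Y_0$ (the argument in the proof of \Cref{lem:eps-sim}.\ref{item:fb-sim}), but not in general.

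Concretely, take the coalgebras of \Cref{rem:non-finitary}, where $\disteps{\Lambda}(x,y)=0$, and let $\phi=\lambda_1\lambda_1\top$.
\begin{itemize}
\item $x\models_0\phi$, because $\xi(x)=\{(1,x)\}$.
\item In $Y$, $\Sem{\lambda_1\top}_0=\{y\}$, since each $y_n$ only reaches the value $1-\frac1n<1$.
\item $y$ is not among its own successors, so $\lambda(\Sem{\lambda_1\top}_0)(\zeta(y))=0<1$, i.e.\ $y\not\models_0\phi$.
\end{itemize}
Since $y$ $\epsilon'$-$\Lambda$-simulates $x$ for every $\epsilon'>0$, your own core claim shows that $x$ is logically included in $y$ up to every $\epsilon'>0$. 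But it is not included up to $0=\disteps{\Lambda}(x,y)$.

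The fix is to assume that $(Y,\zeta)$ is finitely branching, which holds wherever the lemma is used, e.g.\ in \Cref{thm:hm-two-valued}. Then \Cref{lem:eps-sim}.\ref{item:fb-sim} gives $x\preceq_{\epsilon,\Lambda}y$ directly from $\disteps{\Lambda}(x,y)\le\epsilon$, and your induction completes the proof without any continuity lemma.
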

\begin{proof}
  By \Cref{lem:sat-epsilon}.\ref{item:epsilon-cont}, it suffices to
  prove that whenever~$y$ $\epsilon$-$\Lambda$-simulates~$x$, then~$x$
  is logically included in~$y$ up to~$\epsilon$. So let~$R$ be an
  $\epsilon$-$\Lambda$-simulation such that $x\mathrel{R}y$.  We use
  Lemma~\ref{lem:logical-zero} and show by induction on~$\phi$ that
  whenever $x\models_0\phi$, then~$y\models_\epsilon\phi$. Boolean
  cases are trivial. For the modal case, suppose that
  $x\models_0\lambda_q\phi$, i.e.\
  $\lambda(\Sem{\phi}_0)(\xi(x))\ge q$. Since~$R$ is an
  $\epsilon$-$\Lambda$-simulation, this implies
  $\lambda(R[\Sem{\phi}_0])(\zeta(y))\ge q-\epsilon$. By induction, we
  have $R[\Sem{\phi}_0]\subseteq\Sem{\phi}_\epsilon$, so by
  monotonicity of~$\lambda$,
  $\lambda(\Sem{\phi}_\epsilon)(\zeta(y))\ge q-\epsilon$; that is,
  $y\models_\epsilon\lambda_q\phi$ as required.
\end{proof}
\noindent Indeed, we can show that over finitely branching coalgebras,
the threshold-based $\Lambda$-behavioural distance~$\disteps{\Lambda}$
is characterized by the logic~$\Lang(\Lambda)$, a property often
referred to as \emph{expressiveness} of~$\Lang(\Lambda)$:
\begin{defn}
  Let $x\in X$, $y\in Y$ be states in $F$-coalgebras
  $\xi\colon X\to FX$, $\zeta\colon Y\to FY$, respectively. A formula
  $\phi\in\FLang(\Lambda)$ is \emph{$\epsilon$-distinguishing} for
  $(x,y)$ if $x\models_0\phi$ and
  $y\not\models_\epsilon\phi$.
\end{defn}

\begin{thm}[Expressiveness, two-valued]\label{thm:hm-two-valued}
  Let $x\in X$, $y\in Y$ be states in finitely branching
  $F$-coalgebras $\xi\colon X\to FX$, $\zeta\colon Y\to
  FY$. Then~$\disteps{\Lambda}(x,y)\le\epsilon$ iff~$y$ logically
  $\epsilon$-simulates~$x$.
\end{thm}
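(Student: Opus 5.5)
The `only if' direction is \Cref{lem:invariance}. For `if', we show that logical inclusion up to~$\epsilon$ is itself an $\epsilon$-$\Lambda$-simulation. We then conclude by \Cref{lem:eps-sim}.\ref{item:fb-sim}, which gives $\disteps{\Lambda}(x,y)\le\epsilon$.

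Concretely, define $R\colon\tworel{X}{Y}$ by $x'\mathrel{R}y'$ iff for all $\phi\in\FLang(\Lambda)$, $x'\models_0\phi$ implies $y'\models_\epsilon\phi$. By \Cref{lem:logical-zero}, $R$ is exactly logical inclusion up to~$\epsilon$. Suppose for contradiction that $R$ is not an $\epsilon$-$\Lambda$-simulation. Then there are $x'\mathrel{R}y'$, $\lambda\in\Lambda$ and $A\in 2^X$ such that $\lambda(R[A])(\zeta(y'))<\lambda(A)(\xi(x'))-\epsilon$.

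We use finite branching to make everything finite. Take finite $X_0\subseteq X$ and $Y_0\subseteq Y$ with $\xi(x')\in FX_0$ and $\zeta(y')\in FY_0$. By naturality (as in the proof of \Cref{lem:eps-sim}), we may replace $A$ by $A\cap X_0$ and $R[A]$ by $Y_0\cap R[A]$. Monotonicity of~$\lambda$ keeps the strict inequality when we shrink $A$ to $A\cap X_0$, since $R[A\cap X_0]\subseteq R[A]$.

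So let $A\subseteq X_0$ be finite. For each $a\in A$ and each $w\in Y_0\setminus R[A]$ we have $\neg(a\mathrel{R}w)$, so there is a formula $\phi_{a,w}$ with $a\models_0\phi_{a,w}$ and $w\not\models_\epsilon\phi_{a,w}$. Put
\begin{equation*}
  \psi=\bigvee_{a\in A}\bigwedge_{w\in Y_0\setminus R[A]}\phi_{a,w},
\end{equation*}
a finite formula. Then $A\subseteq\Sem{\psi}_0$ and $\Sem{\psi}_\epsilon\cap Y_0\subseteq R[A]$. Set $q=\lambda(A)(\xi(x'))$. By monotonicity, $\lambda(\Sem\psi_0)(\xi(x'))\ge q$, so $x'\models_0\lambda_q\psi$. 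On the other side,
\begin{equation*}
  \lambda(\Sem\psi_\epsilon)(\zeta(y'))=\lambda(\Sem\psi_\epsilon\cap Y_0)(\zeta(y'))\le\lambda(R[A])(\zeta(y'))<q-\epsilon=q\ominus\epsilon ,
\end{equation*}
where $q-\epsilon\ge0$ because the strict inequality forces it. Hence $y'\not\models_\epsilon\lambda_q\psi$. This contradicts $x'\mathrel{R}y'$, so $R$ is an $\epsilon$-$\Lambda$-simulation containing $(x,y)$, and therefore $x\preceq_{\epsilon,\Lambda}y$.

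The main obstacle is the finiteness bookkeeping. The logic has only finite conjunctions and disjunctions, so both the set of relevant $a$'s and the set of $w$'s to exclude must be finite. Finite branching of both coalgebras, combined with naturality to localize $\lambda$ at $X_0$ and $Y_0$, is exactly what provides this. Without it the argument fails, consistent with \Cref{rem:non-finitary}.

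A minor point concerns the definition of logical inclusion, which quantifies over all $\delta$. Only the $\delta=0$ form is used in the construction above, and \Cref{lem:logical-zero} makes the two forms equivalent under finite branching. Finally, the step from $x\preceq_{\epsilon,\Lambda}y$ to the distance bound is supplied by the definition of $\disteps{\Lambda}$, or equivalently by \Cref{lem:eps-sim}.
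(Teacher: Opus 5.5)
Your proof is correct and follows essentially the same route as the paper. Both show by contradiction that logical inclusion up to~$\epsilon$ is an $\epsilon$-$\Lambda$-simulation, using finite branching and naturality to restrict to finite $X_0,Y_0$ and the witness $\lambda_q\psi$ with $\psi=\bigvee_{a\in A}\bigwedge_{w\in Y_0\setminus R[A]}\phi_{a,w}$; your explicit monotonicity argument for shrinking $A$ to $A\cap X_0$ (via $R[A\cap X_0]\subseteq R[A]$) just spells out what the paper compresses into ``by naturality we can assume $A\subseteq X'$''.
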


\noindent That is, whenever $\disteps{\Lambda}(x,y)>\epsilon$, then
there exists an $\epsilon$-distinguishing formula for $(x,y)$ in
$\Lang(\Lambda)$. A proof (very similar to that of
\Cref{thm:extract-two-valued}) is in the appendix. For our present
purposes, we focus instead on the computation of distinguishing
formulae in case the involved coalgebras $\xi\colon X\to FX$,
$\zeta\colon Y\to FY$ are finite.  Let~$s$ be a memoryless winning
strategy for Spoiler that wins all positions in the winning region of
Spoiler in the codensity game up to~$\epsilon$ on~$\xi$ and~$\zeta$ as
per \cref{def:codensity-game} (such a strategy exists because from
Spoiler's perspective, the game is a reachability game). Since there
are only $k=|X|\times|Y|$ Spoiler positions in the game, Spoiler wins
any position in their winning region in at most~$k$ moves. We compute
$\epsilon$-distinguishing formulae for all pairs $(x,y)$ of states in
Spoiler's winning region in dag representation, i.e.~sharing identical
subformulae (cf.~\Cref{rem:dags}), using dynamic programming, with
stages indexed by the remaining number of Spoiler moves. As positions
of the shape $(x,y)\in X\times Y$ belong to Spoiler, this number is
always greater than~$0$. In stage $i>0$, the algorithm proceeds as
follows for each position $(x_0,y_0)$ won by~$S$ in~$i$ moves:

\begin{alg}[Extraction of two-valued distinguishing
  formulae]\label{alg:two-valued-extraction} ~

  \begin{enumerate}
  \item Put $(A,B)=s(x_0,y_0)$ (Spoiler's winning move). By definition
    of the game, there exists $\lambda\in\Lambda$ such that
    $\lambda(B)(\zeta(y))<q-\epsilon$ where $q=\lambda(A)(\xi(x))$.
  \item For every possible reply by~$D$, i.e.~every pair
    $(x,y)\in X\times Y$ such that $x\in A$ and $y\notin B$,~$S$ wins
    $(x,y)$ in at most $i-1$ moves, so we have already computed an
    $\epsilon$-distinguishing formula $\phi_{xy}$ for $(x,y)$ in
    previous stages. 
  \item Put
    \begin{equation}
      \label{eq:two-valued-distinguishing-phi}
      \textstyle \phi=\bigvee_{x\in A}\bigwedge_{y\in Y\setminus B}\,\phi_{xy}.
    \end{equation}
  \item Assign the $\epsilon$-distinguishing formula
    $\phi_{x_0y_0}=\lambda_q\phi$ to $(x_0,y_0)$.
  \end{enumerate}
\end{alg}
\noindent The correctness of the algorithm and the complexity of the
formulae it computes are formulated as follows:
\begin{thm}[Polynomial-time extraction of distinguishing
  formulae]\label{thm:extract-two-valued}
  Given finite $F$-coalgebras $\xi\colon X\to FX$,
  $\zeta\colon Y\to FY$ and a Spoiler strategy~$s$ for the codensity
  game up to~$\epsilon$ on~$\xi$ and~$\zeta$ that wins all positions
  in Spoiler's winning region, \cref{alg:two-valued-extraction}
  constructs two-valued $\epsilon$-distinguishing
  formulae~$\phi_{xy}\in\FLang(\Lambda)$ for all pairs of states
  $(x,y)\in X\times Y$ such that
  $\disteps{\Lambda}(x,y)>\epsilon$. The algorithm runs in polynomial
  time, and the formulae~$\phi_{x,y}$ it constructs are of quadratic
  modal rank and polynomial, specifically quartic,
  dag size.
\end{thm}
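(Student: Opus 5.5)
Correctness goes by induction on the stage~$i$, i.e.\ the number of Spoiler moves needed to win. Note first that, by \Cref{cor:game-correctness-finite}, the pairs with $\disteps{\Lambda}(x,y)>\epsilon$ are exactly Spoiler's winning region (finite coalgebras are finitely branching), so every such pair receives a formula. Fix $(x_0,y_0)$ won in $i$ moves, with $(A,B)=s(x_0,y_0)$, $\lambda$ and $q=\lambda(A)(\xi(x_0))$ as in the algorithm. By induction, every $\phi_{xy}$ with $x\in A$, $y\notin B$ satisfies $x\models_0\phi_{xy}$ and $y\not\models_\epsilon\phi_{xy}$. If $A$ is empty, $\phi$ is the empty disjunction $\bot$; if $Y\setminus B$ is empty, the conjunctions are $\top$. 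We must show
\begin{itemize}
\item[(a)] $x_0\models_0\lambda_q\phi$;
\item[(b)] $y_0\not\models_\epsilon\lambda_q\phi$.
\end{itemize}

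For (a), every $x\in A$ satisfies each conjunct $\phi_{xy}$ at level~$0$, hence satisfies the disjunct for~$x$. So $A\subseteq\Sem{\phi}_0$, and monotonicity of~$\lambda$ gives $\lambda(\Sem{\phi}_0)(\xi(x_0))\ge\lambda(A)(\xi(x_0))=q$. For (b), take $y\notin B$. For each $x\in A$ the conjunct $\phi_{xy}$ fails at $y$ under $\models_\epsilon$, so every disjunct fails and $y\notin\Sem{\phi}_\epsilon$. Hence $\Sem{\phi}_\epsilon\subseteq B$, and monotonicity gives $\lambda(\Sem{\phi}_\epsilon)(\zeta(y_0))\le\lambda(B)(\zeta(y_0))<q-\epsilon\le q\ominus\epsilon$, which is (b). In the base stage $i=1$, D has no reply, so $A\times(Y\setminus B)=\emptyset$ and the same argument applies with $\bot$ or $\top$.

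For complexity, there are at most $k=|X||Y|$ Spoiler positions, and each position is won within $k$ moves. Each $\phi_{x_0y_0}$ has modal rank one more than the maximal rank of the formulae it uses, which were produced in earlier stages. Hence the rank is at most $k$, quadratic in the number of states.

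For dag size, we share the at most $k$ nodes $\phi_{xy}$. Each new $\phi_{x_0y_0}$ adds fresh nodes: at most $|A|\cdot|Y\setminus B|\le k$ binary $\land$-nodes, at most $|X|$ $\lor$-nodes, and one modal node, i.e.\ $O(k)$ nodes. Summed over $k$ positions this gives $O(k^2)=O(|X|^2|Y|^2)$ nodes, which is quartic. The modal labels $q=\lambda(A)(\xi(x_0))$ are values already produced during strategy computation. The running time is polynomial given the strategy, which is computable by \Cref{thm:strat}. The construction itself is a single pass over the positions in stage order, with polynomial work per position.

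The main obstacle is bookkeeping rather than any single inequality: ensuring the induction applies to every reply pair, which holds because replies are won in fewer moves; handling empty $A$ or $Y\setminus B$; and keeping the threshold $q$ finitely representable so the polynomial-time claim stands.
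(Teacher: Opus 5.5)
Your proof is correct and follows the paper's own argument: induction on the stage, with $A\subseteq\Sem{\phi}_0$ giving $x_0\models_0\lambda_q\phi$, and the contrapositive $y\notin B\Rightarrow y\notin\Sem{\phi}_\epsilon$ giving $\Sem{\phi}_\epsilon\subseteq B$ and hence $y_0\not\models_\epsilon\lambda_q\phi$. The paper's write-up has $\Sem{\phi}_{q-\epsilon}$ at this point, a slip carried over from the quantitative case; your $\Sem{\phi}_\epsilon$ is the correct set, and your explicit rank and dag-size counts ($O(k)$ fresh nodes per position over $k=|X||Y|$ positions) spell out what the paper simply calls immediate.
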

\begin{proof}
  The complexity estimates are immediate. We prove the correctness
  claim by induction on the stage~$i$, i.e.~we show that~$\phi$ as
  per~\eqref{eq:two-valued-distinguishing-phi} is an $\epsilon$-distinguishing
  formula for $(x_0,y_0)$. 
  By construction, we have $A\subseteq\Sem{\phi}_0$, and therefore
  $x\models_0\lambda_q\phi$. On the other hand, we claim that
  $y_0\not\models_\epsilon\lambda_q\phi$, that is,
  $\lambda(\Sem{\phi}_{q-\epsilon})(\zeta(y_0))<q-\epsilon$. Since
  $(A,B)$ is a legal move for Spoiler at $(x_0,y_0)$, we have
  $\lambda(B)(\zeta(y_0))<q-\epsilon$, so we are done once we show
  that $\Sem{\phi}_{q-\epsilon}\subseteq B$. We show the
  contrapositive: Let $y\in Y\setminus B$; we show that
  $y\not\models_{q-\epsilon}\phi$. So let $x\in A$; we have to show
  that
  $y\not\models_{q-\epsilon}\bigwedge_{y'\in Y\setminus B}\phi_{xy'}$,
  which however follows from $y\not\models_{q-\epsilon}\phi_{xy}$.
\end{proof}

\begin{rem}[Dag representation]\label{rem:dags}
  As indicated above, we measure formulae in \emph{dag size}, i.e.~the
  size of a representation where identical subformulae are
  shared. This measure is similar to just counting the number of
  subformulae except that it takes into account the representation
  size of modalities. In terms of implementation, this will usually
  mean that formulae are represented as data structures on the
  heap. The size of the formula without sharing of subformulae is
  often referred to as \emph{tree size}. It is known that even in the
  two-valued base case, i.e.~in the standard modal logic of Kripke
  frames, the tree size of distinguishing formulae for non-bisimilar
  states is worst-case exponential~\cite{FigueiraGorin10}.
\end{rem}

\begin{rem}[Constants]\label{rem:constants-two-valued}
  In the interest of succinctness, we have allowed real-valued
  thresholds~$q$ on modalities~$\lambda_q$, following Desharnais et
  al.~\cite{DesharnaisEA08}. It is apparent from
  \Cref{alg:two-valued-extraction} that in case the input coalgebras
  have only rational values
  (i.e~$\lambda(A)(\xi(x)),\lambda(B)(\zeta(y))\in\Rat$ for all
  $x\in X$, $y\in Y$, $A\in 2^X$, $B\in 2^Y$), the extracted formulae
  contain only rational thresholds, which under mild assumptions on
  the given predicate liftings will moreover be of polynomial binary
  representation size; for instance, in our running example of the
  modality~$P$ over Markov chains, the modality will just sum over
  probabilities already given in the model representation.
\end{rem}
\noindent In combination with \cref{thm:strat}, we obtain
\begin{cor}[Polynomial-time computation of distinguishing formulae]
  \label{cor:compute-two-valued}
  If~$\Lambda$ is polynomial-time solvable, then
  $\epsilon$-distinguishing formulae of states $x,y$ of
  threshold-behavioural distance $\disteps{\Lambda}(x,y)>\epsilon$ can
  be computed in polynomial time.
\end{cor}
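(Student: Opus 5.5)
The corollary follows by chaining \Cref{thm:strat}, \Cref{cor:game-correctness-finite} and \Cref{thm:extract-two-valued}; the only real work is the size accounting. I read the statement, as the paper implicitly does, as being about finite coalgebras $\xi\colon X\to FX$, $\zeta\colon Y\to FY$ (finite coalgebras are trivially finitely branching) and a threshold $\epsilon$ given in binary.

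\textbf{Step 1 (strategy).} Since $\Lambda$ is polynomial-time solvable, \Cref{thm:strat} gives, in time polynomial in the input, Spoiler's winning region $W$ in the codensity game up to~$\epsilon$ on $\xi$ and $\zeta$, together with a history-free strategy~$s$ winning every position of~$W$. The Kleene iteration in that proof also records, for each $(x,y)\in W$, the stage $i\le|X\times Y|$ at which it entered~$W$. This is exactly the stratification by remaining Spoiler moves that \Cref{alg:two-valued-extraction} needs, so we get it at no extra cost.

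\textbf{Step 2 (identify the relevant pairs).} By \Cref{cor:game-correctness-finite}, $\disteps{\Lambda}(x,y)>\epsilon$ holds exactly when Duplicator does not win $(x,y)$. The game is a reachability game for Spoiler, hence determined, so this is exactly when $(x,y)\in W$. Thus the pairs of interest are precisely those in~$W$.

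\textbf{Step 3 (extraction).} Feed $s$ into \Cref{alg:two-valued-extraction}. By \Cref{thm:extract-two-valued}, this yields in polynomial time $\epsilon$-distinguishing formulae $\phi_{xy}$ for all $(x,y)\in W$, in dag representation of quartic size.

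\textbf{Main obstacle: number sizes.} Each stage computes thresholds $q=\lambda(A)(\xi(x_0))$ and stores them in modalities $\lambda_q$. Polynomial time needs these values to be computable, with polynomial bit size, from the data the solvability oracle already handles.

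Here I would argue as in \Cref{rem:constants-two-valued}. The value $q$ is exactly the quantity compared in the oracle's inequality $\lambda(B)(b)<\lambda(A)(a)-\epsilon$. Deciding that inequality in polynomial time already presupposes computing $\lambda(A)(a)$, or I would read \Cref{def:poly-solvable} as returning it along with $(A,B,\lambda)$. So each threshold has polynomial representation size, and the dag's total bit size stays polynomial.

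The combinatorial part, that each $\phi_{xy}$ is built from previously computed subformulae shared in the dag, was already handled in \Cref{thm:extract-two-valued}. It remains only to multiply the polynomial bounds together.
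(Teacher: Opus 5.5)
Your proposal is correct and follows the paper's own route. The corollary is obtained by feeding the polynomial-time Spoiler strategy from \Cref{thm:strat} into the extraction procedure of \Cref{thm:extract-two-valued}, with \Cref{cor:game-correctness-finite} (and determinacy) identifying the pairs with $\disteps{\Lambda}(x,y)>\epsilon$ as exactly Spoiler's winning region.

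Your treatment of threshold sizes matches the paper's equally informal \Cref{rem:constants-two-valued}. One correction there: deciding $\lambda(B)(b)<\lambda(A)(a)-\epsilon$ in polynomial time does not by itself require computing $\lambda(A)(a)$, so \Cref{def:poly-solvable} does not imply it. Polynomial computability of the thresholds is therefore a mild extra assumption, which you, like the paper, are making tacitly.
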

\begin{expl}
  \begin{enumerate}[wide]
  \item Continuing \Cref{expl:epsilon-distance}, we obtain Desharnais
    et al.'s expressiveness result for $\epsilon$-logic
    w.r.t.~$\epsilon$-simulation~\cite{DesharnaisEA08}, restricted to
    finitely branching labelled Markov chains, as an instance of
    \Cref{thm:hm-two-valued} (we leave a generalization of
    \Cref{thm:hm-two-valued} to countable branching as covered by
    Desharnais et al.~to future work). As a new result, we obtain by
    \Cref{thm:extract-two-valued,cor:compute-two-valued} and
    \Cref{expl:poly-solvable} that over finite systems,
    $\epsilon$-distinguishing formulae for both one-sided and
    two-sided L\'evy-Prokhorov distance on labelled Markov chains are
    computable in polynomial time and have quadratic modal
    rank. Further exploiting \Cref{expl:poly-solvable}, we obtain the
    same results for the following logics, in each case both for the
    asymmetric version and the symmetric version obtained by closing
    under duals:
  \item The logic of the modalities $\Diamond_{a,q}$ (indexed over
    labels~$a$ and thresholds~$q$) over generative probabilistic
    transition systems;
  \item a logic induced by threshold-indexed diamond
    modalities~$\Diamond_q$ over fuzzy transition systems; and
  \item a logic over metric transition systems featuring
    modalities~$\Diamond_{a,q}$ indexed over labels~$a$ (from a metric
    space) and thresholds~$q$.
  \end{enumerate}
  In these last cases, not only the respective results on computing
  distinguishing formulae but also the logics themselves and their
  expressiveness over finitely branching systems, i.e.~the respective
  instances of \Cref{thm:hm-two-valued}, appear to be new.
\end{expl}

\section{Quantitative Distinguishing Formulae}\label{sec:quant-log}

Having seen a characteristic two-valued modal logic for
threshold-based behavioural distance~$\disteps{\Lambda}$, we next
develop a properly quantitative modal logic from the given $2$-to-$\V$
predicate liftings. This logic will also be shown to be characteristic
for~$\disteps{\Lambda}$, now in the sense that behavioural distance
equals the supremum over all deviations in truth values of
quantitative modal formulae.

To this end, we now convert the given $2$-to-$\V$ predicate
liftings~$\lambda\in\Lambda$ into $\V$-to-$\V$ predicate liftings
$\gen{\lambda}$ (with components
$\gen{\lambda}_X\colon\V^X\to\V^{FX}$) -- the \emph{Sugeno modalities}
-- by putting
\begin{equation}\label{eq:sugeno-modality}
  \gen{\lambda}_X(f)(a)=\bigvee_{\epsilon\in\V}\epsilon\wedge\lambda(f_\epsilon)(a)
\end{equation}
where the $2$-valued predicate $f_\epsilon\colon X\to 2$ is given by
$f_\epsilon(x)=\top$ iff $f(x)\ge\epsilon$. We put
\begin{equation*}
  \Gen{\Lambda}=\{\gen{\lambda}\mid\lambda\in\Lambda\}.
\end{equation*}
We combine these modalities with a propositional base used standardly
in quantitative modal logics characterizing behavioural distances,
which besides minimum and maximum operators notably includes constant
shift
operators~\cite{BreugelWorrell05,wspk:van-benthem-fuzzy,KonigMikaMichalski18,WildSchroder22}. This
propositional base is designed to ensure non-expansiveness of formula
evaluation, and is correspondingly sometimes termed the
\emph{non-expansive propositional base}~\cite{GebhartEA25}.  We define
the set $\FLang(\Gen{\Lambda})$ of \emph{(quantitative)
  $\Lambda$-formulae}~$\phi,\psi$ by the grammar
\begin{equation*}
  \FLang(\Gen{\Lambda})\owns \phi,\psi ::= \top \mid \bot \mid \phi\land\psi \mid \phi\lor \psi  \mid \phi\oplus q \mid \phi\ominus q\mid
  \gen{\lambda}\phi
\end{equation*}
where $q\in\V$. The \emph{modal rank} of a formula~$\phi$ is
the maximal nesting depth of modalities~$\gen{\lambda}$ in~$\phi$.

Given a coalgebra $\xi\colon X\to FX$,
a formula~$\phi$ is interpreted as a truth map
$\Sem{\phi}\colon X\to \mathcal{V}$ given by the standard clauses for
the propositional operators ($\Sem{\top}(x)=1$, $\Sem{\bot}(x)=0$,
$\Sem{\phi\land \psi}(x) = \min\{\Sem{\phi}(x),\Sem{\psi}(x)\}$,
$\Sem{\phi\lor \psi}(x) = \max\{\Sem{\phi}(x),\Sem{\psi}(x)\}$,
$\Sem{\phi\oplus q}(x) = \Sem{\phi}(x)\oplus q$,
$\Sem{\phi\ominus q}(x) = \Sem{\phi}(x)\ominus q$ where $\oplus$,
$\ominus$ denote truncated addition and subtraction, respectively) and
by
\begin{equation*}
  \Sem{\gen{\lambda}\phi}(x) =
  \gen{\lambda}_X(\Sem{\phi})(\xi(x)).
\end{equation*}
We refer to the arising logic~$\Lang(\Gen{\Lambda})$ as
\emph{non-expansive coalgebraic modal logic}. We explicitly do not
include negation (which in the quantitative setting is interpreted as
$\Sem{\neg\phi}(x)=1\ominus\Sem{\phi}(x)$), as we also want to cover
asymmetric distances. In case~$\Lambda$ is closed under duals, we can
encode negation by taking negation normal forms
(cf.~\Cref{rem:sugeno-duals}).
\begin{expl}\label{expl:generally}
  For $F=\sfun$ and $\lambda=P$, the Sugeno modality~$S^P$, given by
  $S^P_X(f)(\mu)=\bigvee_{\epsilon\in\V}\epsilon\wedge\mu(f_\epsilon)$,
  is precisely the \emph{generally} modality used in probabilistic
  knowledge representation to model vague properties holding with high
  probability (more precisely, the original modality has only been
  considered on the subfunctor~$\dfun$ of~$\sfun$). As such, it serves
  as a computationally more tractable alternative to a modality
  \emph{probably} that coincides with the expectation
  modality~\cite{SchroderPattinson11}.
\end{expl}
\begin{rem}
  We use the term \emph{Sugeno modalities} and the
  designation~$\gen{\lambda}$ in honour of the fact
  that~\eqref{eq:sugeno-modality} essentially defines $\gen{\lambda}$
  using Sugeno integration~\cite{SugenoThesis}. In more detail, in the
  present discrete setting, a \emph{monotone measure} on a set~$X$ is
  a monotone function $g\colon\pfun(X)\to\V$ (such that additionally
  $g(\emptyset)=0$, a condition that is not needed here). Then, the
  \emph{Sugeno integral} $\dashint_X f(x) \circ g$ of a function
  $f\colon X\to\V$ over the set $X$ (in general, one can also
  integrate over subsets of~$X$ but this is not needed here) is
  \begin{equation*}
    \dashint_X f(x) \circ g = \bigvee_{\epsilon \in \V} ( \epsilon\wedge g(f_\epsilon)).
  \end{equation*}
  Thus, $\gen{\lambda}(f)(a)$ is just the Sugeno integral
  $\dashint_X f(x) \circ g$ where $g(A)=\lambda_X(A)(a)$.
\end{rem}
\noindent We have the following alternative description of
$\gen{\lambda}$:
\begin{lem}\label{lem:G-lambda-alt}
  For $f\in\V^X$, $a\in FX$, we have
  \begin{equation}\label{eq:G-lambda-set}
    \{\epsilon\wedge\lambda(f_\epsilon)(a)\mid\epsilon\ge 0\}=
    \{\epsilon\ge 0\mid \lambda(f_\epsilon)(a)\ge\epsilon\},
  \end{equation}
  and hence
  \begin{equation*}
    \gen{\lambda}(f)(a)=\bigvee\{\epsilon\ge 0\mid \lambda(f_\epsilon)(a)\ge\epsilon\}.
  \end{equation*}
\end{lem}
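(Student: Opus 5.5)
The plan is to exploit a single monotonicity fact and then verify the two inclusions of \eqref{eq:G-lambda-set} directly. Put $h(\epsilon)=\lambda(f_\epsilon)(a)$ for $\epsilon\in\V$. Whenever $\epsilon\le\epsilon'$, we have $f_{\epsilon'}\subseteq f_\epsilon$, since $f(x)\ge\epsilon'$ implies $f(x)\ge\epsilon$. Because~$\lambda$ is monotone, it follows that~$h$ is antitone, i.e.\ $h(\epsilon')\le h(\epsilon)$. This is the only property of~$\lambda$ the argument needs.

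For `$\supseteq$', take~$\epsilon$ with $h(\epsilon)\ge\epsilon$. Then $\epsilon\wedge h(\epsilon)=\epsilon$, so~$\epsilon$ lies in the left-hand set.

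For `$\subseteq$', fix $\epsilon\in\V$ and put $v=\epsilon\wedge h(\epsilon)\in\V$. We distinguish two cases.
\begin{itemize}
\item If $h(\epsilon)\ge\epsilon$, then $v=\epsilon$, which lies in the right-hand set by assumption.
\item Otherwise $v=h(\epsilon)<\epsilon$. Since $v\le\epsilon$, antitonicity of~$h$ gives $h(v)\ge h(\epsilon)=v$, so~$v$ again lies in the right-hand set.
\end{itemize}
This second case is the only step with any content: the point is to evaluate the threshold at the smaller value~$v$ itself rather than at~$\epsilon$.

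Finally, the stated formula for $\gen{\lambda}(f)(a)$ follows by taking suprema on both sides of \eqref{eq:G-lambda-set}, using the definition \eqref{eq:sugeno-modality}. No step is a real obstacle; the only care needed is to observe that~$v$ is again a legitimate index in~$\V$.
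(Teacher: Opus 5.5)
Your proposal is correct and follows essentially the same route as the paper: antitonicity of $\epsilon\mapsto\lambda(f_\epsilon)(a)$ gives $\lambda(f_v)(a)\ge\lambda(f_\epsilon)(a)\ge v$ for $v=\epsilon\wedge\lambda(f_\epsilon)(a)$, and the reverse inclusion is immediate. The paper argues this uniformly, without your case split, since $v\le\epsilon$ holds in both of your cases.
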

\begin{proof}
  `$\subseteq$': Let $\epsilon\ge 0$, and put
  $u=\epsilon\wedge\lambda(f_\epsilon)(a)$. Then $u\le\epsilon$, and
  therefore $\lambda(f_u)\ge\lambda(f_\epsilon)\ge u$.

  `$\supseteq$': Let $\epsilon\ge 0$ such that
  $\lambda(f_\epsilon)(a)\ge\epsilon$. Then
  $\epsilon=\epsilon\wedge\lambda(f_\epsilon)(a)$.
\end{proof}

\noindent
Additionally, we may alternatively express $\gen{\lambda}$ using infima:

\begin{lem}\label{lem:G-lambda-alt-inf}
  For $f\in\V^X$, $a\in FX$, we have
  \begin{equation*}
    \gen{\lambda}(f)(a)=\bigwedge\{\epsilon\ge 0\mid \lambda(f_\epsilon)(a)<\epsilon\}
    =\bigwedge_{\epsilon\ge 0} \epsilon\vee\lambda(f_\epsilon)(a).
  \end{equation*}
\end{lem}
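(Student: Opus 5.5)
The plan is to exploit the monotonicity of the map $h(\epsilon)=\lambda(f_\epsilon)(a)$ and compare the sup-description of \Cref{lem:G-lambda-alt} with the two inf-descriptions. Since $\lambda$ is monotone and $f_\epsilon\supseteq f_{\epsilon'}$ whenever $\epsilon\le\epsilon'$, the function $h$ is antitone in~$\epsilon$. Consequently the set $U=\{\epsilon\ge 0\mid h(\epsilon)\ge\epsilon\}$ is downward closed in~$\V$ (if $\epsilon'\le\epsilon\in U$ then $h(\epsilon')\ge h(\epsilon)\ge\epsilon\ge\epsilon'$). It always contains~$0$. Its complement $D=\{\epsilon\ge 0\mid h(\epsilon)<\epsilon\}$ in~$\V$ is upward closed. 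So $\V$ is split into a lower interval $U$ and an upper interval $D$. Hence $\bigvee U=\bigwedge D$, with the convention $\bigwedge\emptyset=1$; this convention matches $\bigvee U=1$ when $D=\emptyset$. By \Cref{lem:G-lambda-alt}, $\bigvee U=\gen{\lambda}(f)(a)$, which gives the first equality.

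For the second equality, put $s=\gen{\lambda}(f)(a)$ and $t=\bigwedge_{\epsilon\ge 0}\epsilon\vee h(\epsilon)$. I show $t\le s$ and $s\le t$ separately.

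For $t\le s$: every $\epsilon\in D$ satisfies $\epsilon\vee h(\epsilon)=\epsilon$. So $t\le\bigwedge D=s$ whenever $D\neq\emptyset$. If $D=\emptyset$ then $s=1$ and the bound is trivial.

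For $s\le t$: I show $\epsilon\vee h(\epsilon)\ge s$ for every~$\epsilon$.
\begin{itemize}
\item If $\epsilon\ge s$, this is immediate.
\item If $\epsilon<s=\bigvee U$, there is $u\in U$ with $u>\epsilon$, because $U$ is downward closed. Antitonicity then gives $h(\epsilon)\ge h(u)\ge u$.
\end{itemize}
This yields only $h(\epsilon)\ge u$ for such~$u$, so I take the supremum over all $u\in U$ with $u>\epsilon$. Those $u$ have supremum~$s$, since $\epsilon<s$. Hence $h(\epsilon)\ge s$, and so $\epsilon\vee h(\epsilon)\ge s$.

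The only delicate point is this sup/inf boundary bookkeeping: whether $s$ itself lies in $U$ or in $D$ (no continuity of $h$ is available), and the edge cases $D=\emptyset$ and $s=0$. The argument above avoids needing to know which of $U$ or $D$ contains~$s$, because in each step it uses only strict inequalities relative to~$s$.
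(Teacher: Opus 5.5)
Your proof is correct. The first equality follows the paper's argument exactly: antitonicity of $\epsilon\mapsto\lambda(f_\epsilon)(a)$ splits $\V$ into the down-set $U$ and the up-set $D$, so $\bigvee U=\bigwedge D$, and \Cref{lem:G-lambda-alt} finishes.

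The second equality is where you differ. The paper only says it is ``proved analogously to \Cref{lem:G-lambda-alt}'', i.e.\ by identifying the set $\{\epsilon\vee\lambda(f_\epsilon)(a)\mid\epsilon\ge 0\}$ with a set of thresholds. You instead bound $t=\bigwedge_{\epsilon}\epsilon\vee\lambda(f_\epsilon)(a)$ from both sides against $s=\bigvee U=\bigwedge D$.

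Your route is a little longer but more careful. The literal dual of \Cref{lem:G-lambda-alt} gives $\{\epsilon\vee\lambda(f_\epsilon)(a)\mid\epsilon\ge 0\}=\{\epsilon\mid\lambda(f_\epsilon)(a)\le\epsilon\}$, with a non-strict inequality, and this set can differ from $D$. For example, if $\lambda(f_\epsilon)(a)=1$ for all $\epsilon$, the left-hand set is $\{1\}$ while $D=\emptyset$. So the set-identity route still needs one more observation: this set and $D$ have the same infimum, since any $\epsilon'>\epsilon$ with $\lambda(f_\epsilon)(a)\le\epsilon$ lies in $D$. Your direct estimate uses only strict inequalities relative to $s$, so it avoids that step and handles $D=\emptyset$ and $s=0$ explicitly.

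One small misattribution: the existence of $u\in U$ with $u>\epsilon$ follows from $\epsilon<\bigvee U$ itself, not from $U$ being downward closed. This does not affect correctness.
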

\begin{proof}
  The two sets $U = \{\epsilon\ge 0\mid \lambda(f_\epsilon)(a)\ge\epsilon\}$ and $V = \{\epsilon\ge 0\mid \lambda(f_\epsilon)(a)<\epsilon\}$ are disjoint and their union is the unit interval $[0,1]$.
  As the map $\epsilon\mapsto\lambda(f_\epsilon)(a)$ is antimonotone by monotonicity of $\lambda$, we additionally have that $\epsilon < \epsilon'$ for every $\epsilon\in U, \epsilon'\in V$, as well as $\bigvee U = \bigwedge V$, establishing the first equality by \Cref{lem:G-lambda-alt}.

  The second equality is proved analogously to \Cref{lem:G-lambda-alt}.\pwnote{But still do it anyways}
\end{proof}

\begin{rem}\label{rem:sugeno-duals}
  The Sugeno construction preserves duals, in the sense that taking the Sugeno modality of the dual or a $2$-to-$\V$ predicate lifting is the same as taking the dual of its Sugeno modality.
  This is a consequence of \Cref{lem:G-lambda-alt-inf}; details are in the appendix.
  This means that if $\Lambda$ is closed under duals, then $\Gen{\Lambda}$ is also closed under duals, and we can therefore recursively encode negation.
  We use De Morgan's laws for the Boolean operators, put
  $\neg(\phi\oplus q) = (\neg\phi)\ominus q$ and
  $\neg(\phi\ominus q) = (\neg\phi)\oplus q$ for constant shifts,
  and use duals to negate modalities:
  \[ \neg \gen{\lambda}(\phi) = \gen{\overline{\lambda}}(\neg\phi). \]
\end{rem}

\begin{rem}
  We discuss briefly how the Sugeno modalities arise naturally from
  a  categorical perspective. It is well-known
	\pnnote{who to cite}
	that every partially ordered set is isomorphic to
	its set of principal ideals and
	that the principal ideals on a complete lattice correspond to
	maps from it to the two-element chain that
	send suprema to infima.
	\pnnote{here we assume that $\V$ is the unit interval equipped with the natural order}
	From this and the fact that $\set$ is Cartesian closed,
	we obtain a natural transformation
	$\sigma \colon \set(-,\V) \to \set(\V,\set(-,2))$
	whose $X$-component sends a function
	$f \colon X \to \V$ to the function that maps $\epsilon \in \V$ to $f_\epsilon$.
	Furthermore, the natural transformation $\sigma$ is a section%
	\pnnote{this corestricts to an iso as described in pedro/u.tex}
	of the natural transformation $\rho \colon \set(\V,\set(-,2)) \to \set(-,\V)$
	whose $X$-component sends a function $\phi \colon \V \to \set(X,2)$ to
	the function that maps $x \in X$ to $\bigvee \{\epsilon \in \V \mid \phi(\epsilon)(x) = \top\}$.
	This makes it easy to construct a $\V$-to-$\V$ predicate lifting from a
	$2$-to-$\V$ predicate lifting $\lambda$ and
	a natural transformation $\mu \colon \set(\V,-)^2 \to \set(\V,-)$
	as the following composite:
	\begin{center}
		\begin{tikzcd}
			\set(-,\V)  & \set(\V,\set(-,2)) \\
			            & \set(\V,\set(\V,\set(F-,2)) \\
			\set(F-,\V) & \set(\V,\set(F-,2)).
		\ar[from=1-1, to=1-2, "\sigma"]
		\ar[from=1-1, to=3-1]
		\ar[from=1-2, to=2-2, "\sigma_F \cdot \lambda \cdot -"]
		\ar[from=2-2, to=3-2, "\mu"]
		\ar[from=3-2, to=3-1, "\rho_F"]
		\end{tikzcd}
	\end{center}
	The functor $\set(\V,-)$ is part of a monad whose $X$-component of the multiplication
	sends $\phi \in \set(\V,\set(\V,X))$ to
	the map that assigns $v \in \V$ to $\phi(v)(v)$.
	The $\V$-to-$\V$ predicate lifting $\gen{\lambda}$ is induced by this multiplication as described above.
	Indeed, unravelling the composite
	yields the description of $\gen{\lambda}$ introduced in \Cref{lem:G-lambda-alt}.
\end{rem}
\noindent Over finite sets, the supremum defining~$\gen{\lambda}$ is
actually a maximum:
\begin{lem}\label{lem:G-lambda-max}
  Let~$X$ be a finite set, and let $f\in\V^X$, $a\in FX$; put
  $q=\gen{\lambda}_X(f)(a)$. Then the set~\eqref{eq:G-lambda-set}
  defining $\gen{\lambda}_X(f)(a)$ as its supremum has a greatest
  element,~$q$; that is, $\lambda_X(f_q)(a)\ge q$, and
  $\gen{\lambda}_X(f)(a)=q\wedge\lambda(f_q)(a)$.
\end{lem}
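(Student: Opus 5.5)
The plan is to use finiteness of~$X$ to reduce the supremum in \Cref{lem:G-lambda-alt} to a supremum over finitely many intervals on which the map $\epsilon\mapsto f_\epsilon$ is constant.

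\textbf{Constancy on intervals.} Enumerate the finitely many values of~$f$ together with~$0$ as $0=v_0<v_1<\dots<v_n\le 1$. For $\epsilon\in(v_{i-1},v_i]$ we have $f_\epsilon=\{x\mid f(x)\ge v_i\}=f_{v_i}$. For $\epsilon>v_n$ we have $f_\epsilon=\emptyset$, and $f_0=X$. Consequently the antimonotone map $\epsilon\mapsto\lambda(f_\epsilon)(a)$ is a step function: it is constant on each interval $(v_{i-1},v_i]$ and on $(v_n,1]$.

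\textbf{The supremum is attained.} Let $U=\{\epsilon\ge0\mid\lambda(f_\epsilon)(a)\ge\epsilon\}$, so that $q=\bigvee U$ by \Cref{lem:G-lambda-alt}. The set~$U$ contains~$0$, and it is downward closed, since $\epsilon'\le\epsilon$ gives $\lambda(f_{\epsilon'})(a)\ge\lambda(f_\epsilon)(a)\ge\epsilon\ge\epsilon'$. We want $q\in U$. Suppose $q>0$ and let $I$ be the piece (an interval $(v_{i-1},v_i]$ or $(v_n,1]$) containing~$q$. Elements of~$U$ approach~$q$ from below, and for all $\epsilon\in I$ sufficiently close to~$q$ the value $c=\lambda(f_\epsilon)(a)$ is the constant value of the step function on~$I$. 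Hence $c\ge\epsilon$ for all $\epsilon<q$ near~$q$, giving $c\ge q$. Since $q\in I$, also $\lambda(f_q)(a)=c\ge q$, so $q\in U$. The left-open, right-closed shape of the pieces is exactly what makes this limit argument work: the value at~$q$ equals the value just below~$q$. The case $q=0$ is trivial, as $0\in U$.

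\textbf{Conclusion.} From $\lambda(f_q)(a)\ge q$ we get $q\wedge\lambda(f_q)(a)=q=\gen{\lambda}_X(f)(a)$. Moreover, by the equality of sets in \Cref{lem:G-lambda-alt}, $q$ is the greatest element of~\eqref{eq:G-lambda-set}.

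The only delicate step is the limit argument. It requires identifying the correct half-open intervals and confirming that $q$ cannot sit at a jump where the value drops, which is ruled out by the left-continuity just established.
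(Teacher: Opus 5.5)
Your proof is correct and takes essentially the same route as the paper. Both use finiteness of $X$ to show that $\epsilon\mapsto f_\epsilon$ is constant on a left neighbourhood of $q$ that includes $q$ itself: the paper gets this from stabilization of the descending chain $(f_\epsilon)_{\epsilon<q}$ and the identity $f_q=f_{\epsilon_0}$, while you get it from the explicit half-open intervals $(v_{i-1},v_i]$ between consecutive values of $f$. Both arguments then pass $\lambda(f_\epsilon)(a)\ge\epsilon$, valid for all $\epsilon<q$ by downward closure, to the limit $\epsilon\to q$.
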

\begin{proof}
  Notice that the right hand side of~\eqref{eq:G-lambda-set} is
  clearly downwards closed, and hence contains all
  $\epsilon\in[0,q)$. Since~$X$ is finite, the descending chain of
  subsets $f_{\epsilon}\subseteq X$ indexed over $\epsilon\in[0,q)$
  becomes stationary, say at~$\epsilon_0$. Then for all
  $x\in f_{\epsilon_0}$, we have $x\in f_{\epsilon}$ for all
  $\epsilon\in [\epsilon_0,q)$, that is, $f(x)\ge\epsilon$. Thus,
  $f(x)\ge q$, i.e.~$x\in f_q$; this shows that
  $f_q=f_{\epsilon_0}$. We thus have
  $\lambda_X(f_q)(a)=\lambda_X(f_\epsilon)(a)\ge\epsilon$ for all
  $\epsilon\in[\epsilon_0,q)$, and hence $\lambda_X(f_q)(a)\ge q$.
\end{proof}

\noindent We now revisit the $2$-to-$\V$ predicate liftings from
\Cref{expl:L-Lambda} and describe the induced Sugeno modalities. In
some cases, we find that the Sugeno modalities coincide with known
modalities from the literature.

\begin{propn}
  \label{propn:sugeno-modalities}
  Using the $2$-to-$\V$ predicate liftings $\lambda$ from
  \Cref{expl:L-Lambda} we obtain the following Sugeno modalities
  $\gen{\lambda}$. In each case $f\colon X\to \mathcal{V}$.
  \begin{enumerate}[wide]
  \item\label{item:L-Sugeno-lp} For the case of the subdistribution
    functor and the $2$-to-$\V$ predicate lifting $P$ (with
    $P(A)(\mu)=\mu(A)$, for $A\in 2^X$, $\mu\in\mathcal{S}X$), we
    obtain the Sugeno modality
    \[ \gen{P}(f)(\mu) = \bigvee_{\epsilon\ge 0} \epsilon \land
      \mu(f_\epsilon). \]
  \item\label{item:L-Sugeno-pts} For the case of the functor
    $FX=\mathcal{D}(\mathcal{A}\times X)$ and the $2$-to-$\V$ predicate
    lifting $\Diamond_a$, we obtain the Sugeno modality
    \[ \gen{\Diamond_a}(f)(\mu) =
       \gen{P}(f)(\mu\big|_{\{a\}\times X}) \]
     for $\mu\in \mathcal{D}(\mathcal{A}\times X)$, where
     $\mu\big|_{\{a\}\times X}$ is the restriction of the
     distribution~$\mu$ to a subdistribution on the subset
     $\{a\}\times X$ and $\gen{P}$ is the Sugeno modality from
     (\ref{item:L-Sugeno-lp}).
  \item\label{item:L-Sugeno-fts} For the case of the fuzzy powerset
    functor and the $2$-to-$\V$ predicate
    lifting $\Diamond$, we obtain the Sugeno modality
    \[ \gen{\Diamond}(f)(g) = \bigvee_{x\in X} (g(x)\land f(x)) \]
    for $g\in \mathcal{V}^X$.
  \item\label{item:L-Sugeno-mts} \emph{Metric transition systems:} the
    $2$-to-$\V$ predicate lifting $\Diamond_a$ results in the Sugeno
    modality
    \[ \gen{\Diamond_a}(f)(S) = \bigvee_{(b,x)\in S} ((1-d(a,b))\land
      f(x)) \]
    for $S\subseteq \mathcal{A}\times X$.
  \item\label{item:L-Sugeno-convex} For the case of the convex
    powerset functor and the $2$-to-$\V$ predicate lifting~$\Diamond$,
    we obtain the Sugeno modality
    \[ \gen{\Diamond}(f)(U) = \bigvee_{\mu\in U} \gen{P}(f)(\mu) \]
    for $U\in \mathcal{C}X$, i.e., for $U$ being a convex set of
    distributions, where again~$\gen{P}$ is the Sugeno modality from
    (\ref{item:L-Sugeno-lp}).
  \end{enumerate}
\end{propn}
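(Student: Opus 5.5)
Each item is obtained by unfolding the definition \eqref{eq:sugeno-modality} of $\gen{\lambda}$, or the equivalent forms in \Cref{lem:G-lambda-alt} and \Cref{lem:G-lambda-alt-inf}, for the respective lifting~$\lambda$ from \Cref{expl:L-Lambda}, followed by a short lattice calculation.

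Item~\ref{item:L-Sugeno-lp} needs no work: substituting $P(f_\epsilon)(\mu)=\mu(f_\epsilon)$ into \eqref{eq:sugeno-modality} gives the claim. For item~\ref{item:L-Sugeno-pts} we note that $\Diamond_a(A)(\mu)=\mu(\{a\}\times A)=\mu|_{\{a\}\times X}(A)$, where the restriction is regarded as a subdistribution on~$X$. Hence $\Diamond_a(f_\epsilon)(\mu)=P(f_\epsilon)(\mu|_{\{a\}\times X})$ for every~$\epsilon$, and taking the supremum over~$\epsilon$ yields the stated formula. Item~\ref{item:L-Sugeno-convex} is handled in the same way: $\Diamond(f_\epsilon)(U)=\bigvee_{\mu\in U}\mu(f_\epsilon)$, and we exchange suprema using $\epsilon\wedge\bigvee_{\mu}t_\mu=\bigvee_{\mu}(\epsilon\wedge t_\mu)$, which is infinite distributivity of the complete chain~$\V$. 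This gives $\bigvee_\epsilon\bigvee_{\mu\in U}\epsilon\wedge\mu(f_\epsilon)=\bigvee_{\mu\in U}\gen{P}(f)(\mu)$.

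The supremum-based diamonds in items~\ref{item:L-Sugeno-fts} and~\ref{item:L-Sugeno-mts} require the one genuine computation. We treat the fuzzy case; the metric case is the same argument with weights $w(b,x)=1-d(a,b)$ in place of $g(x)$, indexed over pairs $(b,x)\in S$. By distributivity,
\[\gen{\Diamond}(f)(g)=\bigvee_{\epsilon}\bigvee_{x:\,f(x)\ge\epsilon}(\epsilon\wedge g(x)),\]
and we prove this equals $\bigvee_{x}(g(x)\wedge f(x))$ by two inequalities.
\begin{itemize}
\item For `$\le$', each term with $f(x)\ge\epsilon$ satisfies $\epsilon\wedge g(x)\le f(x)\wedge g(x)$.
\item For `$\ge$', fix~$x$ and choose $\epsilon=f(x)$. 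Then $x\in f_\epsilon$, and the corresponding term is exactly $f(x)\wedge g(x)$.
\end{itemize}

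The main point needing care is this exchange of the supremum over thresholds~$\epsilon$ with the supremum inside the lifting. No finiteness assumption is needed, since $\V$ is a complete chain and binary meets distribute over arbitrary joins. Otherwise the verification is routine.
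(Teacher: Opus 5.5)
Your proof is correct. Items 1, 2 and 5 match the paper's argument; for item 5 the paper also distributes $\epsilon\wedge-$ over $\bigvee_{\mu\in U}$ and swaps the two suprema.

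For items 3 and 4 your route differs somewhat from the paper's. The paper starts from the set characterization of \Cref{lem:G-lambda-alt}, i.e.\ it shows $\bigvee\{\epsilon\mid\bigvee_{x\in f_\epsilon}g(x)\ge\epsilon\}=\bigvee_{x}(g(x)\wedge f(x))$. Because the inner supremum need not be attained, its ``$\le$'' direction approximates from below: it takes $\epsilon'<\bar\epsilon$ and extracts some $x'\in f_{\epsilon'}$ with $g(x')\ge\epsilon'$. Its ``$\ge$'' direction must choose $\bar\epsilon=g(x')\wedge f(x')$ rather than $f(x')$, so that $\bar\epsilon$ actually belongs to the set.

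You instead keep the defining formula \eqref{eq:sugeno-modality} and push $\epsilon\wedge-$ inside the join using infinite distributivity of the complete chain $[0,1]$. After that, both inequalities are termwise, no approximation is needed, and choosing $\epsilon=f(x)$ suffices. This is shorter and handles items 3--5 uniformly. The only cost is citing the frame law, which the paper's approximation step in effect reproves for this special case.

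Your remark that no finiteness is needed is also right. Fuzzy sets $g$ and label-state sets $S$ may have infinite support, and neither argument uses finiteness.
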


As already noted in \Cref{expl:generally}, the Sugeno modality from
Item~(\ref{item:L-Sugeno-lp}) is the \emph{generally} modality from
probabilistic knowledge representation. The Sugeno modality for fuzzy
powerset identified in Item~(\ref{item:L-Sugeno-fts}) coincides with
the usual fuzzy diamond modality, whose induced non-expansive modal
logic characterizes bisimulation distance on $\V$-valued relational
models~\cite{wspk:van-benthem-fuzzy}.

For metric transition systems (Item~(\ref{item:L-Sugeno-mts})), we
again obtain the usual metric diamond modality from the literature
whose induced non-expansive modal logic characterizes behavioural
metrics for metric transition
systems~\cite{afs:linear-branching-metrics,bgkm:hennessy-milner-galois,fswbgkm:quantitative-graded-semantics}.

\begin{defn}[Sugeno logical distance]
  Let $\xi\colon X\to FX$, $\zeta\colon Y\to FY$ be $F$-coalgebras.
  The \emph{($\Lambda$-)-Sugeno logical distance}
  $\logdist_{\Gen{\Lambda}}(c,d)$ from a state $c\in X$ to a state
  $d\in Y$ is
  \begin{equation*}
    \logdist_{\Gen{\Lambda}}(c,d)=\bigvee_{\phi\in\FLang(\Gen{\Lambda})}\Sem{\phi}(d)-\Sem{\phi}(c).
  \end{equation*}
\end{defn}
\noindent Notice that like threshold-based $\Lambda$-behavioural
distance, $\Lambda$-Sugeno distance is in general only a hemimetric,
i.e.~may fail to be symmetric; by \Cref{rem:sugeno-duals}, it is
symmetric, i.e.~a pseudometric, if~$\Lambda$ is closed under
duals. From \Cref{thm:lp-kantorovich-coalg} proved later, it follows
by general
results~\cite{KonigMikaMichalski18,WildSchroder21,ForsterEA23} that
the non-expansive coalgebraic modal logic $\Lang(\Gen{\Lambda})$ is
\emph{expressive}, i.e.~that $\Lambda$-Sugeno logical
distance~$\logdist_{\Gen{\Lambda}}$ coincides with threshold-based
$\Lambda$-behavioural distance~$\disteps{\Lambda}$. For simplicity, we
do opt to base the inequality
$\logdist_{\Gen{\Lambda}}\le \disteps{\Lambda}$ on
\Cref{thm:lp-kantorovich-coalg} and the mentioned general results; for
the (harder) inequality
$\logdist_{\Gen{\Lambda}}\ge \disteps{\Lambda}$, we give a direct (and
rather simpler proof) for the present case in the appendix, obtaining

\begin{thm}[Sugeno expressivity]\label{thm:expr-sugeno}
  $\Lambda$-Sugeno logical distance $\logdist_{\Gen{\Lambda}}$ and
  $\laxv_\Lambda$-behavioural distance $\disteps{\Lambda}$ coincide on
  finitely branching $F$-coalgebras.
\end{thm}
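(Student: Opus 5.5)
The two inequalities are handled separately, as the paper indicates. For $\logdist_{\Gen{\Lambda}}\le\disteps{\Lambda}$ I rely on \Cref{thm:lp-kantorovich-coalg}: $\laxv_\Lambda$ is the Kantorovich lax extension induced by $\Gen{\Lambda}$. The general results cited then give non-expansiveness of formula evaluation w.r.t.\ $d_{\laxv_\Lambda}$, i.e.\ $\Sem{\phi}(d)-\Sem{\phi}(c)\le d_{\laxv_\Lambda}(c,d)$, and \Cref{prop:lp-epsilon-sim} identifies $d_{\laxv_\Lambda}$ with $\disteps{\Lambda}$.

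For $\logdist_{\Gen{\Lambda}}\ge\disteps{\Lambda}$ I argue directly on finitely branching coalgebras. Suppose $\disteps{\Lambda}(x,y)>\epsilon$; I want a formula $\phi$ with $\Sem{\phi}(x)-\Sem{\phi}(y)\ge\epsilon$, up to orientation conventions; the sign in the definition presumably intends $\Sem\phi(c)-\Sem\phi(d)$. By \Cref{thm:hm-two-valued} there is a two-valued $\epsilon$-distinguishing formula $\phi$ with $x\models_0\phi$ and $y\not\models_\epsilon\phi$.

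The plan is to translate each two-valued $\phi$ into a quantitative $\phi^\sharp$ satisfying
\[
x\models_0\phi\Rightarrow\Sem{\phi^\sharp}(x)=1\quad\text{and}\quad y\not\models_\epsilon\phi\Rightarrow\Sem{\phi^\sharp}(y)\le 1-\epsilon.
\]
More robustly, I would prove by induction the graded statement $\Sem{\phi^\sharp}(z)\le 1-\delta$ whenever $z\not\models_\delta\phi$, uniformly in $\delta$.

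Booleans translate homomorphically. For $\lambda_q\psi$ I would set
\[
(\lambda_q\psi)^\sharp=\bigl(\gen{\lambda}(\psi^\sharp\ominus(1-q))\oplus(1-q)\bigr),
\]
so that truth values are shifted to make the threshold $q$ correspond to the Sugeno cut. Then I check the inductive claim using \Cref{lem:G-lambda-alt}, \Cref{lem:G-lambda-alt-inf} and \Cref{lem:G-lambda-max}, where finiteness of the relevant supports is guaranteed by finite branching, together with monotonicity of $\lambda$:
\begin{itemize}
\item If $z\models_0\lambda_q\psi$, the cut of the shifted predicate at level $q$ contains $\Sem{\psi}_0$, so $\gen{\lambda}\ge q$ and the value is $1$.
\item If $z\not\models_\delta\lambda_q\psi$, then every $w$ in the relevant cut satisfies $w\models_\delta\psi$ by the induction hypothesis. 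Hence $\lambda(\text{cut})<q-\delta$, and the infimum form of \Cref{lem:G-lambda-alt-inf} bounds $\gen{\lambda}$ by $q-\delta$.
\end{itemize}

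The main obstacle is getting these threshold shifts to interact correctly. With a single shift the cut level and the $\lambda$-value threshold are tied together: the Sugeno integral compares $\epsilon$ with $\lambda(f_\epsilon)$ on the same scale. So I expect the translation will need to use both $\ominus$ and $\oplus$ with constants depending on $q$ and $\epsilon$. If that fails, I would fall back to building the quantitative formula directly from Spoiler's strategy, mimicking \Cref{alg:two-valued-extraction}: use the formula $\gen{\lambda}(\bigvee_{x\in A}\bigwedge_{y\notin B}\phi_{xy})$, suitably shifted, and take \Cref{lem:G-lambda-max} as the key tool for evaluating it exactly on finite supports.
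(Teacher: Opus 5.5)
Your proof is correct. For the hard inequality $\logdist_{\Gen{\Lambda}}\ge\disteps{\Lambda}$ it takes a genuinely different route from the paper. The easy inequality is handled exactly as in the paper, and your reading of the orientation ($\Sem{\phi}(c)-\Sem{\phi}(d)$) is the one the paper's proof actually uses.

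The paper stays inside the quantitative logic and argues Hennessy--Milner style. It shows that $\{(c,d)\mid\logdist_{\Gen{\Lambda}}(c,d)\le\epsilon\}$ is an $\epsilon$-$\Lambda$-simulation by contradiction. It shifts the quantitative distinguishing formulae $\phi_{x,B}$ so that they take value $q$ at~$x$, and forms $\bigvee_{x\in A}\bigwedge_{B}\phi_{x,B}$ over the finitely many $B\subseteq Y'$ with $\lambda(B)(\zeta(d))\ge q-\epsilon$. It then uses finite branching and \Cref{lem:G-lambda-max} to pass from $\Sem{\gen{\lambda}\phi}(d)\ge q-\epsilon$ to $\lambda(\Sem{\phi}_{q-\epsilon}\cap Y')(\zeta(d))\ge q-\epsilon$.

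You instead reduce to \Cref{thm:hm-two-valued} via a compositional translation. The worry in your last paragraph is unfounded: the translation you wrote down already works, with shifts depending only on~$q$. In the modal case of (b), $z\not\models_\delta\lambda_q\psi$ forces $q>\delta$, since otherwise $q\ominus\delta=0$. So for $f=\Sem{\psi^\sharp}\ominus(1-q)$ the induction hypothesis gives $f(w)\le(1-\delta)\ominus(1-q)=q-\delta$ whenever $w\notin\Sem{\psi}_\delta$. Hence for every $\eta>q-\delta$ we have $f_\eta\subseteq\Sem{\psi}_\delta$ and $\lambda(f_\eta)(\xi(z))<q-\delta<\eta$. \Cref{lem:G-lambda-alt} then already gives $\gen{\lambda}(f)(\xi(z))\le q-\delta$, and the outer shift yields $\Sem{(\lambda_q\psi)^\sharp}(z)\le 1-\delta$. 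In particular, neither finite branching nor \Cref{lem:G-lambda-max} is needed for the translation itself. Finite branching enters only through \Cref{thm:hm-two-valued}, and your fallback plan is essentially the paper's argument.

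Your route buys modularity and an explicit linear-size embedding of the two-valued logic into $\Lang(\Gen{\Lambda})$, sending $\models_0$ to value~$1$ and $\not\models_\delta$ to values $\le 1-\delta$. This makes the relation between the two characteristic logics transparent. With a strict version of (b), which holds under finite branching, it would also transfer \Cref{thm:extract-two-valued} to quantitative formulae. The paper's route is self-contained, does not depend on the two-valued development, and mirrors the normalization step of \Cref{alg:extraction}.
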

\noindent In the following, we focus mainly on the algorithmic
construction of distinguishing formulae, which now witness lower
bounds for behavioural distance directly by formula evaluation:
\begin{defn}
  Let $x\in X$, $y\in Y$ be states in $F$-coalgebras
  $\xi\colon X\to FX$, $\zeta\colon Y\to FY$, respectively. A
  quantitative formula $\phi\in\FLang(\Gen{\Lambda})$ is
  \emph{$\epsilon$-distinguishing} for $(x,y)$ if
  $\Sem{\phi}(y)<\Sem{\phi}(x)-\epsilon$.
\end{defn}
By \Cref{thm:expr-sugeno}, if~$\xi$ and~$\zeta$ are finitely
branching, then an $\epsilon$-distinguishing formula for~$(x,y)$
exists whenever $\disteps{\Lambda}(x,y)>\epsilon$. Over finite
coalgebras, we again obtain distinguishing formulae from winning
strategies of Spoiler in the threshold-quantitative codensity game,
proceeding in a similar fashion as in the two-valued case
(\Cref{sec:two-valued-logic}). Specifically, let~$\xi\colon X\to FX$,
$\zeta\colon Y\to FY$ be finite $F$-coalgebras, and let~$s$ be a
memoryless winning strategy for Spoiler that wins all positions in the
winning region of Spoiler in the codensity game up to~$\epsilon$
on~$\xi$ and~$\zeta$ as per \cref{def:codensity-game}, noting as
before that Spoiler wins any position in their winning region in at
most~$k=|X|\times|Y|$ moves (but at least one). We compute
$\epsilon$-distinguishing formulae for all pairs $(x,y)$ of states in
Spoiler's winning region in dag representation using dynamic
programming, with stages indexed by the remaining number of Spoiler
moves. We give the algorithm for stage $i>0$, given a position
$(x_0,y_0)$ won by~$S$ in~$i$ moves:

\begin{alg}[Extraction of distinguishing
  formulae]\label{alg:extraction} ~

  \begin{enumerate}
  \item Put $(A,B)=s(x_0,y_0)$ (Spoiler's winning move). By definition
    of the game, there exists $\lambda\in\Lambda$ such that
    $\lambda(B)(\zeta(y))<q-\epsilon$ where $q=\lambda(A)(\xi(x))$.
  \item For every possible reply by~$D$, i.e.~every pair
    $(x,y)\in X\times Y$ such that $x\in A$ and $y\notin B$,~$S$ wins
    $(x,y)$ in at most $i-1$ moves, so we have already computed an
    $\epsilon$-distinguishing formula $\phi_{xy}$ for $(x,y)$ in
    previous stages. By applying constant shifts, we normalize
    the~$\phi_{x,y}$ so that $\Sem{\phi_{xy}}(x)=q$.
  \item Put
    \begin{equation}
      \label{eq:distinguishing-phi}
      \textstyle \phi=\bigvee_{x\in A}\bigwedge_{y\in Y\setminus B}\,\phi_{xy}.
    \end{equation}
  \item Assign the $\epsilon$-distinguishing formula
    $\phi_{x_0y_0}=\gen{\lambda}\phi$ to $(x_0,y_0)$.
  \end{enumerate}
\end{alg}
\begin{rem}[Constants in shifts]
  Like in the two-valued case, we have admitted real-valued shifts in
  the grammar but note that the above algorithm uses only rational
  shifts provided that the data in the given models is rational in the
  same sense as in \Cref{rem:constants-two-valued}.
\end{rem}

\noindent The correctness of the algorithm and the complexity of the
formulae it computes are formulated as follows:
\begin{thm}\label{thm:extract-quant}
  Given finite $F$-coalgebras $\xi\colon X\to FX$,
  $\zeta\colon Y\to FY$ and a Spoiler strategy~$s$ for the codensity
  game up to~$\epsilon$ on~$\xi$ and~$\zeta$ that wins all positions
  in Spoiler's winning region, \cref{alg:extraction} constructs
  $\epsilon$-distinguishing formulae~$\phi_{xy}$ for all pairs of
  states $(x,y)\in X\times Y$ such that
  $\disteps{\Lambda}(x,y)>\epsilon$. The algorithm runs in polynomial
  time, and the formulae~$\phi_{x,y}$ it constructs are of quadratic
  modal rank and polynomial, specifically quartic, dag size.
\end{thm}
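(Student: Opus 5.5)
Following the proof of \Cref{thm:extract-two-valued}, I would argue by induction on the stage~$i$, i.e.\ on the number of remaining Spoiler moves. The claim at each stage is that the formula $\phi_{x_0y_0}=\gen{\lambda}\phi$, with~$\phi$ as in~\eqref{eq:distinguishing-phi}, satisfies $\Sem{\phi_{x_0y_0}}(y_0)<\Sem{\phi_{x_0y_0}}(x_0)-\epsilon$. By \Cref{cor:game-correctness-finite}, the pairs with $\disteps{\Lambda}(x,y)>\epsilon$ are exactly Spoiler's winning region, so this covers all relevant pairs.

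First I would justify the normalization step. For each reply $(x,y)$ with $x\in A$, $y\notin B$, the inductive hypothesis gives $\Sem{\phi_{xy}}(y)<\Sem{\phi_{xy}}(x)-\epsilon$, so in particular $\Sem{\phi_{xy}}(x)>\epsilon$. I shift by $\oplus(q-v)$ if $v=\Sem{\phi_{xy}}(x)<q$, and by $\ominus(v-q)$ if $v>q$. Here $q=\lambda(A)(\xi(x_0))>\epsilon$, since Spoiler's move is legal. Shifts are non-expansive, so the distinguishing gap is preserved. The caveat is truncation: if $\Sem{\phi_{xy}}(y)$ is cut off at~$0$ under $\ominus$, the gap becomes $q-0=q>\epsilon$, which is still fine. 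Under $\oplus$ no truncation occurs at the top, since the value at~$x$ is at most $q\le 1$. So after normalization we have $\Sem{\phi_{xy}}(x)=q$ and $\Sem{\phi_{xy}}(y)<q-\epsilon$.

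Next I evaluate~$\phi$. For $x\in A$, every conjunct takes value~$q$ at~$x$, so $\Sem{\phi}(x)\ge q$. If $Y\setminus B$ is empty, the empty conjunction is~$\top$, which is again fine. Hence $A\subseteq\phi_q$, and by monotonicity $\lambda(\phi_q)(\xi(x_0))\ge q$. By \Cref{lem:G-lambda-alt}, this gives $\Sem{\gen{\lambda}\phi}(x_0)\ge q$.

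For $y_0$, take any $y\notin B$ and any $x\in A$. The conjunct $\phi_{xy}$ has value $<q-\epsilon$ at~$y$, so $\Sem{\phi}(y)<q-\epsilon$, using finiteness of~$A$ for the maximum. Hence for every $\delta\ge q-\epsilon$ we have $\phi_\delta\subseteq B$, and therefore $\lambda(\phi_\delta)(\zeta(y_0))\le\lambda(B)(\zeta(y_0))=:p<q-\epsilon$.

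Now I apply \Cref{lem:G-lambda-alt-inf} with $\delta$ chosen strictly between $p$ and $q-\epsilon$. Then $\lambda(\phi_\delta)(\zeta(y_0))\le p<\delta$, so $\Sem{\gen{\lambda}\phi}(y_0)\le\delta<q-\epsilon\le\Sem{\gen{\lambda}\phi}(x_0)-\epsilon$. Choosing~$\delta$ in this open interval is what secures the strict inequality, and this is the delicate point of the argument. The base stage is the special case in which $Y\setminus B=\emptyset$ or $A=\emptyset$ makes~$\phi$ trivial, and the same argument applies.

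Complexity is as in the two-valued case. There are at most $k=|X||Y|$ stages, which gives modal rank at most~$k$, i.e.\ quadratic. Each new node~$\phi$ uses at most $|X||Y|$ shifted references to earlier formulae. Over $k$ positions this adds $O(k\cdot|X||Y|)$ nodes, which is quartic dag size. Each shift constant is obtained by evaluating an earlier formula at one state, which takes polynomial time under the assumptions of \Cref{rem:constants-two-valued}.
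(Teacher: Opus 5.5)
Your overall plan is the paper's, and most steps are correct:
\begin{itemize}
\item induction on the stage;
\item normalizing the $\phi_{xy}$ by shifts so that $\Sem{\phi_{xy}}(x)=q$ and $\Sem{\phi_{xy}}(y)<q-\epsilon$ (your truncation case analysis, not non-expansiveness, is what justifies this);
\item $A\subseteq\Sem{\phi}_q$, giving $\Sem{\gen{\lambda}\phi}(x_0)\ge q$ by \Cref{lem:G-lambda-alt};
\item $\Sem{\phi}(y)<q-\epsilon$ for every $y\notin B$, and the complexity count.
\end{itemize}

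The gap is in the final step at $y_0$. You prove $\Sem{\phi}_\delta\subseteq B$ (writing $\Sem{\phi}_\delta$ for the $\delta$-cut) only for $\delta\ge q-\epsilon$. You then choose $\delta$ strictly below $q-\epsilon$ and use $\lambda(\Sem{\phi}_\delta)(\zeta(y_0))\le p$. For $\delta<q-\epsilon$ the cut $\Sem{\phi}_\delta$ is larger than $\Sem{\phi}_{q-\epsilon}$, so monotonicity of $\lambda$ points the wrong way. The claim in fact fails for an arbitrary $\delta\in(p,q-\epsilon)$. Suppose some $y\notin B$ has $p<\Sem{\phi}(y)<q-\epsilon$. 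Then for $\delta\le\Sem{\phi}(y)$ the cut contains $y$, and $\lambda(\Sem{\phi}_\delta)(\zeta(y_0))$ may well be $\ge\delta$ (e.g.\ for $\lambda=P$ with $\zeta(y_0)$ concentrated on $y$). So the step you single out as delicate is exactly the one that is unjustified.

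What is missing is a finiteness argument that gives slack below $q-\epsilon$. Your route can be repaired as follows. Since $Y\setminus B$ is finite, $m:=\max_{y\in Y\setminus B}\Sem{\phi}(y)<q-\epsilon$ (take $m:=0$ if $Y\setminus B=\emptyset$). For $\delta\in(\max(p,m),q-\epsilon)$ you then genuinely have $\Sem{\phi}_\delta\subseteq B$. Hence $\lambda(\Sem{\phi}_\delta)(\zeta(y_0))\le p<\delta$, and \Cref{lem:G-lambda-alt-inf} yields $\Sem{\gen{\lambda}\phi}(y_0)\le\delta<q-\epsilon$.

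The paper instead uses \Cref{lem:G-lambda-max}, i.e.\ that the Sugeno supremum is attained over finite sets. If $q':=\Sem{\gen{\lambda}\phi}(y_0)\ge q-\epsilon$, then that lemma gives $\lambda(\Sem{\phi}_{q'})(\zeta(y_0))\ge q'\ge q-\epsilon$. On the other hand, $\Sem{\phi}_{q'}\subseteq\Sem{\phi}_{q-\epsilon}\subseteq B$ gives $\lambda(\Sem{\phi}_{q'})(\zeta(y_0))\le p<q-\epsilon$, a contradiction. This route needs only the cut at exactly $q-\epsilon$, which is what you did establish.
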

\noindent The proof, fairly similar to that of
\Cref{thm:extract-two-valued}, is in the appendix.
\begin{cor}[Polynomial-time computation of distinguishing quantitative
  formulae]
  \label{cor:compute-quant}
  If~$\Lambda$ is polynomial-time solvable, then quantitative
  $\epsilon$-distinguishing formulae of states $x,y$ of
  threshold-behavioural distance $\disteps{\Lambda}(x,y)>\epsilon$ can
  be computed in polynomial time.
\end{cor}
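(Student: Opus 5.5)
The corollary follows by composing two earlier results, \Cref{thm:strat} and \Cref{thm:extract-quant}, in the same way that \Cref{cor:compute-two-valued} was obtained in the two-valued case. I assume, as the surrounding discussion does, that the input coalgebras $\xi\colon X\to FX$ and $\zeta\colon Y\to FY$ are finite and that $\epsilon$ is given in binary.

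\emph{Step 1 (strategy).} Since $\Lambda$ is polynomial-time solvable, \Cref{thm:strat} yields, in polynomial time, Spoiler's winning region in the codensity game up to~$\epsilon$ on~$\xi$ and~$\zeta$. It also yields a history-free strategy~$s$ that wins every position in that region.

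\emph{Step 2 (winning region).} Finite coalgebras are finitely branching, so \Cref{cor:game-correctness-finite} applies. It says that $(x,y)$ lies in Spoiler's winning region exactly when $\disteps{\Lambda}(x,y)>\epsilon$. Hence $s$ is defined at every pair we care about.

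\emph{Step 3 (extraction).} Feed $s$ into \cref{alg:extraction}. By \Cref{thm:extract-quant}, it runs in polynomial time and outputs an $\epsilon$-distinguishing quantitative formula $\phi_{xy}$ for each such pair. Each formula is produced in dag representation of quartic size.

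\emph{Step 4 (total cost).} Composing the two polynomial-time procedures gives a polynomial-time procedure overall. This uses that the strategy from Step~1 has polynomial size, since it is computed in polynomial time.

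The only potential obstacle is representation size. The threshold $q=\lambda(A)(\xi(x))$ and the constant shifts used to normalise $\Sem{\phi_{xy}}(x)=q$ must have polynomial binary size and be computable in polynomial time. \Cref{thm:extract-quant} already asserts that the algorithm runs in polynomial time, so this is covered. For rational models it is consistent with the remark on constants in shifts, whose argument parallels \Cref{rem:constants-two-valued}. Beyond this, no further argument is needed.
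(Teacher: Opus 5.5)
Your proposal is correct and follows the paper's argument: the corollary is simply \Cref{thm:strat} (polynomial-time computation of Spoiler's winning region and of a history-free winning strategy) composed with \Cref{thm:extract-quant}, exactly parallel to how \Cref{cor:compute-two-valued} is obtained, with \Cref{cor:game-correctness-finite} identifying the winning region with the pairs satisfying $\disteps{\Lambda}(x,y)>\epsilon$. Your remark on the binary size of thresholds and shift constants relies on the same implicit ``mild assumptions'' the paper itself makes, so nothing is missing relative to the paper's level of detail.
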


\begin{expl}
  We have shown in ~\Cref{expl:poly-solvable} that the modalities of Examples~\ref{expl:L-Lambda}.\ref{item:L-Lambda-lp}-\ref{item:L-Lambda-mts} are polynomial-time solvable.
  By~\Cref{cor:compute-quant} we can therefore compute quantitative $\epsilon$-distinguishing formulae for the following quantitative logics in polynomial time:
  \begin{enumerate}[wide]
    \item The logic of \emph{generally} for Markov chains, which is the characteristic logic for the Lévy-Prokhorov distance.
    \item The logic for generative probabilistic transition systems that features a copy of the generally modality for every label.
    \item The logic of \emph{fuzzy diamond}, i.e.~standard fuzzy modal
      logic
      (e.g.~\cite{Straccia98,wspk:van-benthem-fuzzy,GebhartEA25}) for
      fuzzy transition systems, i.e.~coalgebras for the fuzzy powerset
      functor~$\pfun_\V$.
    \item The logic for metric transition systems that features one
      modality~$\Diamond_a$ per label~$a$, whose value depends on the
      transition most closely matching the given label.
  \end{enumerate}
  All of these results appear to be new. The last case (metric
  transition systems) is similar to a version of metric transition
  systems with labels on the states (rather than the transitions), for
  which polynomial-time extraction of distinguishing formulae is known
  from previous work~\cite{afs:linear-branching-metrics} (a result
  that is also easily cast as an instance of
  \Cref{cor:compute-quant}).
\end{expl}

\section{Kantorovich Characterization of Coalgebraic  $\epsilon$-Bisimulation}\label{sec:kant}

As our last main result, we show that the so-called Kantorovich
quantitative (sometimes `fuzzy') lax extension~\cite{WildSchroder22}
induced by the Sugeno modalities coincides with the quantitative lax
extension~$\laxv_\Lambda$ as per~\eqref{eq:laxv}. This closes a number
of explicit open ends; in particular, it follows by general
results~\cite{WildSchroder22} that~$\laxv_\Lambda$ is indeed a
quantitative lax extension, and that $\Lambda$-Sugeno logical distance
is below threshold-based behavioural distance. We recall the
definition of the Kantorovich quantitative lax extension $K_\Theta$
of~$F$ induced by a set~$\Theta$ of $\V$-valued predicate liftings
for~$F$. Given a $\V$-valued relation $r\colon\frel{X}{Y}$, a pair
$(f,g)$ of predicates $f\colon X\to\V$, $g\colon Y\to\V$ is
\emph{$r$-preserved} if for all $(x,y)\in X\times Y$,
$g(y)\ge f(x)-r(x,y)$ (equivalently $f(x)-g(y)\le r(x,y)$). For a
$\V$-valued relation $r\colon\frel{X}{Y}$, and $a\in FX$, $b\in FY$,
we then put
\begin{equation*}
  \kantv_\Theta r(a,b)=\bigvee\{\mu(f)(a)\ominus\mu(g)(b)\mid \mu\in\Theta, (f,g)\text{ $r$-preserved}\}.
\end{equation*}
It has been shown that $\kantv_\Theta$ is always a quantitative lax
extension~\cite{WildSchroder22} and that the quantitative modal logic
induced by~$\Theta$ is characteristic for the behavioural distance
induced by~$\kantv_\Theta$. The following theorem implies that these
results apply to the Sugeno modalities and~$\laxv_\Lambda$:
\begin{thm}\label{thm:lp-kantorovich-coalg}
  For~$\laxv_\Lambda$ as per~\eqref{eq:laxv} and $r\colon\frel{X}{Y}$,
  we have
  \begin{equation*}
    \laxv_\Lambda r=\kantv_{\Gen{\Lambda}}r.
  \end{equation*}
\end{thm}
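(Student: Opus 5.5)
We prove the two inequalities $\kantv_{\Gen{\Lambda}}r\le\laxv_\Lambda r$ and $\laxv_\Lambda r\le\kantv_{\Gen{\Lambda}}r$ separately, pointwise at $a\in FX$, $b\in FY$. Both directions revolve around comparing $r$-preserved pairs $(f,g)$ of $\V$-valued predicates with pairs $(A,r_\epsilon[A])$ of two-valued predicates, via the threshold sets $f_t$, $g_t$.

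For `$\ge$' (that is, $\kantv\le\laxv$), fix $\epsilon$ with $a\mathrel{\laxtwo_{\epsilon,\Lambda}r_\epsilon}b$, an $r$-preserved pair $(f,g)$, and $\lambda\in\Lambda$. We must show $\gen{\lambda}(f)(a)\ominus\gen{\lambda}(g)(b)\le\epsilon$; then we take the infimum over $\epsilon$.
\begin{enumerate}
\item The key observation is that $r_\epsilon[f_t]\subseteq g_{t-\epsilon}$ for every $t$. Indeed, if $f(x)\ge t$ and $r(x,y)\le\epsilon$, then $g(y)\ge f(x)-r(x,y)\ge t-\epsilon$.
\item Using \Cref{lem:G-lambda-alt}, take $t$ with $\lambda(f_t)(a)\ge t$ and $t$ close to $\gen{\lambda}(f)(a)$. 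By the simulation condition and monotonicity, $\lambda(g_{t-\epsilon})(b)\ge\lambda(r_\epsilon[f_t])(b)\ge\lambda(f_t)(a)-\epsilon\ge t-\epsilon$.
\item Hence $t-\epsilon$ lies in the set of \Cref{lem:G-lambda-alt} for $g$, so $\gen{\lambda}(g)(b)\ge t-\epsilon$. Letting $t$ approach $\gen{\lambda}(f)(a)$ finishes this direction. The edge case $t<\epsilon$ is trivial, since then the difference is already at most $\epsilon$.
\end{enumerate}

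For `$\le$' (that is, $\laxv\le\kantv$), let $\delta=\kantv_{\Gen{\Lambda}}r(a,b)$ and $\epsilon>\delta$. It suffices to show $a\mathrel{\laxtwo_{\epsilon,\Lambda}r_\epsilon}b$, because the defining set of $\laxv_\Lambda$ is upwards closed. Suppose this fails, so there are $\lambda$ and $A$ with $\lambda(r_\epsilon[A])(b)<\lambda(A)(a)-\epsilon$. Put $q=\lambda(A)(a)$; note $q>\epsilon$.
\begin{enumerate}
\item Define the $\V$-valued predicates $f=q\cdot\chi_A$ (value $q$ on $A$, $0$ elsewhere) and, for $y\in Y$,
\[ g(y)=\textstyle\bigvee_{x\in A}\,(q\ominus r(x,y)). \]
This is the least $g$ making $(f,g)$ $r$-preserved, so $(f,g)$ is $r$-preserved.
\item Since $f_t=A$ for $0<t\le q$, we get $\gen{\lambda}(f)(a)\ge q\wedge\lambda(A)(a)=q$.
\item For the upper bound, use \Cref{lem:G-lambda-alt-inf} with threshold $t=q-\epsilon'$ for $\epsilon'$ slightly below $\epsilon$ but above $\delta$. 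We have $g_{t}=\{y\mid \exists x\in A.\ r(x,y)\le\epsilon'\}$, up to suprema over $A$ not being attained, which is the delicate point. We aim to show $g_{q-\epsilon'}\subseteq r_\epsilon[A]$, giving $\lambda(g_{q-\epsilon'})(b)\le\lambda(r_\epsilon[A])(b)<q-\epsilon<q-\epsilon'$. Then \Cref{lem:G-lambda-alt-inf} yields $\gen{\lambda}(g)(b)\le q-\epsilon'$.
\item Consequently $\gen{\lambda}(f)(a)\ominus\gen{\lambda}(g)(b)\ge\epsilon'>\delta$, contradicting the definition of $\delta$.
\end{enumerate}

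The main obstacle is step 3, where the supremum defining $g$ may not be attained and thresholds are not strict. Working with $\epsilon'<\epsilon$ handles this: if $g(y)\ge q-\epsilon'$, then some $x\in A$ has $q-r(x,y)>q-\epsilon$, i.e.\ $r(x,y)<\epsilon$, so $y\in r_\epsilon[A]$. The strict inequality $\lambda(r_\epsilon[A])(b)<q-\epsilon$ leaves room to choose $\epsilon'$ between $\delta$ and $\epsilon$ so that $\lambda(r_\epsilon[A])(b)<q-\epsilon'$.
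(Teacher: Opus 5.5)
Your proposal is correct and follows the paper's proof. The first inequality rests on the same inclusion $r_\epsilon[f_t]\subseteq g_{t-\epsilon}$ (the paper compares the two suprema termwise with $t\ominus\delta$). The second uses the same witness pair $f=q\cdot\chi_A$, $g(y)=\bigvee_{x}f(x)\ominus r(x,y)$, differing only in presentation: you argue by contradiction and bound $\gen{\lambda}(g)(b)$ from above via \Cref{lem:G-lambda-alt-inf} at threshold $q-\epsilon'$ with $\epsilon'<\epsilon$, which handles the possibly non-attained supremum in $g$ somewhat more cleanly than the paper's direct argument with a single slack $u$.
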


\begin{expl}
  As discussed after \Cref{propn:sugeno-modalities}, the obtained
  Sugeno modalities coincide with known quantitative modalities and
  hence with the Kantorovich characterizations in the literature
  \cite{WildEA25,wspk:van-benthem-fuzzy,afs:linear-branching-metrics,bgkm:hennessy-milner-galois,fswbgkm:quantitative-graded-semantics}
  and whenever we use the same propositional operators in the logic,
  the corresponding $\epsilon$-distances witness the known behavioural
  metrics.
  \begin{enumerate}[wide]
    \item We obtain a Kantorovich characterization of Lévy-Prokhorov distance through the \emph{generally} modality.
      Such a characterization has been obtained quite recently in the probabilistic (rather than subprobabilistic) setting~\cite{WildEA25}; this result can easily be recovered using the following item.
    \item We obtain a Kantorovich characterization of a distance on probabilistic transition systems that behaves like the Lévy-Prokhorov distance on each slice corresponding to a label.
      Specifically, we assign to $\mu\in\dfun(\Act\times X)$ and $\nu\in\dfun(\Act\times Y)$ the distance
      \[ \bigvee_{a\in\Act} \laxv_{P}(\mu\big|_{\{a\}\times X},
      \nu\big|_{\{a\}\times Y}), \]
      where $\laxv_{P}$ is Lévy-Prokhorov distance and $\mu\big|_{\{a\}\times X}$, $\nu\big|_{\{a\}\times Y}$ are as in ~\Cref{propn:sugeno-modalities}.
      For $|\Act| = 1$ the functor $\dfun(\Act\times-)$ is isomorphic to $\dfun$, so that we recover precisely the Kantorovich characterization of Lévy-Prokhorov for distributions~\cite{WildEA25}.
    \item For fuzzy transition systems, we obtain the Kantorovich distance for the \emph{fuzzy diamond} modality from fuzzy modal logic~\cite{wspk:van-benthem-fuzzy}.
  \end{enumerate}
\end{expl}


\section{Conclusion}
\label{sec:conclusion}

We have introduced a coalgebraic framework for behavioural distances
that are \emph{threshold-based}, i.e.~determined by notions of
$\epsilon$-\mbox{(bi-)}simulations working with a fixed allowed
deviation~$\epsilon$. Our framework is parametrized over a choice of
\emph{$2$-to-$\V$ predicate liftings} that lift two-valued predicates
to quantitative predicates (taking truth values in the unit
interval~$\V$), in generalization of the probability operator. We have
shown that such distances are equivalently induced by suitably
constructed quantitative lax extensions, which in turn we have shown
to be characterized by quantitative modalities induced from the given
$2$-to-$\V$ predicate liftings, the \emph{Sugeno modalities}. Both for
a two-valued modal logic determined by the original $2$-to-$\V$
predicate liftings and for a quantitative modal logic determined by
the Sugeno modalities, we have established quantitative
Hennessy-Milner theorems stating that the respective notions of
logical distance coincide with threshold-based behavioural distance on
finitely branching systems (going via an existing quantitative
coalgebraic Hennessy-Milner
theorem~\cite{KonigMikaMichalski18,WildSchroder22} in the quantitative
case). Moreover, we have shown that distinguishing formulae in this
logic, which certify that the behavioural distance of two given states
is above a given threshold, have polynomial dag size and can, under
mild conditions, be computed in polynomial time. These results subsume
established as well as recent results on the L\'evy-Prokhorov distance
on Markov chains, such as the characterization via a quantitative lax
extension~\cite{DesharnaisSokolova25} and logical characterizations
via both two-valued~\cite{DesharnaisEA08} and
quantitative~\cite{WildEA25} modal logics, as well as a result on
polynomial-time computation of distinguishing formulae for metric
transition systems~\cite{afs:linear-branching-metrics}. All other
instances of our generic algorithmic results are, to our best
knowledge, new. Maybe most notably, we establish the polynomial-time
computation of distinguishing formulae, both two-valued and
quantitative, for L\'evy-Prokhorov behavioural distance, also known as
$\epsilon$-distance, on (labelled) Markov
chains~\cite{DesharnaisEA08}.

One important direction for future work is to develop generic
reasoning algorithms for both two-valued and quantitative
characteristic modal logics induced by $2$-to-$\V$ predicate
liftings. A possible model for such algorithms is provided in recent
work on reasoning in what has been termed \emph{non-expansive fuzzy
  $\ALC$}, effectively a multi-modal version of the modal logic of
fuzzy transition systems that has featured as one of our running
examples. Also, future investigations will be directed at going beyond
the setting of behavioural distances, covering, for instance,
behavioural (quasi-)uniformities~\cite{KomoridaEA21} (which, for
instance, play a role in GSOS formats for probabilistic
systems~\cite{GeblerEA16}) or general behavioural
relations~\cite{FordThesis,ForsterEA25}.

\bibliographystyle{ACM-Reference-Format}
\citestyle{acmnumeric}
\bibliography{references}

@PREAMBLE{ {\providecommand{\noopsort}[1]{}} }

@string{lncs="LNCS"}

@string{springer="Springer"}

@string{lipics="LIPIcs"}

@string{dagstuhl="Schloss Dagstuhl -- Leibniz-Zentrum f{\"u}r Informatik"}

@Article{afs:linear-branching-metrics,
  author       = {Luca de Alfaro and
                  Marco Faella and
                  Mari{\"{e}}lle Stoelinga},
  title        = {Linear and Branching System Metrics},
  journal      = {{IEEE} Trans.\ Software Eng.},
  volume       = {35},
  number       = {2},
  pages        = {258--273},
  year         = {2009},
  nourl          = {https://doi.org/10.1109/TSE.2008.106},
  doi          = {10.1109/TSE.2008.106},
  bibsource    = {dblp computer science bibliography, https://dblp.org}
}

@article{BreugelWorrell05,
  author       = {Franck van Breugel and
                  James Worrell},
  title        = {A behavioural pseudometric for probabilistic transition systems},
  journal      = {Theor.\ Comput.\ Sci.},
  volume       = {331},
  number       = {1},
  pages        = {115--142},
  year         = {2005},
  nourl          = {https://doi.org/10.1016/j.tcs.2004.09.035},
  doi          = {10.1016/J.TCS.2004.09.035},
  bibsource    = {dblp computer science bibliography, https://dblp.org}
}

@Article{bbkk:coalgebraic-behavioral-metrics,
  author       = {Paolo Baldan and
                  Filippo Bonchi and
                  Henning Kerstan and
                  Barbara K{\"{o}}nig},
  title        = {Coalgebraic Behavioral Metrics},
  journal      = {Log.\ Methods Comput.\ Sci.},
  volume       = {14},
  number       = {3},
  year         = {2018},
  nourl          = {https://doi.org/10.23638/LMCS-14(3:20)2018},
  doi          = {10.23638/LMCS-14(3:20)2018},
  bibsource    = {dblp computer science bibliography, https://dblp.org}
}

@InProceedings{bgkm:hennessy-milner-galois,
  author = 	 {Harsh Beohar and Sebastian Gurke and  Barbara K\"onig and
                  Karla Messing},
  title = 	 {{H}ennessy-{M}ilner Theorems via {G}alois Connections},
  booktitle =    {Proc. of CSL '23},
  publisher =	 {Schloss Dagstuhl -- Leibniz Center for Informatics},
  series = 	 {{LIPIcs}},
  year =         2023,
  note =         {to appear}
}

@InProceedings{c:automatically-explaining-bisim,
  author       = {Rance Cleaveland},
  editor       = {Edmund M. Clarke and
                  Robert P. Kurshan},
  title        = {On Automatically Explaining Bisimulation Inequivalence},
  ALTbooktitle    = {Computer Aided Verification, 2nd International Workshop, {CAV} '90,
                  New Brunswick, NJ, USA, June 18-21, 1990, Proceedings},
  booktitle    = {Computer Aided Verification,  {CAV} 1990},
  series       = lncs,
  volume       = {531},
  pages        = {364--372},
  publisher    = springer,
  year         = {1990},
  nourl          = {https://doi.org/10.1007/BFb0023750},
  doi          = {10.1007/BFB0023750},
  bibsource    = {dblp computer science bibliography, https://dblp.org}
}

@inproceedings{rb:explainability-labelled-mc,
  author       = {Amgad Rady and
                  Franck van Breugel},
  editor       = {Orna Kupferman and
                  Pawel Sobocinski},
  title        = {Explainability of Probabilistic Bisimilarity Distances for Labelled
                  Markov Chains},
  ALTbooktitle    = {Foundations of Software Science and Computation Structures - 26th
                  International Conference, FoSSaCS 2023, Held as Part of the European
                  Joint Conferences on Theory and Practice of Software, {ETAPS} 2023,
                  Paris, France, April 22-27, 2023, Proceedings},
  booktitle    = {Foundations of Software Science and Computation Structures, FoSSaCS 2023},
  series       = lncs,
  volume       = {13992},
  pages        = {285--307},
  publisher    = springer,
  year         = {2023},
  nourl          = {https://doi.org/10.1007/978-3-031-30829-1\_14},
  doi          = {10.1007/978-3-031-30829-1\_14},
  bibsource    = {dblp computer science bibliography, https://dblp.org}
}

@InProceedings{kms:non-bisimilarity-coalgebraic,
  author       = {Barbara K{\"{o}}nig and
                  Christina Mika{-}Michalski and
                  Lutz Schr{\"{o}}der},
  editor       = {Daniela Petrisan and
                  Jurriaan Rot},
  title        = {Explaining Non-bisimilarity in a Coalgebraic Approach: Games and Distinguishing
                  Formulas},
  ALTbooktitle    = {Coalgebraic Methods in Computer Science - 15th {IFIP} {WG} 1.3 International
                  Workshop, {CMCS} 2020, Colocated with {ETAPS} 2020, Dublin, Ireland,
                  April 25-26, 2020, Proceedings},
  booktitle    = {Coalgebraic Methods in Computer Science, {CMCS} 2020},
  series       = lncs,
  volume       = {12094},
  pages        = {133--154},
  publisher    = springer,
  year         = {2020},
  nourl          = {https://doi.org/10.1007/978-3-030-57201-3\_8},
  doi          = {10.1007/978-3-030-57201-3\_8},
  bibsource    = {dblp computer science bibliography, https://dblp.org}
}

@InProceedings{DesharnaisSokolova25,
  author       = {Jos{\'{e}}e Desharnais and
                  Ana Sokolova},
  title        = {{\(\epsilon\)}-Distance via L{\'{e}}vy-Prokhorov Lifting},
  year         = 2026,
  editor = {Stefano Guerrini and Barbara König},
  booktitle = {Computer Science Logic, CSL 2026},
  series = lipics,
  publisher = dagstuhl,
  note = {To appear; available on arXiv under {\url{https://arxiv.org/abs/2507.10732}}},
}

@inproceedings{DesharnaisEA08,
  author       = {Jos{\'{e}}e Desharnais and
                  Fran{\c{c}}ois Laviolette and
                  Mathieu Tracol},
  title        = {Approximate Analysis of Probabilistic Processes: Logic, Simulation
                  and Games},
  ALTbooktitle    = {Fifth International Conference on the Quantitative Evaluaiton of Systems
                  {(QEST} 2008), 14-17 September 2008, Saint-Malo, France},
  booktitle    = {Quantitative Evaluaiton of Systems
                  {(QEST} 2008)},
  pages        = {264--273},
  publisher    = {{IEEE} Computer Society},
  year         = {2008},
  nourl          = {https://doi.org/10.1109/QEST.2008.42},
  doi          = {10.1109/QEST.2008.42},
  bibsource    = {dblp computer science bibliography, https://dblp.org}
}

@inproceedings{NoraEA25,
  author       = {Pedro Nora and
                  Jurriaan Rot and
                  Lutz Schr{\"{o}}der and
                  Paul Wild},
  editor       = {Parosh Aziz Abdulla and
                  Delia Kesner},
  title        = {Relational Connectors and Heterogeneous Simulations},
  ALTbooktitle    = {Foundations of Software Science and Computation Structures - 28th
                  International Conference, FoSSaCS 2025, Held as Part of the International
                  Joint Conferences on Theory and Practice of Software, {ETAPS} 2025,
                  Hamilton, ON, Canada, May 3-8, 2025, Proceedings},
  booktitle    = {Foundations of Software Science and Computation Structures, FoSSaCS 2025},
  series       = lncs,
  volume       = {15691},
  pages        = {111--132},
  publisher    = springer,
  year         = {2025},
  nourl          = {https://doi.org/10.1007/978-3-031-90897-2\_6},
  doi          = {10.1007/978-3-031-90897-2\_6},
  bibsource    = {dblp computer science bibliography, https://dblp.org}
}

@inproceedings{SchroderPattinson11,
  author       = {Lutz Schr{\"{o}}der and
                  Dirk Pattinson},
  editor       = {Toby Walsh},
  title        = {Description Logics and Fuzzy Probability},
  ALTbooktitle    = {{IJCAI} 2011, Proceedings of the 22nd International Joint Conference
                  on Artificial Intelligence, Barcelona, Catalonia, Spain, July 16-22,
                  2011},
  booktitle    = {International Joint Conference
                  on Artificial Intelligence, IJCAI 2011},
  pages        = {1075--1081},
  publisher    = {{IJCAI/AAAI}},
  year         = {2011},
  nourl          = {https://doi.org/10.5591/978-1-57735-516-8/IJCAI11-184},
  doi          = {10.5591/978-1-57735-516-8/IJCAI11-184},
  bibsource    = {dblp computer science bibliography, https://dblp.org}
}

@inproceedings{WildSchroder21,
  author       = {Paul Wild and
                  Lutz Schr{\"{o}}der},
  editor       = {Stefan Kiefer and
                  Christine Tasson},
  title        = {A Quantified Coalgebraic van Benthem Theorem},
  ALTbooktitle    = {Foundations of Software Science and Computation Structures - 24th
                  International Conference, {FOSSACS} 2021, Held as Part of the European
                  Joint Conferences on Theory and Practice of Software, {ETAPS} 2021,
                  Luxembourg City, Luxembourg, March 27 - April 1, 2021, Proceedings},
  booktitle    = {Foundations of Software Science and Computation Structures, {FOSSACS} 2021},
  series       = lncs,
  volume       = {12650},
  pages        = {551--571},
  publisher    = springer,
  year         = {2021},
  nourl          = {https://doi.org/10.1007/978-3-030-71995-1\_28},
  doi          = {10.1007/978-3-030-71995-1\_28},
  bibsource    = {dblp computer science bibliography, https://dblp.org}
}

@inproceedings{ForsterEA23,
  author       = {Jonas Forster and
                  Sergey Goncharov and
                  Dirk Hofmann and
                  Pedro Nora and
                  Lutz Schr{\"{o}}der and
                  Paul Wild},
  editor       = {Bartek Klin and
                  Elaine Pimentel},
  title        = {Quantitative Hennessy-Milner Theorems via Notions of Density},
  ALTbooktitle    = {31st {EACSL} Annual Conference on Computer Science Logic, {CSL} 2023,
                  Warsaw, Poland, February 13-16, 2023},
  booktitle    = {Computer Science Logic, {CSL} 2023},
  series       = {LIPIcs},
  volume       = {252},
  pages        = {22:1--22:20},
  publisher    = {Schloss Dagstuhl -- Leibniz-Zentrum f{\"{u}}r Informatik},
  year         = {2023},
  nourl          = {https://doi.org/10.4230/LIPIcs.CSL.2023.22},
  doi          = {10.4230/LIPICS.CSL.2023.22},
  bibsource    = {dblp computer science bibliography, https://dblp.org}
}

@inproceedings{KonigMikaMichalski18,
  author       = {Barbara K{\"{o}}nig and
                  Christina Mika{-}Michalski},
  editor       = {Sven Schewe and
                  Lijun Zhang},
  title        = {(Metric) Bisimulation Games and Real-Valued Modal Logics for Coalgebras},
  ALTbooktitle    = {29th International Conference on Concurrency Theory, {CONCUR} 2018,
                  Beijing, China, September 4-7, 2018},
  booktitle    = {Concurrency Theory, {CONCUR} 2018},
  series       = {LIPIcs},
  volume       = {118},
  pages        = {37:1--37:17},
  publisher    = {Schloss Dagstuhl -- Leibniz-Zentrum f{\"{u}}r Informatik},
  year         = {2018},
  nourl          = {https://doi.org/10.4230/LIPIcs.CONCUR.2018.37},
  doi          = {10.4230/LIPICS.CONCUR.2018.37},
  bibsource    = {dblp computer science bibliography, https://dblp.org}
}

@inproceedings{GiacaloneEA90,
  author       = {Alessandro Giacalone and
                  Chi{-}Chang Jou and
                  Scott A. Smolka},
  editor       = {Manfred Broy and
                  Cliff B. Jones},
  title        = {Algebraic Reasoning for Probabilistic Concurrent Systems},
  ALTbooktitle    = {Programming concepts and methods: Proceedings of the {IFIP} Working
                  Group 2.2, 2.3 Working Conference on Programming Concepts and Methods,
                  Sea of Galilee, Israel, 2-5 April, 1990},
  booktitle    = {Programming concepts and methods},
  pages        = {443--458},
  publisher    = {North-Holland},
  year         = {1990},
  bibsource    = {dblp computer science bibliography, https://dblp.org}
}

@article{Rutten00,
  author       = {Jan J. M. M. Rutten},
  title        = {Universal coalgebra: a theory of systems},
  journal      = {Theor.\ Comput.\ Sci.},
  volume       = {249},
  number       = {1},
  pages        = {3--80},
  year         = {2000},
  nourl          = {https://doi.org/10.1016/S0304-3975(00)00056-6},
  doi          = {10.1016/S0304-3975(00)00056-6},
  bibsource    = {dblp computer science bibliography, https://dblp.org}
}

@article{WildSchroder22,
  author       = {Paul Wild and
                  Lutz Schr{\"{o}}der},
  title        = {Characteristic Logics for Behavioural Hemimetrics via Fuzzy Lax Extensions},
  journal      = {Log.\ Methods Comput.\ Sci.},
  volume       = {18},
  number       = {2},
  year         = {2022},
  nourl          = {https://doi.org/10.46298/lmcs-18(2:19)2022},
  doi          = {10.46298/LMCS-18(2:19)2022},
  bibsource    = {dblp computer science bibliography, https://dblp.org}
}

@phdthesis{SugenoThesis,
  title={Theory of fuzzy integrals and its applications},
  author={Sugeno, Michio},
  school={Tokyo Institute of Technology},
  year={1974}
}

@InProceedings{WildEA25,
  title={Generalized Kantorovich-Rubinstein Duality beyond Hausdorff and Kantorovich}, 
  author={Paul Wild and Lutz Schröder and Karla Messing and Barbara König and Jonas Forster},
  year={2026},
  editor = {Natalie Betrand and Stefan Milius},
  booktitle = {Foundations of Software Science and Computation Structures, FoSSaCS 2026},
  series = lncs,
  publisher = springer,
  note = {To appear; available on arXiv under {\url{https://arxiv.org/abs/2510.23552}}},
}

@inproceedings{KomoridaEA19,
  author       = {Yuichi Komorida and
                  Shin{-}ya Katsumata and
                  Nick Hu and
                  Bartek Klin and
                  Ichiro Hasuo},
  title        = {Codensity Games for Bisimilarity},
  ALTbooktitle    = {34th Annual {ACM/IEEE} Symposium on Logic in Computer Science, {LICS}
                  2019, Vancouver, BC, Canada, June 24-27, 2019},
  booktitle    = {Logic in Computer Science, {LICS}
                  2019},
  pages        = {1--13},
  publisher    = {{IEEE}},
  year         = {2019},
  nourl          = {https://doi.org/10.1109/LICS.2019.8785691},
  doi          = {10.1109/LICS.2019.8785691},
  bibsource    = {dblp computer science bibliography, https://dblp.org}
}

@inproceedings{FigueiraGorin10,
  author       = {Santiago Figueira and
                  Daniel Gor{\'{\i}}n},
  editor       = {Lev D. Beklemishev and
                  Valentin Goranko and
                  Valentin B. Shehtman},
  title        = {On the Size of Shortest Modal Descriptions},
  ALTbooktitle    = {Advances in Modal Logic 8, papers from the eighth conference on "Advances
                  in Modal Logic," held in Moscow, Russia, 24-27 August 2010},
  booktitle    = {Advances in Modal Logic, AiML 2010},
  pages        = {120--139},
  publisher    = {College Publications},
  year         = {2010},
  url          = {http://www.aiml.net/volumes/volume8/Figueira-Gorin.pdf},
  bibsource    = {dblp computer science bibliography, https://dblp.org}
}

@article{WissmannEA22,
  author       = {Thorsten Wi{\ss}mann and
                  Stefan Milius and
                  Lutz Schr{\"{o}}der},
  title        = {Quasilinear-time Computation of Generic Modal Witnesses for Behavioural
                  Inequivalence},
  journal      = {Log.\ Methods Comput.\ Sci.},
  volume       = {18},
  number       = {4},
  year         = {2022},
  url          = {https://doi.org/10.46298/lmcs-18(4:6)2022},
  doi          = {10.46298/LMCS-18(4:6)2022},
  bibsource    = {dblp computer science bibliography, https://dblp.org}
}

@article{FahrenbergLegay14,
  author       = {Uli Fahrenberg and
                  Axel Legay},
  title        = {The quantitative linear-time-branching-time spectrum},
  journal      = {Theor.\ Comput.\ Sci.},
  volume       = {538},
  pages        = {54--69},
  year         = {2014},
  nourl        = {https://doi.org/10.1016/j.tcs.2013.07.030},
  doi          = {10.1016/J.TCS.2013.07.030},
  bibsource    = {dblp computer science bibliography, https://dblp.org}
}

@article{GeblerEA16,
  author       = {Daniel Gebler and
                  Kim G. Larsen and
                  Simone Tini},
  title        = {Compositional bisimulation metric reasoning with Probabilistic Process
                  Calculi},
  journal      = {Log.\ Methods Comput.\ Sci.},
  volume       = {12},
  number       = {4},
  year         = {2016},
  nourl          = {https://doi.org/10.2168/LMCS-12(4:12)2016},
  doi          = {10.2168/LMCS-12(4:12)2016},
  bibsource    = {dblp computer science bibliography, https://dblp.org}
}

@phdthesis{FordThesis,
  author       = {Chase Ford},
  title        = {Presentations of Graded Coalgebraic Semantics},
  school       = {Friedrich-Alexander-Universität Erlangen-Nürnberg},
  year         = {2023},
  url          = {https://opus4.kobv.de/opus4-fau/frontdoor/index/index/docId/23568},
  urn          = {urn:nbn:de:bvb:29-opus4-235684},
  bibsource    = {dblp computer science bibliography, https://dblp.org}
}

@inproceedings{ForsterEA25,
  author       = {Jonas Forster and
                  Lutz Schr{\"{o}}der and
                  Paul Wild},
  title        = {Conformance Games for Graded Semantics},
  ALTbooktitle    = {40th Annual {ACM/IEEE} Symposium on Logic in Computer Science, {LICS}
                  2025, Singapore, June 23-26, 2025},
 booktitle    = {Logic in Computer Science, {LICS}
                  2025},
  pages        = {555--567},
  publisher    = {{IEEE}},
  year         = {2025},
  nourl          = {https://doi.org/10.1109/LICS65433.2025.00048},
  doi          = {10.1109/LICS65433.2025.00048},
  bibsource    = {dblp computer science bibliography, https://dblp.org}
}

@inproceedings{KomoridaEA21,
  author       = {Yuichi Komorida and
                  Shin{-}ya Katsumata and
                  Clemens Kupke and
                  Jurriaan Rot and
                  Ichiro Hasuo},
  title        = {Expressivity of Quantitative Modal Logics : Categorical Foundations
                  via Codensity and Approximation},
  ALTbooktitle    = {36th Annual {ACM/IEEE} Symposium on Logic in Computer Science, {LICS}
                  2021, Rome, Italy, June 29 - July 2, 2021},
  booktitle    = {Logic in Computer Science, {LICS}
                  2021},
  pages        = {1--14},
  publisher    = {{IEEE}},
  year         = {2021},
  nourl          = {https://doi.org/10.1109/LICS52264.2021.9470656},
  doi          = {10.1109/LICS52264.2021.9470656},
  bibsource    = {dblp computer science bibliography, https://dblp.org}
}

@inproceedings{AolariteiEA25,
  author          = {Liviu Aolaritei and Oliver Wang and Julie Zhu and Michael Jordan and Youssef Marzouk},
  title           = {Conformal Prediction under {L}évy-{P}rokhorov Distribution Shifts: Robustness to Local and Global Perturbations},
  booktitle       = {Neural Information Processing Systems, NeurIPS 2025},
  year            = {2025},
  note            = {To appear. Preprint available at https://arxiv.org/abs/2502.14105}
 }

@inproceedings{BennounaEA23,
  author       = {M. Amine Bennouna and
                  Ryan Lucas and
                  Bart P. G. Van Parys},
  editor       = {Andreas Krause and
                  Emma Brunskill and
                  Kyunghyun Cho and
                  Barbara Engelhardt and
                  Sivan Sabato and
                  Jonathan Scarlett},
  title        = {Certified Robust Neural Networks: Generalization and Corruption Resistance},
  altbooktitle = {International Conference on Machine Learning, {ICML} 2023, 23-29 July
                  2023, Honolulu, Hawaii, {USA}},
  booktitle    = {International Conference on Machine Learning, {ICML} 2023},
  series       = {Proc.\ Machine Learning Res.},
  volume       = {202},
  pages        = {2092--2112},
  publisher    = {{PMLR}},
  year         = {2023},
  url          = {https://proceedings.mlr.press/v202/bennouna23a.html},
  bibsource    = {dblp computer science bibliography, https://dblp.org}
}

@inproceedings{BartheEA12,
  author       = {Gilles Barthe and
                  Boris K{\"{o}}pf and
                  Federico Olmedo and
                  Santiago Zanella{-}B{\'{e}}guelin},
  editor       = {John Field and
                  Michael Hicks},
  title        = {Probabilistic relational reasoning for differential privacy},
  ALTbooktitle    = {Proceedings of the 39th {ACM} {SIGPLAN-SIGACT} Symposium on Principles
                  of Programming Languages, {POPL} 2012, Philadelphia, Pennsylvania,
                  USA, January 22-28, 2012},
  booktitle    = {Principles
                  of Programming Languages, {POPL} 2012},
  pages        = {97--110},
  publisher    = {{ACM}},
  year         = {2012},
  nourl          = {https://doi.org/10.1145/2103656.2103670},
  doi          = {10.1145/2103656.2103670},
  bibsource    = {dblp computer science bibliography, https://dblp.org}
}

@inproceedings{DiniEA13,
  author       = {Gianluca Dini and
                  Fabio Martinelli and
                  Ilaria Matteucci and
                  Andrea Saracino and
                  Daniele Sgandurra},
  editor       = {Joaqu{\'{\i}}n Garc{\'{\i}}a{-}Alfaro and
                  Georgios V. Lioudakis and
                  Nora Cuppens{-}Boulahia and
                  Simon N. Foley and
                  William M. Fitzgerald},
  title        = {Introducing Probabilities in Contract-Based Approaches for Mobile
                  Application Security},
  ALTbooktitle    = {Data Privacy Management and Autonomous Spontaneous Security - 8th
                  International Workshop, {DPM} 2013, and 6th International Workshop,
                  {SETOP} 2013, Egham, UK, September 12-13, 2013, Revised Selected Papers},
  booktitle    = {Data Privacy Management and Autonomous Spontaneous
                  Security, {DPM-SETOP} 2013},
  series       = lncs,
  volume       = {8247},
  pages        = {284--299},
  publisher    = {Springer},
  year         = {2013},
  nourl          = {https://doi.org/10.1007/978-3-642-54568-9\_18},
  doi          = {10.1007/978-3-642-54568-9\_18},
  bibsource    = {dblp computer science bibliography, https://dblp.org}
}

@InProceedings{fswbgkm:quantitative-graded-semantics,
  author = 	 {Jonas Forster and Lutz Schr{\"o}der and Paul Wild
                  and Harsh Beohar and Sebastian Gurke and Barbara
                  K{\"o}nig and Karla Messing}, 
  title = 	 {Quantitative Graded Semantics and Spectra of Behavioural Metrics},
  booktitle =    {Computer Science Logic, {CSL} 2025},
  publisher =	 {Schloss Dagstuhl -- Leibniz Center for Informatics},
  series = 	 {{LIPIcs}},
  year =         2025,
  volume =	 {326},
  pages =	 {33:1--33:21},
  url =          {https://doi.org/10.4230/LIPIcs.CSL.2025.33}
}

@article{MartiVenema15,
  author       = {Johannes Marti and
                  Yde Venema},
  title        = {Lax extensions of coalgebra functors and their logic},
  journal      = {J.\ Comput.\ Syst.\ Sci.},
  volume       = {81},
  number       = {5},
  pages        = {880--900},
  year         = {2015},
  nourl          = {https://doi.org/10.1016/j.jcss.2014.12.006},
  doi          = {10.1016/J.JCSS.2014.12.006},
  bibsource    = {dblp computer science bibliography, https://dblp.org}
}

@PhDThesis{	  Thijs96,
  author	= {Albert Thijs},
  title		= {Simulation and fixpoint semantics},
  school	= {University of Groningen},
  year		= {1996}
}

@InProceedings{	  BackhouseBruinEtAl91,
  author	= {Roland Carl Backhouse and Peter J. de Bruin and Paul F.
		  Hoogendijk and Grant Malcolm and Ed Voermans and Jaap van
		  der Woude},
  editor	= {Maurice Nivat and Charles Rattray and Teodor Rus and
		  Giuseppe Scollo},
  title		= {Polynomial Relators (Extended Abstract)},
  altbooktitle	= {Algebraic Methodology and Software Technology {(AMAST}
		  '91), Proceedings of the Second International Conference on
		  Methodology and Software Technology},
  booktitle	= {Algebraic Methodology and Software Technology, AMAST
		  1991},
  series	= {Workshops in Computing},
  pages		= {303--326},
  publisher	= springer,
  year		= {1991}
}

@InProceedings{	  Levy11,
  author	= {Paul Blain Levy},
  editor	= {Martin Hofmann},
  title		= {Similarity Quotients as Final Coalgebras},
  altbooktitle	= {Foundations of Software Science and Computational
		  Structures - 14th International Conference, {FOSSACS} 2011,
		  Held as Part of the Joint European Conferences on Theory
		  and Practice of Software, {ETAPS} 2011, Saarbr{\"{u}}cken,
		  Germany, March 26-April 3, 2011. Proceedings},
  booktitle	= {Foundations of Software Science and Computational
		  Structures, {FOSSACS} 2011},
  series	= lncs,
  volume	= {6604},
  pages		= {27--41},
  publisher	= springer,
  year		= {2011},
  doi		= {10.1007/978-3-642-19805-2\_3}
}

@inproceedings{GorinSchrode13,
  author       = {Daniel Gor{\'{\i}}n and
                  Lutz Schr{\"{o}}der},
  editor       = {Reiko Heckel and
                  Stefan Milius},
  title        = {Simulations and Bisimulations for Coalgebraic Modal Logics},
  ALTbooktitle    = {Algebra and Coalgebra in Computer Science - 5th International Conference,
                  {CALCO} 2013, Warsaw, Poland, September 3-6, 2013. Proceedings},
  booktitle    = {Algebra and Coalgebra in Computer Science,
                  {CALCO} 2013},
  series       = lncs,
  volume       = {8089},
  pages        = {253--266},
  publisher    = springer,
  year         = {2013},
  nourl          = {https://doi.org/10.1007/978-3-642-40206-7\_19},
  doi          = {10.1007/978-3-642-40206-7\_19},
  bibsource    = {dblp computer science bibliography, https://dblp.org}
}

@article{LarsenSkou91,
  author       = {Kim Guldstrand Larsen and
                  Arne Skou},
  title        = {Bisimulation through Probabilistic Testing},
  journal      = {Inf.\ Comput.},
  volume       = {94},
  number       = {1},
  pages        = {1--28},
  year         = {1991},
  nourl          = {https://doi.org/10.1016/0890-5401(91)90030-6},
  doi          = {10.1016/0890-5401(91)90030-6},
  bibsource    = {dblp computer science bibliography, https://dblp.org}
}

@InProceedings{wspk:van-benthem-fuzzy,
  author       = {Paul Wild and
                  Lutz Schr{\"{o}}der and
                  Dirk Pattinson and
                  Barbara K{\"{o}}nig},
  editor       = {Anuj Dawar and
                  Erich Gr{\"{a}}del},
  title        = {A van Benthem Theorem for Fuzzy Modal Logic},
  ALTbooktitle    = {Proceedings of the 33rd Annual {ACM/IEEE} Symposium on Logic in Computer
                  Science, {LICS} 2018, Oxford, UK, July 09-12, 2018},
  booktitle    = {Logic in Computer
                  Science, {LICS} 2018},
  pages        = {909--918},
  publisher    = {{ACM}},
  year         = {2018},
  nourl          = {https://doi.org/10.1145/3209108.3209180},
  doi          = {10.1145/3209108.3209180},
  bibsource    = {dblp computer science bibliography, https://dblp.org}
}

@inproceedings{GebhartEA25,
  author       = {Stefan Gebhart and
                  Lutz Schr{\"{o}}der and
                  Paul Wild},
  title        = {Non-expansive Fuzzy {$\mathcal{ALC}$}},
  ALTbooktitle    = {Proceedings of the Thirty-Fourth International Joint Conference on
                  Artificial Intelligence, {IJCAI} 2025, Montreal, Canada, August 16-22,
                  2025},
  booktitle    = {International Joint Conference on
                  Artificial Intelligence, {IJCAI} 2025},
  pages        = {4509--4517},
  publisher    = {ijcai.org},
  year         = {2025},
  nourl          = {https://doi.org/10.24963/ijcai.2025/502},
  doi          = {10.24963/IJCAI.2025/502},
  bibsource    = {dblp computer science bibliography, https://dblp.org}
}

@Book{AHS90,
  title      = {Abstract and concrete categories: {T}he joy of cats},
  publisher  = {John Wiley \& Sons Inc.},
  year       = 1990,
  author     = {Ad{\'a}mek, Ji{\v{r}\'i} and Herrlich, Horst and Strecker, George E.},
  noseries     = {Pure and Applied Mathematics (New York)},
  noaddress    = {New York},
  isbn       = {0-471-60922-6},
  note       = {Republished in: Reprints in Theory and Applications of Categories, No. 17 (2006) pp.~1--507},
  mrclass    = {18-02 (08C05 18A20 18B30 54B30)},
  mrnumber   = {MR1051419 (91h:18001)},
  mrreviewer = {Saunders Mac Lane},
  pagetotal  = {xiv + 482},
  url        = {http://tac.mta.ca/tac/reprints/articles/17/tr17abs.html},
}

@article{Barr93,
  author       = {Michael Barr},
  title        = {Terminal Coalgebras in Well-Founded Set Theory},
  journal      = {Theor.\ Comput.\ Sci.},
  volume       = {114},
  number       = {2},
  pages        = {299--315},
  year         = {1993},
  nourl          = {https://doi.org/10.1016/0304-3975(93)90076-6},
  doi          = {10.1016/0304-3975(93)90076-6},
  bibsource    = {dblp computer science bibliography, https://dblp.org}
}

@inproceedings{Straccia98,
  author       = {Umberto Straccia},
  editor       = {Jack Mostow and
                  Chuck Rich},
  title        = {A Fuzzy Description Logic},
  ALTbooktitle    = {Proceedings of the Fifteenth National Conference on Artificial Intelligence
                  and Tenth Innovative Applications of Artificial Intelligence Conference,
                  {AAAI} 98, {IAAI} 98, July 26-30, 1998, Madison, Wisconsin, {USA}},
  booktitle    = { Artificial Intelligence/ Innovative Applications of
                  Artificial Intelligence, {AAAI} 1998,
                  {IAAI} 1998},
  pages        = {594--599},
  publisher    = {{AAAI} Press / The {MIT} Press},
  year         = {1998},
  url          = {http://www.aaai.org/Library/AAAI/1998/aaai98-084.php},
  bibsource    = {dblp computer science bibliography, https://dblp.org}
}

\newpage
\appendix

\section{Additional Details and Omitted Proofs}

\subsection*{Details for \Cref{rem:2-to-2}}

Observe first that for $\epsilon<1$, $\laxtwo_{\epsilon,\Lambda}$ is the
one-sided Egli-Milner lifting, given for $S\in\pfun X$, $T\in\pfun Y$,
and $r\subseteq X\times Y$ by $S \mathrel{\laxtwo_{\epsilon,\Lambda}r} T$
iff for all $x\in S$, there exists $y\in T$ such that
$x\mathrel{r} y$, i.e.\ iff $S\subseteq \rev r[T]$. This is seen as
follows: By definition and by two-valuedness of~$\Diamond$,
$S \mathrel{\laxtwo_{\epsilon,\Lambda}r} T$ iff whenever $S\in\Diamond_X(A)$
for $A\in 2^X$, then $T\in\Diamond_Y(r[A])$. To conclude the previous
condition from this definition, just apply the definition to $A=\{x\}$
for a given $x\in S$. For the converse direction, let $x\in S\cap
A$. Then $x\mathrel{r} y$ for some $y\in T$, and then
$y\in T\cap r[A]$, so $T\in\Diamond_Y(r[A])$.

Thus, we have to show that
\begin{equation*}
  \bigwedge\{\epsilon\mid S\subseteq \rev{r_\epsilon}[T]\}=\bigvee_{x\in S}\bigwedge_{y\in T}r(x,y)
\end{equation*}
for $S\in\pfun X$, $T\in\pfun Y$, $r\colon\frel{X}{Y}$. We split this
equality into two inequalities:

`$\ge$': Let $S\subseteq\rev{r_\epsilon}[T]$, $x\in X$; we have to
show that $\epsilon\ge \bigwedge_{y\in T}r(x,y)$. By hypothesis, there
exists $y\in T$ such that $x\mathrel{r_\epsilon}y$,
i.e.~$r(x,y)\le\epsilon$; the claim follows immediately.

`$\le$': Let $\epsilon=\bigvee_{x\in S}\bigwedge_{y\in T}r(x,y)$, and
let $\delta>0$. It suffices to show that
$S\subseteq \rev{r_{\epsilon'}}[T]$ for
$\epsilon'=\epsilon+\delta$. So let $x\in S$. Then
$\bigwedge_{y\in T}r(x,y)\le\epsilon$, so there exists $y\in T$ such
that $r(x,y)\le\epsilon+\delta=\epsilon'$, so
$x\in \rev{r_\epsilon}[T]$ as required.

\subsection*{Details for \Cref{expl:poly-solvable}}

\begin{enumerate}[wide]
\item \emph{L\'evy-Prokhorov behavioural distance}: The flow network
  $\mathcal{N}=\mathcal{N}(\mu,\nu,R)$ has the set
  $\{\bot,\top\}\cup X\cup Y$ of nodes, where~$\bot$ is the
  source,~$\top$ is the sink, and~$X$ and~$Y$ are w.l.o.g.~assumed to
  be disjoint. Edge capacities are $c(\bot,x)=\mu(x)$ for $x\in X$,
  $c(y,\top)=\nu(y)$ for $y\in Y$, $c(x,y)=1$ for $(x,y)\in R$,
  and~$0$ otherwise. Let $(U,V)$ be a minimum cut
  of~$\mathcal{N}$. (As noted in the main body of the paper, the
  maximum flow of~$\mathcal{N}$ can be computed in polynomial time by
  standard algorithms; the left part~$U$ of the minimum cut then
  consists of all nodes reachable from the source via edges with
  positive residual capacity.) It is easy to see
  (cf.~\cite{DesharnaisEA08}) that $(U,V)$ does not cut any edges
  $(x,y)\in R$; in particular,~$x\in U$ and $(x,y)\in R$ implies
  $y\in U$. Now put
  \begin{equation*}
    C=V\cap X\qquad A=X\setminus C\qquad B=U\cap Y.
  \end{equation*}
  By the above, the capacity~$c$ of $(U,V)$ is
  $\mu(C)+\nu(B)=\mu(X)-\mu(A)+\nu(B)$. Now suppose that the maximum flow
  of~$\mathcal{N}$, which by the min-flow-max-cut theorem equals~$c$,
  is less than $\mu(X)-\epsilon$. Then
  $\nu(B)<(\mu(X)-\epsilon)-(\mu(X)-\mu(A))=\mu(A)-\epsilon$. To show that
  $A\times(Y\setminus B)\subseteq S$, let
  $(x,y)\in A\times (Y\setminus B)$. Assume for a contradiction that
  $(x,y)\in R$. Since $x\in A$, we have $x\in U$, hence by the above
  $y\in U$, and thus $y\in B$, contradiction.
\end{enumerate}

\subsection*{Proof of \Cref{lem:sat-epsilon}}

\begin{enumerate}[wide]
\item\label{item:epsilon-up} Trivial.
\item One shows by induction on~$\phi$ that whenever
  $x\models_\delta\phi$, then $y\models_{\delta+\epsilon}\phi$. The
  only slightly non-obvious case in the induction is disjunction
  $\phi_1\lor\phi_2$. By Item~\ref{item:epsilon-up}, the set of
  indices $i\in\{1,2\}$ such that $y\models_{\delta+\epsilon'}\phi_i$
  decreases monotonically as $\epsilon'$ approaches $\epsilon$ from
  above, and hence becomes stationary (and non-empty) from some
  $\epsilon'$ onwards, so we can apply the induction hypothesis for
  some index in the stationary set.
\end{enumerate}

\subsection*{Proof of \Cref{thm:hm-two-valued}}

  `Only if' is by invariance (Lemma~\ref{lem:invariance}); we show
  `if'. To this end, we show that the relation~$R$ defined by
  \begin{equation*}
    x\mathrel{R}y\quad\text{iff}\quad\text{$y$ logically $\epsilon$-simulates $x$}
  \end{equation*}
  is an $\epsilon$-$\Lambda$-simulation. So let $x\mathrel{R}y$. By
  finite branching, there are finite sets $X'\subseteq X$,
  $Y'\subseteq Y$ such that $\xi(x)\in FX'$,
  $\zeta(y)\in FY'$\lsnote{Introduce subset convention}. Let
  $A\in 2^X$ and put $q=\lambda(A)(\xi(x))$; by naturality
  of~$\lambda$, we can assume that $A\subseteq X'$ (in particular,~$A$
  is finite). We have to show that
  $\lambda(R[A])(\zeta(y))\ge q-\epsilon$. Assume the contrary.

  For all $x\in A$, $y'\not\in R[A]$ there exists a formula
  $\phi_{x,y'}$ such that
  \begin{equation*}
    x\models_0 \phi_{x,y'}\quad\text{but}\quad y'\not\models_{\epsilon}\phi_{x,y'}.
  \end{equation*}
  Define
  \begin{equation*}\textstyle
    \phi=\bigvee_{x\in A}\bigwedge_{y'\not\in R[A], y'\in Y'}\phi_{x,y'}.
  \end{equation*}
  Then, by construction, $A\subseteq\Sem{\phi}_0$, so
  $x\models_0\lambda_q\phi$. Since $x\mathrel{R}y$, this implies
  $y\models_\epsilon\lambda_q\phi$, i.e.\
  $\lambda(\Sem{\phi}_\epsilon)(\zeta(y))\ge q-\epsilon$. Since
  $\lambda(\Sem{\phi}_\epsilon)(\zeta(y))=\lambda(\Sem{\phi}_\epsilon\cap
  Y')(\zeta(y))$ by naturality of~$\lambda$, we thus have -- using the
  fact that $\lambda$ is monotone -- that
  $\Sem{\phi}_\epsilon\cap Y'\not\subseteq R[A]$. But then there
  exists $y'$ with $y'\in \Sem{\phi}_\epsilon$, $y'\in Y'$ and
  $y'\not\in R[A]$. Hence $y'\models_\epsilon\phi$, which implies
  $y'\models_\epsilon\phi_{x,y'}$ for some $x\in A$, a contradiction
  since $y'\not\in R[A]$.


\subsection*{Details for~\Cref{rem:sugeno-duals}}

First, we have the following lemma, stating that the non-strict inequality featuring in the definition of $f_\epsilon = \{x \mid f(x) \ge \epsilon \}$ inside the Sugeno modalities can be replaced by a strict one:
\begin{lem}\label{lem:sugeno-strict}
  For every $f\in\V^X$ and $a\in FX$,
  \begin{equation*}
    \gen{\lambda}(f)(a) = \bigvee_{\epsilon\ge 0} \epsilon \wedge \lambda(\{x \mid f(x) > \epsilon \})(a).
  \end{equation*}
\end{lem}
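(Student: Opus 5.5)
We prove the equality by two inequalities. Write $f_{>\epsilon}=\{x\mid f(x)>\epsilon\}$, and let $T$ denote the right-hand side $\bigvee_{\epsilon\ge 0}\epsilon\wedge\lambda(f_{>\epsilon})(a)$. The tool throughout is monotonicity of~$\lambda$ together with the nesting of the level sets: for all $\delta<\epsilon$ we have
\begin{equation*}
  f_{>\epsilon}\subseteq f_\epsilon\subseteq f_{>\delta}.
\end{equation*}

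\textbf{`$\ge$'.} Since $f_{>\epsilon}\subseteq f_\epsilon$, monotonicity of~$\lambda$ gives $\lambda(f_{>\epsilon})(a)\le\lambda(f_\epsilon)(a)$ for every~$\epsilon$. Hence each term $\epsilon\wedge\lambda(f_{>\epsilon})(a)$ is bounded by the corresponding term in~\eqref{eq:sugeno-modality}. Taking suprema yields $T\le\gen{\lambda}(f)(a)$.

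\textbf{`$\le$'.} Fix $\epsilon>0$; the case $\epsilon=0$ contributes~$0$ and can be ignored. For any $\delta<\epsilon$ we have $f_\epsilon\subseteq f_{>\delta}$, so
\begin{equation*}
  \epsilon\wedge\lambda(f_\epsilon)(a)\le \epsilon\wedge\lambda(f_{>\delta})(a)\le (\delta\wedge\lambda(f_{>\delta})(a)) + (\epsilon-\delta)\le T+(\epsilon-\delta).
\end{equation*}
The middle step uses the elementary fact $\epsilon\wedge u\le(\delta\wedge u)+(\epsilon-\delta)$ for $\delta\le\epsilon$. Letting $\delta\uparrow\epsilon$ gives $\epsilon\wedge\lambda(f_\epsilon)(a)\le T$. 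Taking the supremum over~$\epsilon$ then gives $\gen{\lambda}(f)(a)\le T$.

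Alternatively, Lemma~\ref{lem:G-lambda-alt} yields the second inequality even more directly. If $\lambda(f_\epsilon)(a)\ge\epsilon$, then for every $\delta<\epsilon$ we get $\lambda(f_{>\delta})(a)\ge\epsilon>\delta$, so $\delta$ lies below~$T$; it follows that $\epsilon\le T$.

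Neither step presents a real obstacle. The only point needing care is the boundary shift from $\epsilon$ to $\delta<\epsilon$, which is an approximation argument because the supremum need not be attained in general. It is handled by the $\delta\uparrow\epsilon$ limit above, with no finiteness assumptions required.
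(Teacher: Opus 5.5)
Your proof is correct and follows the paper's argument. The `$\ge$' direction is the same monotonicity step. For `$\le$', the paper uses exactly your shift from $\epsilon$ to $\delta<\epsilon$ via $f_\epsilon\subseteq\{x\mid f(x)>\delta\}$: it assumes $\delta<\gen{\lambda}(f)(a)$ and concludes $\delta\le\delta\wedge\lambda(\{x\mid f(x)>\delta\})(a)$. Your alternative via Lemma~\ref{lem:G-lambda-alt} is essentially the paper's reasoning, and your additive-slack version is a harmless repackaging of the same idea.
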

\begin{proof}
  We prove the two inequalities separately.
  For `$\ge$', we use that $\{x \mid f(x) > \epsilon\} \subseteq \{x \mid f(x) \ge \epsilon\}$ and hence this holds by monotonicity of $\lambda$.

  For `$\le$', assume $\delta < \gen{\lambda}(f)(a)$.
  Then there exists $\epsilon\ge 0$ such that $\delta < \epsilon \wedge \lambda(\{x \mid f(x) \ge \epsilon \})(a)$.
  Then $\{x \mid f(x) \ge \epsilon \} \subseteq \{x \mid f(x) > \delta \}$ and hence $\delta < \lambda(\{x \mid f(x) \ge \epsilon \})(a) \le \lambda(\{x \mid f(x) > \delta \})(a)$ by monotonicity of $\lambda$.
  We therefore have
  \[ \delta \le \delta \wedge \lambda(\{x \mid f(x) > \delta \})(a) \le \bigvee_{\epsilon\ge 0} \epsilon \wedge \lambda(\{x \mid f(x) > \epsilon \})(a), \]
  finishing the proof.
\end{proof}

\noindent
Using this lemma, we have, for every $f\in\V^X$ and $a\in FX$:
\begin{align*}
  \gen{\overline{\lambda}}(f)(a)
  &= \bigvee_{\epsilon\ge 0} \epsilon \wedge \overline{\lambda}(\{x \mid f(x) > \epsilon\})(a) \\
  &= \bigvee_{\epsilon\ge 0} \epsilon \wedge (1 - \lambda(\{x \mid f(x) \le \epsilon\})(a)) \\
  &= \bigvee_{\epsilon\ge 0} (1-\epsilon) \wedge (1 - \lambda(\{x \mid f(x) \le 1-\epsilon\})(a)) \\
  &= 1 - \bigwedge_{\epsilon\ge 0} \epsilon \vee \lambda(\{x \mid f(x) \le 1-\epsilon\})(a)) \\
  &= 1 - \bigwedge_{\epsilon\ge 0} \epsilon \vee \lambda(\{x \mid 1-f(x) \ge \epsilon\})(a)) \\
  &= \overline{\gen{\lambda}}(f)(a)
\end{align*}

\subsection*{Proof of~\Cref{propn:sugeno-modalities}}

  \begin{enumerate}[wide]
  \item Immediate.
  \item Expand
    $\gen{\Diamond_a}(f)(\mu) = \bigvee_{\epsilon\ge 0} \epsilon
      \land \mu\{(a,x)\mid f(x)\ge \epsilon\}$.
    Then use that
    $\mu\{(a,x)\mid f(x)\ge \epsilon\} = \mu\big|_{\{a\}\times X}(f_\epsilon)$.
  \item We have to show that:
    \[\bigvee\{\epsilon\mid \bigvee_{x\in f_\epsilon}
      g(x) \ge \epsilon\}
      =  \bigvee_{x\in X} (g(x)\land f(x)) \]
    We prove two inequalities, showing first that the right-hand side
    is an upper bound for all elements in the left-hand supremum, then
    the reverse:

    ``$\le$'': let $\bar{\epsilon}$\bknote{Can possibly be simplified
      if we choose the maximal such $\bar{\epsilon}$,
      cf.~\ref{lem:G-lambda-alt}} such that
    $\bigvee_{x\in f_{\bar{\epsilon}}} g(x) \ge \bar{\epsilon}$. We
    choose some $\epsilon'<\bar{\epsilon}$ and observe that
    \[ \bigvee_{x\in f_{\epsilon'}} g(x) \ge \bigvee_{x\in
        f_{\bar{\epsilon}}} g(x) \ge \bar{\epsilon} > \epsilon' \] The
    first inequality holds since
    $f_{\bar{\epsilon}} \subseteq f_{\epsilon'}$. Hence, there exists
    $x'\in X$ such that $x'\in f_{\epsilon'}$ and
    $g(x') \ge \epsilon'$. In particular
    $g(x')\land f(x') \ge \epsilon'$. This holds for every
    $\epsilon'<\bar{\epsilon}$, which implies that
    \[\bar{\epsilon} \le \bigvee_{x\in X} (g(x)\land f(x)). \]

    ``$\ge$'': let $x'\in X$. We define
    $\bar{\epsilon} = g(x')\land f(x')$, which implies
    $x'\in f_{\bar{\epsilon}}$. Furthermore
    \[ \bigvee_{x\in f_{\bar{\epsilon}}} g(x) \ge g(x') \ge
      \bar{\epsilon}. \] Hence
    $\bar{\epsilon}\in \{\epsilon\mid \bigvee_{x\in f_\epsilon}
    g(x) \ge \epsilon\}$ and thus it is below the supremum.
  \item We have to show that:
    \begin{eqnarray*}
      && \bigvee\{\epsilon\mid \bigvee_{x\in f_\epsilon, (b,x)\in S}
      1-d(a,b) \ge \epsilon\}  \\
      & = & \bigvee_{(b,x)\in S} ((1-d(a,b))\land f(x))
    \end{eqnarray*}
    As in the previous case we prove two inequalities:

    ``$\le$'': let $\bar{\epsilon}$ be such that
    $\bigvee_{x\in f_{\bar{\epsilon}}, (b,x)\in S} 1-d(a,b) \ge
    \bar{\epsilon}$. We choose some $\epsilon'<\bar{\epsilon}$ and
    observe that
    \[ \bigvee\limits_{\substack{x\in f_{\epsilon'} \\ (b,x)\in S}}
      1-d(a,b) \ge \bigvee\limits_{\substack{x\in f_{\bar{\epsilon}}\\
          (b,x)\in S}} 1-d(a,b) \ge \bar{\epsilon} > \epsilon' \] The
    first inequality holds since
    $f_{\bar{\epsilon}} \subseteq f_{\epsilon'}$. Hence, there exists
    $(b',x')\in S$ such that $x'\in f_{\epsilon'}$ and
    $1-d(a,b') \ge \epsilon'$. In particular
    $(1-d(a,b'))\land f(x') \ge \epsilon'$. This holds for every
    $\epsilon'<\bar{\epsilon}$, which implies that
    \[\bar{\epsilon} \le \bigvee_{(b,x)\in S} ((1-d(a,b))\land f(x)). \]

    ``$\ge$'': let $(b',x')\in S$. We define
    $\bar{\epsilon} = (1-d(a,b'))\land f(x')$ and observe that this implies
    $1-d(a,b') \ge \bar{\epsilon}$ and $f(x') \ge \bar{\epsilon}$, hence
    $x'\in f_{\bar{\epsilon}}$. Furthermore
    \[ \bigvee_{x\in f_\epsilon (b,x)\in S} 1-d(a,b) \ge 1-d(a,b') \ge
      \bar{\epsilon}. \] Hence
    $\bar{\epsilon}\in \{\epsilon\mid \bigvee_{x\in f_\epsilon, (b,x)\in S}
    1-d(a,b) \ge \epsilon\}$ and thus it is below the supremum.


  \item We have
    \begin{align*}
      \gen{\Diamond}(f)(V)
      &= \bigvee_{\epsilon\ge 0} \epsilon\wedge\bigvee_{\mu\in V} \mu(f_\epsilon) \\
      &= \bigvee_{\mu\in V} \bigvee_{\epsilon\ge 0}
      \epsilon\wedge\mu(f_\epsilon) \\
      &= \bigvee_{\mu\in V} \gen{P}(f)(\mu). \qedhere
    \end{align*}
  \end{enumerate}

\subsection*{Proof of~\Cref{thm:expr-sugeno}}

As indicated in the main text, we prove only the harder inequality
$\logdist_{\Gen{\Lambda}}\ge \disteps{\Lambda}$.
  To this end,
  we show that the relation~$R$ defined by
  \begin{equation*}
    c\mathrel{R}d\quad\text{iff}\quad \logdist_{\Gen{\Lambda}}(c,d)\le\epsilon
  \end{equation*}
  is an $\epsilon$-$\Lambda$-simulation. So let $c\mathrel{R}d$. By
  finite branching, there are finite sets $X'\subseteq X$,
  $Y'\subseteq Y$ such that $\xi(c)\in FX'$,
  $\zeta(d)\in FY'$\lsnote{Introduce subset convention}. Let
  $A\in 2^X$ and put $q=\lambda(A)(\xi(c))$; by naturality
  of~$\lambda$, we can assume that $A\subseteq X'$ (in particular,~$A$
  is finite). We have to show that
  $\lambda(R[A])(\zeta(d))\ge q-\epsilon$. Assume the contrary. Then
  for all elements~$B$ of the finite set
  \begin{equation*}
    \FA:=\{B\subseteq Y'\mid \lambda(B)(\zeta(d))\ge q-\epsilon\},
  \end{equation*}
  we have $B\not\subseteq R[A] $ by monotonicity of~$\lambda$, so we
  can fix $y_B\in B\setminus R[A]$.  This means that for all $x\in A$,
  $x\not\mathrel{R}y_B$; we thus have a formula $\phi_{x,B}$ such that
  \begin{equation*}
    \Sem{\phi_{x,B}}(y_B)<\Sem{\phi_{x,B}}(x)-\epsilon.
  \end{equation*}
  Thanks to constant shifts\bknote{Shifts with reals or rationals?},
  we can assume that $\Sem{\phi_{x,B}}(x)=q$ for all $x\in A$,
  $B\in\FA$. Put
  \begin{equation*}\textstyle
    \phi=\bigvee_{x\in A}\bigwedge_{B\in\FA}\phi_{x,B}.
  \end{equation*}
  Then by construction, $A\subseteq\Sem{\phi}_q$, so
  $\lambda(\Sem{\phi}_q)(\xi(c))\ge \lambda(A)(\xi(c))= q$; by
  \Cref{lem:G-lambda-alt}, we thus obtain
  $\Sem{\gen{\lambda}\phi}(c)=\gen{\lambda}\Sem{\phi}(\xi(c))\ge
  q$. Since $c\mathrel{R}d$, this implies
  $\gen{\lambda}_{Y}(\Sem{\phi})(\zeta(d))=\Sem{\gen{\lambda}\phi}(d)\ge
  q-\epsilon$. By naturality,
  $\gen{\lambda}_{Y}(\Sem{\phi})(\zeta(d))=\gen{\lambda}_{Y'}(\Sem{\phi}|_{Y'})(\zeta(d))$\lsnote{Maybe
    introduce restriction notation}; by \Cref{lem:G-lambda-max} and
  finiteness of~$Y'$, it follows that
  $\lambda_{Y'}(\Sem{\phi}_{q-\epsilon}\cap Y')(\zeta(d))\ge
  q-\epsilon$. We thus have $B:=\Sem{\phi}_{q-\epsilon}\cap
  Y'\in\FA$. Thus,
  $\Sem{\phi_{x,B}}(y_B)<\Sem{\phi_{x,B}}(x)-\epsilon=q-\epsilon$ for
  all $x\in A$, and hence $\Sem{\phi}(y_B)<q-\epsilon$. On the other
  hand, $y_B\in B\subseteq \Sem{\phi}_{q-\epsilon}$, contradiction. \qed

\subsection*{Proof of~\Cref{thm:extract-quant}}

  The complexity estimates are immediate. We prove the correctness
  claim by induction on the stage~$i$, i.e.~we show that~$\phi$ as
  per~\eqref{eq:distinguishing-phi} is an $\epsilon$-distinguishing
  formula for $(x_0,y_0)$. By the normation step, we have
  $\phi(x)\ge\bigwedge_{y\in Y\setminus B}\phi_{xy}\ge q$ for all
  $x\in A$. That is, $A\subseteq\Sem{\phi}_q$, and therefore
  $\Sem{\gen{\lambda}\phi}(x_0)\ge\lambda(\Sem{\phi}_q)(\xi(x))\wedge
  q\ge\lambda(A)(\xi(x))\wedge q=q$. On the other hand, we claim
  that $\Sem{\gen{\lambda}\phi}(y_0)< q-\epsilon$. By
  \cref{lem:G-lambda-max}, it suffices to show that
  $\lambda(\Sem{\phi})_{q-\epsilon}(\zeta(y_0))<q-\epsilon$. Since
  $(A,B)$ is a legal move for Spoiler at $(x_0,y_0)$, we have
  $\lambda(B)(\zeta(y_0))<q-\epsilon$, so we are done once we show
  that $\Sem{\phi}_{q-\epsilon}\subseteq B$. We show the
  contraposition: Let $y\in Y\setminus B$; we show that
  $\Sem{\phi}(y)<q-\epsilon$. So let $x\in A$; we have to show that
  $\Sem{\bigwedge_{y'\in Y\setminus}\phi_{xy'}}(y)<q-\epsilon$, which
  follows from $\Sem{\phi_{xy}}(y)<q-\epsilon$. \qed

\subsection*{Proof of~\Cref{thm:lp-kantorovich-coalg}}

  Let $r\colon\frel{X}{Y}$ be a $\V$-valued relation, and let
  $a\in FX,b\in FY$. We split the claimed equality
  $\laxv_\Lambda r(a,b)=\kantv_{\Gen{\Lambda}} r(a,b)$ into two inequalities:

  $\laxv_\Lambda r(a,b)\ge \kantv_{\Gen{\Lambda}} r(a,b)$: Let $(f,g)$ be $r$-preserved,
  and let $\lambda\in\Lambda$. We have to show that
  \begin{equation*}
    \gen{\lambda}_X(f)(a)\ominus\gen{\lambda}_Y(g)(b)\le \laxv_\Lambda r(a,b).
  \end{equation*}
  So let $a\mathrel{L_\delta r_\delta}b$; by the definition of
  $\laxv_\Lambda$, we have to show that
  $\gen{\lambda}_X(f)(a)\ominus\gen{\lambda}_Y(g)(b)\le\delta$,
  equivalently that
  \begin{equation*}
    \gen{\lambda}_Y(g)(b)\ge\gen{\lambda}_X(f)(a)-\delta,
  \end{equation*}
  or, after expanding definitions,
  \begin{eqnarray*}\textstyle
    && \bigvee_{\epsilon'\ge 0}\epsilon'\wedge\lambda(g_{\epsilon'})(b)\ge
    \big(\bigvee_{\epsilon\ge
      0}\epsilon\wedge\lambda(f_{\epsilon})(a)\big)-\delta \\
    &=& \bigvee_{\epsilon\ge 0}(\epsilon-\delta)\wedge(\lambda(f_{\epsilon})(a)-\delta).
  \end{eqnarray*}
  We prove this inequality among suprema by showing that every element
  in the set whose supremum is taken on the right on the right is
  majorized by an element on the left. Indeed, let $\epsilon\ge 0$; we
  claim that
  \begin{equation*}
    \epsilon'\wedge \lambda(g_{\epsilon'})(b)\ge(\epsilon-\delta)\wedge(\lambda(f_\epsilon)(a)-\delta)
  \end{equation*}
  for $\epsilon'=\epsilon\ominus\delta$. To prove this, it suffices to
  show that
  \begin{equation*}
    \lambda(g_{\epsilon'})(b)\ge \lambda(f_\epsilon)(a)-\delta.
  \end{equation*}
  Since $a\mathrel{L_\delta r_\delta}b$, we have
  $\lambda(r_\delta[f_\epsilon])(b)\ge\lambda(f_\epsilon)(a)-\delta$, so we
  are done by monotonicity of~$\lambda$ once we show that
  \begin{equation*}
    r_\delta[f_\epsilon]\subseteq g_{\epsilon'}.
  \end{equation*}
  So let $f(x)\ge\epsilon$ and $r(x,y)\le\delta$. We have to show that
  $g(y)\ge\epsilon'$. Indeed, since $(f,g)$ is
  $r$-preserved, we have
  \begin{equation*}
    g(y)\ge f(x)-r(x,y)\ge \epsilon-\delta.
  \end{equation*}
  As $g(y)\ge 0$ holds trivially, we have $g(y) \ge \epsilon\ominus\delta = \epsilon'$.

  $\laxv_\Lambda r(a,b)\le \kantv_{\Gen{\Lambda}} r(a,b)$: Suppose that
  $\kantv_\Lambda r(a,b)\le\delta$. It suffices to show that
  $\laxv_\Lambda r(a,b)\le \delta+u$ for all (sufficiently small) $u>0$,
  i.e.\ that $a\mathrel{L_{\delta+u}r_{\delta+u}}b$. So let
  $A\subseteq X$ and $\lambda\in\Lambda$; we have to show that
  \begin{equation}\label{eq:r-delta-sim}
    \lambda(r_{\delta+u}[A])(b)\ge\lambda(A)(a)-\delta-u.
  \end{equation}
  assuming w.l.o.g.~that $\delta+u\le\lambda(A)(a)$. Put
  $\epsilon_A=\lambda(A)(a)$, and define predicates $f\colon X\to\V$,
  $g\colon Y\to\V$ by
  \begin{align*}\textstyle
    f(x) & =
           \begin{cases}
             \epsilon_A & x\in A\\
             0 & \text{otherwise}
           \end{cases}
           & g(y) & = \bigvee_{x\in X}f(x)\ominus r(x,y).
  \end{align*}
  Then $(f,g)$ is $r$-preserved by construction (indeed,
  $f(x)-g(y)\le f(x)-(f(x)\ominus r(x,y))\le r(x,y)$), and
  $f_{\epsilon_A}=A$. Since $\kantv_\Lambda r(a,b)\le\delta$, we obtain
  \begin{eqnarray*}\textstyle
    && \bigvee_{\epsilon'\ge 0}\epsilon'\wedge\lambda(g_{\epsilon'})(b)\ge
    \big(\bigvee_{\epsilon\ge
      0}\epsilon\wedge\lambda(f_\epsilon)(a)\big)-\delta \\
    &=& \bigvee_{\epsilon\ge 0}(\epsilon-\delta)\wedge(\lambda(A)(a)-\delta).
  \end{eqnarray*}
  In particular, this implies that there exists
  $\epsilon'\ge 0$ such that
  \begin{equation*}
    \epsilon'\wedge\lambda(g_{\epsilon'})(b)\ge
    ((\epsilon_A-\delta)\wedge(\lambda(A)(a)-\delta))-u=\epsilon_A-\delta-u,
  \end{equation*}
  i.e.~$\epsilon'\ge\epsilon_A-\delta-u$ and
  $\lambda(g_{\epsilon'})(b)\ge\epsilon_A-\delta-u$. We can assume that
  $\epsilon'>0$ (this is automatic if $\epsilon_A>\delta+u$; otherwise,
  $\epsilon'$ can be chosen arbitrarily). From
  $\lambda(g_{\epsilon'})(b)\ge\epsilon_A-\delta-u$, our
  goal~\eqref{eq:r-delta-sim} follows by monotonicity of~$\lambda$
  once we show that
  \begin{equation*}
    g_{\epsilon'}\subseteq r_{\delta+u}[A].
  \end{equation*}
  So let~$y\in g_{\epsilon'}$. There exists~$x$ such that
  $f(x)\ominus r(x,y)\ge g(y)-u\ge\epsilon'-u$. Since $\epsilon'>0$,
  we can assume that $\epsilon'-u>0$; then the previous inequality
  means that $f(x)-r(x,y)\ge\epsilon'-u\ge\epsilon_A-\delta-u$. This
  implies $f(x)=\epsilon_A$, so $x\in f_{\epsilon_A}=A$, and moreover
  $r(x,y)\le f(x)-\epsilon_A+\delta+u=\delta+u$, so
  $x\mathrel{r_{\delta+u}}y$, showing that $y\in r_{\delta+u}[A]$ as required. \qed

\end{document}